\setlist{noitemsep,topsep=0pt,parsep=0pt,partopsep=0pt,listparindent=\parindent}
\tikzset{every fit/.append style=text badly centered}
\newcommand{\Holant}{\operatorname{Holant}}
\newcommand{\PlHolant}{\operatorname{Pl-Holant}}
\newcommand{\holant}[2]{\Holant(#1\mid #2)}
\newcommand{\plholant}[2]{\PlHolant(#1\mid #2)}
\newcommand{\CSP}{\operatorname{\#CSP}}
\newcommand{\PlCSP}{\operatorname{Pl-\#CSP}}
\newcommand{\arity}{\operatorname{arity}}
\newcommand{\EQ}{\mathcal{EQ}}
\def\borderColor{blue!60}
\def\scale{0.6}
\def\nodeDist{1.4cm}
\tikzstyle{internal} = [draw, fill, shape=circle]
\tikzstyle{external} = [shape=circle]
\tikzstyle{square}   = [draw, fill, rectangle]
\tikzstyle{triangle} = [draw, fill, regular polygon, regular polygon sides=3, inner sep=2.5pt] 
\title{The Complexity of Planar Boolean \#CSP with Complex Weights}
\author{
 Heng Guo\\
 \scriptsize University of Wisconsin-Madison\\
 \scriptsize \texttt{hguo@cs.wisc.edu}
 \and
 Tyson Williams\\
 \scriptsize University of Wisconsin-Madison\\
 \scriptsize \texttt{tdw@cs.wisc.edu}
}
\date{} 
\begin{document}
\maketitle

\begin{abstract}
 We prove a complexity dichotomy theorem for symmetric complex-weighted Boolean \#CSP when the constraint graph of the input must be planar.
 The problems that are \#P-hard over general graphs but tractable over planar graphs are precisely those with a holographic reduction to matchgates.
 This generalizes a theorem of Cai, Lu, and Xia for the case of real weights.
 We also obtain a dichotomy theorem for a symmetric arity~4 signature with complex weights in the planar Holant framework,
 which we use in the proof of our \#CSP dichotomy.
 In particular, we reduce the problem of evaluating the Tutte polynomial of a planar graph at the point $(3,3)$
 to counting the number of Eulerian orientations over planar 4-regular graphs to show the latter is \#P-hard.
 This strengthens a theorem by Huang and Lu to the planar setting.
 Our proof techniques combine new ideas with refinements and extensions of existing techniques.
 These include planar pairings, the recursive unary construction, the anti-gadget technique, and pinning in the Hadamard basis.
\end{abstract}

\section{Introduction}

In 1979, Valiant~\cite{Val79b} defined the class $\SHARPP$ to explain the apparent intractability of counting the number of perfect matchings in a graph.
Yet over a decade earlier, Kasteleyn~\cite{Kas67} gave a polynomial-time algorithm to compute this quantity for planar graphs.
This was an important milestone in a decades-long research program by physicists in statistical mechanics
to determine what problems the restriction to the planar setting renders tractable~\cite{Isi25, Ons44, Yan52, YL52, LY52, TF61, Kas61, Kas67, Bax82, LS81, Wel93}.
More recently, Valiant introduced matchgates~\cite{Val02a, Val02b} and \emph{holographic} algorithms~\cite{Val08, Val06}
that rely on Kasteleyn's algorithm to solve certain counting problems over planar graphs.
In a series of papers~\cite{CC07b, CCL09, CL10a, CL11a},
Cai et al.~characterized the local constraint functions (which define counting problems) that are representable by matchgates in a holographic algorithm.

From the viewpoint of computational complexity, we seek to understand exactly which intractable problems the planarity restriction enable us to efficiently compute.
Partial answers to this question have been given in the context of various counting frameworks~\cite{Ver05, CLX10, CK12, CKW12}.
In every case, the problems that are $\SHARPP$-hard over general graphs but tractable over planar graphs are essentially those characterized by Cai et al.
In this paper, we give more evidence for this phenomenon by extending the results of \cite{CLX10} to the setting of complex-valued constraint functions.
This provides the most natural setting to express holographic algorithms and transformations.

Our main result is a dichotomy theorem for the framework of counting Constraint Satisfaction Problems (\#CSP),
but our proof is in a generalized framework called Holant problems~\cite{CLX11c, CLX12, CLX09a, CLX11d}.
We briefly introduce the Holant framework and then explain its main advantages.
A set of functions $\mathcal{F}$ defines the problem $\Holant(\mathcal{F})$.
An instance of this problem is a tuple $\Omega = (G, \mathcal{F}, \pi)$ called a \emph{signature grid},
where $G = (V,E)$ is a graph,
$\pi$ labels each $v \in V$ with a function $f_v \in \mathcal{F}$,
and $f_v$ maps $\{0,1\}^{\deg(v)}$ to $\mathbb{C}$.
We also call the functions in $\mathcal{F}$ \emph{signatures}.
An assignment $\sigma$ for every $e \in E$ gives an evaluation $\prod_{v \in V} f_v(\sigma \mid_{E(v)})$,
where $E(v)$ denotes the incident edges of $v$ and $\sigma \mid_{E(v)}$ denotes the restriction of $\sigma$ to $E(v)$.
The counting problem on the instance $\Omega$ is to compute
\[\Holant_\Omega = \sum_{\sigma : E \to \{0,1\}} \prod_{v \in V} f_v\left(\sigma \mid_{E(v)}\right).\]
Counting the number of perfect matchings in $G$ corresponds to attaching the \textsc{Exact-One} signature at every vertex of $G$.
A function or signature is called \emph{symmetric} if its output depends only on the Hamming weight of the input.
We often denote a symmetric signature by the list of its outputs sorted by input Hamming weight in ascending order.
For example, $[0,1,0,0]$ is the \textsc{Exact-One} function on three bits.
The output is~$1$ if and only if the input is $001$, $010$, or $100$, and~$0$ otherwise.

We consider \#CSP, which are also parametrized by a set of functions $\mathcal{F}$.
The problem $\CSP(\mathcal{F})$ is equivalent to $\Holant(\mathcal{F} \union \EQ)$,
where $\EQ = \{=_1, =_2, =_3, \dotsc\}$ and $(=_k) = [1, 0, \dotsc, 0, 1]$ is the equality signature of arity $k$.
This explicit role of equality signatures permits a finer classification of problems.
For a direct definition of \#CSP, see~\cite{DGJ09}.

We often consider a Holant problem over bipartite graphs, which is denoted by $\holant{\mathcal{F}}{\mathcal{G}}$,
where the sets $\mathcal{F}$ and $\mathcal{G}$ contain the signatures available for assignment to the vertices in each partition.
Considering the edge-vertex incidence graph, one can see that $\Holant(\mathcal{F})$ is equivalent to $\holant{=_2}{\mathcal{F}}$.
One powerful tool in this setting is the holographic transformation.
Let $T$ be a nonsingular $2$-by-$2$ matrix and define $T \mathcal{F} = \{T^{\otimes \arity(f)} f \st f \in \mathcal{F}\}$,
where $T^{\otimes k}$ is the tensor product of $k$ factors of $T$.
Here we view $f$ as a column vector by listing its values in lexicographical order as in a truth table.
Similarly $\mathcal{F} T$ is defined by viewing $f \in \mathcal{F}$ as a row vector.
Valiant's Holant theorem~\cite{Val08} states that $\holant{\mathcal{F}}{\mathcal{G}}$ is equivalent to $\holant{\mathcal{F} T^{-1}}{T \mathcal{G}}$.

Cai, Lu, and Xia gave a dichotomy for complex-weighted Boolean $\CSP(\mathcal{F})$ in~\cite{CLX09a}.
Let $\PlCSP(\mathcal{F})$ (resp.~$\PlHolant(\mathcal{F})$) denote the \#CSP (resp.~Holant problem)
defined by $\mathcal{F}$ when the inputs are restricted to planar graphs.
In this paper, we investigate the complexity of $\PlCSP(\mathcal{F})$ for a set $\mathcal{F}$ of symmetric complex-weighted functions.
In particular, we would like to determine which sets become tractable under this planarity restriction.
Holographic algorithms with matchgates provide planar tractable problems for sets that are matchgate realizable after a holographic transformation.
From the Holant perspective, the signatures in $\EQ$ are always available in $\CSP(\mathcal{F})$.
By the signature theory of Cai and Lu~\cite{CL11a},
the Hadamard matrix $H = \left[\begin{smallmatrix} 1 & 1 \\ 1 & -1 \end{smallmatrix}\right]$ essentially defines the
only\footnote{Up to transformations under which matchgates are closed.} holographic transformation under which $\EQ$ becomes matchgate realizable.
Let $\widehat{\mathcal{F}}$ denote $H \mathcal{F}$ for any set of signatures $\mathcal{F}$.
Then $\widehat{\EQ}$ is $\{[1,0], [1,0,1], [1,0,1,0], \dotsc\}$ while $(=_2) (H^{-1})^{\otimes 2}$ is still $=_2$.
Therefore $\CSP(\mathcal{F})$ and $\Holant(\mathcal{F} \union \EQ)$ are equivalent to $\Holant(\widehat{\mathcal{F}} \union \widehat{\EQ})$ by Valiant's Holant theorem.

Our main dichotomy theorem is stated as follows.
\begin{theorem} \label{thm:PlCSP:words}
 Let $\mathcal{F}$ be any set of symmetric, complex-valued signatures in Boolean variables.
 Then $\PlCSP(\mathcal{F})$ is $\SHARPP$-hard unless $\mathcal{F}$ satisfies one of the following conditions,
 in which case it is tractable:
 \begin{enumerate}
  \item $\CSP(\mathcal{F})$ is tractable (cf.~\cite{CLX09a}); or
  \item $\widehat{\mathcal{F}}$ is realizable by matchgates (cf.~\cite{CL11a}).
 \end{enumerate}
\end{theorem}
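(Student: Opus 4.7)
The plan is to pass to the Hadamard basis. By Valiant's Holant theorem, $\PlCSP(\mathcal{F})$ is polynomial-time equivalent to $\PlHolant(\widehat{\mathcal{F}} \union \widehat{\EQ})$, and this equivalence preserves planarity, so it suffices to analyze this bipartite Holant problem under the planarity restriction.

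The tractable side is essentially immediate. In case (1), any algorithm for general $\CSP(\mathcal{F})$ applies to the planar subclass. In case (2), if $\widehat{\mathcal{F}}$ is matchgate-realizable then so is $\widehat{\mathcal{F}} \union \widehat{\EQ}$, since $\widehat{\EQ}$ consists of the even-parity signatures $[1,0], [1,0,1], [1,0,1,0], \dotsc$, which are easily seen to be matchgate-realizable. The resulting planar Holant problem therefore reduces to a weighted planar perfect matching count, computable in polynomial time by the Kasteleyn/FKT algorithm.

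For hardness, assume $\CSP(\mathcal{F})$ is $\SHARPP$-hard and $\widehat{\mathcal{F}}$ is not matchgate-realizable. The strategy has three stages. First, I would establish that counting Eulerian orientations on planar $4$-regular graphs is $\SHARPP$-hard by reducing from the evaluation of the Tutte polynomial at $(3,3)$ on planar graphs (known $\SHARPP$-hard by Vertigan); this yields a canonical hard problem expressed as a planar Holant with a fixed symmetric arity-$4$ signature. Second, using this seed, I would prove a planar Holant dichotomy for a \emph{single} symmetric arity-$4$ signature (the intermediate result advertised in the abstract) by means of planarity-preserving gadgets and interpolation. Third, I would reduce the general instance $\PlHolant(\widehat{\mathcal{F}} \union \widehat{\EQ})$ to the arity-$4$ setting: the Cai-Lu-Xia classification of \#P-hard $\CSP(\mathcal{F})$ together with the assumed failure of matchgate-realizability pinpoints a specific "bad" signature in $\widehat{\mathcal{F}}$, which must then be reshaped by planar gadgets into the canonical arity-$4$ form so that the dichotomy of stage two applies.

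The main obstacle lies in stage three. In the non-planar setting one freely combines gadgets and tolerates edge crossings, but here every construction must admit a planar realization. Standard interpolation and pinning techniques degrade substantially or fail outright, and one must work in the Hadamard basis where the familiar equality signatures have been replaced by even-parity signatures. Overcoming this is exactly what the new toolkit announced in the abstract addresses: \emph{planar pairings} to eliminate crossings, the \emph{recursive unary construction} to simulate pinning without equalities, the \emph{anti-gadget technique} to handle signatures where direct interpolation stalls, and \emph{pinning in the Hadamard basis} to recover a workable substitute for the usual constant-setting reductions. A secondary difficulty is boundary management: the reductions must be engineered so as not to land on signatures that happen to be tractable only through matchgates, since otherwise we would be reducing from an easy problem.
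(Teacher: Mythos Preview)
Your high-level plan is broadly aligned with the paper through stage two: passing to the Hadamard basis, establishing tractability, proving the planar \#EO hardness from Tutte at $(3,3)$, and deriving an arity-$4$ dichotomy are all steps the paper takes. The gap is in stage three.

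You assert that, under the two failure hypotheses, ``the Cai--Lu--Xia classification \dots\ pinpoints a specific `bad' signature in $\widehat{\mathcal{F}}$'' which is then reshaped into an arity-$4$ form. This is false in general. It can happen that \emph{every} individual signature in $\widehat{\mathcal{F}}$ lies in $\mathscr{A} \cup \widehat{\mathscr{P}} \cup \mathscr{M}$, so that $\PlHolant(\{f\} \cup \widehat{\EQ})$ is tractable for each $f$ separately, and yet $\widehat{\mathcal{F}}$ is not contained in any one of $\mathscr{A}$, $\widehat{\mathscr{P}}$, $\mathscr{M}$. In that situation there is no single bad signature to reshape; hardness arises only from the \emph{interaction} of two signatures drawn from different tractable classes. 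The paper devotes an entire section to this ``mixing'' analysis (Theorem~\ref{thm:mixing} and the lemmas preceding it), and your outline omits it entirely. Without it, the argument is incomplete in a way that no amount of gadgetry on a single signature will repair.

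A second, smaller point: even for the single-signature step, the paper does not funnel everything through the arity-$4$ dichotomy. That dichotomy (via Corollary~\ref{cor:arity4:nonsingular_compressed_hard:symmetric}) is one tool among several; domain pairing (Section~\ref{sec:pairing}), the arity-$5$ hardness lemma (Lemma~\ref{lem:arity5:hard_sig}), and the binary dichotomy (Theorem~\ref{thm:Pl-Graph_Homomorphism}) are all invoked directly in the single-signature proof (Theorem~\ref{thm:dichotomy:single_sig}). Your plan to ``reshape into the canonical arity-$4$ form'' understates the case analysis required.
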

\noindent
A more explicit description of the tractable cases can be found in Theorem~\ref{thm:PlCSP}.

In many previous dichotomy theorems for Boolean $\CSP(\mathcal{F})$,
the proof of hardness began by pinning.
The goal of this technique is to realize the constant functions $[1,0]$ and $[0,1]$ and was always achieved by a \emph{nonplanar} reduction.
In the nonplanar setting, $[1,0]$ and $[0,1]$ are contained in each of the maximal tractable sets.
Therefore, pinning in this setting does not imply the collapse of any complexity classes.
However, $\EQ$ with $\{[1,0], [0,1]\}$ are not simultaneously realizable as matchgates.
If we are to prove our main theorem, one should not expect to be able to pin for $\PlCSP(\mathcal{F})$,
since otherwise $\SHARPP$ collapses to $\P$!
Instead, apply the Hadamard transformation and consider $\PlHolant(\widehat{\mathcal{F}} \union \widehat{\EQ})$.
In this Hadamard basis, pinning becomes possible again since $[1,0]$ and $[0,1]$ are included in every maximal tractable set.
Indeed, we prove our pinning result in this Hadamard basis in Section~\ref{sec:pinning}.

For Holant problems, it is often important to understand the complexity of the small arity cases first~\cite{CLX10, HL12, CGW12}.
In~\cite{CLX10}, Cai, Lu, and Xia gave a dichotomy for $\PlHolant(f)$ when $f$ is a symmetric arity~3 signature
while a dichotomy for $\Holant(f)$ when $f$ is a symmetric arity~4 signature was shown in~\cite{CGW12}.
In the proof of the latter result, most of the reductions were planar.
However, the crucial starting point for hardness,
namely counting Eulerian orientations (\#\textsc{EO}) over 4-regular graphs,
was not known to be $\SHARPP$-hard under the planarity restriction.
Huang and Lu~\cite{HL12} had recently proved that \#\textsc{EO} is \SHARPP-hard over 4-regular graphs but left open its complexity when the input is also planar.
We show that \#\textsc{EO} remains \SHARPP-hard over planar 4-regular graphs.
The problem we reduce from is the evaluation of the Tutte polynomial of a planar graph at the point (3,3),
which has a natural expression in the Holant framework.
In addition, we determine the complexity of counting complex-weighted matchings over planar 4-regular graphs.
The problem is $\SHARPP$-hard except for the tractable case of counting perfect matchings.
With these two ingredients, we obtain a dichotomy for $\PlHolant(f)$ when $f$ is a symmetric arity~4 signature.

Our main result is a generalization of the dichotomy by Cai, Lu, and Xia~\cite{CLX10} for $\PlCSP(\mathcal{F})$
when $\mathcal{F}$ contains symmetric real-weighted Boolean functions.
It is natural to consider complex weights in the Holant framework
because surprising equivalences between problems are often discovered via complex holographic transformations,
sometimes even between problems using only rational weights.
Our proof of hardness for \#EO over planar 4-regular graphs in Section~\ref{sec:EO} is a prime example of this.
Extending the range from $\mathbb{R}$ to $\mathbb{C}$ also enlarges the set of problems that can be transformed into the framework.

However, a dichotomy for complex weights is more technically challenging.
The proof technique of polynomial interpolation often has infinitely many failure cases in $\mathbb{C}$ corresponding to the infinitely many roots of unity,
which prevents a brute force analysis of failure cases as was done in~\cite{CLX10}.
This increased difficulty requires us to develop new ideas to bypass previous interpolation proofs.
In particular,
we perform a planar interpolation with a rotationally invariant signature to prove the $\SHARPP$-hardness of \#\textsc{EO} over planar 4-regular graphs.
For the complexity of counting complex-weighted matchings over planar 4-regular graphs,
we introduce the notion of planar pairings to build reductions.
We show that every planar 3-regular graph has a planar pairing and that it can be efficiently computed.
We also refine and extend existing techniques for application in the new setting,
including the recursive unary construction, the anti-gadget technique, compressed matrix criteria, and domain pairing.

This paper is organized as follows.
In Section~\ref{sec:preliminaries}, we give a review of terminology and previous dichotomy theorems.
In Section~\ref{sec:EO}, we prove that counting Eulerian orientations is $\SHARPP$-hard for planar 4-regular graphs.
In Section~\ref{sec:interpolation}, we strengthen a popular interpolation technique that uses recursive constructions, which leads to simpler proofs.
In Section~\ref{sec:arity-4}, we obtain our dichotomy theorem for $\PlHolant(f)$ when $f$ is a symmetric arity~4 signature with complex weights.
In Section~\ref{sec:pairing},
we prove several useful lemmas about a technique we call \emph{domain pairing} that essentially realizes an odd arity signature using only signatures of even arity.
In Section~\ref{sec:mixing}, we show that the three known sets of tractable signatures become $\SHARPP$-hard when mixed.
In Section~\ref{sec:pinning}, we use the pinning technique in a new planar proof to realize the constant functions $[1,0]$ and $[0,1]$.
In Section~\ref{sec:dichotomy}, we obtain our dichotomy theorem for $\PlCSP(\mathcal{F})$.

\section{Preliminaries} \label{sec:preliminaries}

\subsection{Problems and Definitions}

The framework of Holant problems is defined for functions mapping any $[q]^k \to \mathbb{F}$ for a finite $q$ and some field $\mathbb{F}$.
In this paper, we investigate complex-weighted Boolean $\Holant$ problems, that is, all functions are $[2]^k \to \mathbb{C}$.
Strictly speaking, for consideration of computational models, functions take complex algebraic numbers.

A \emph{signature grid} $\Omega = (G, \mathcal{F}, \pi)$ consists of a graph $G = (V,E)$,
where each vertex is labeled by a function $f_v \in \mathcal{F}$, and $\pi : V \to \mathcal{F}$ is the labeling.
If the graph $G$ is planar, then we call $\Omega$ a \emph{planar signature grid}.
The Holant problem on instance $\Omega$ is to evaluate $\Holant_\Omega = \sum_{\sigma} \prod_{v \in V} f_v(\sigma \mid_{E(v)})$,
a sum over all edge assignments $\sigma: E \to \{0,1\}$.

A function $f_v$ can be represented by listing its values in lexicographical order as in a truth table,
which is a vector in $\mathbb{C}^{2^{\deg(v)}}$,
or as a tensor in $(\mathbb{C}^{2})^{\otimes \deg(v)}$.
We also use $f^\alpha$ to denote the value $f(\alpha)$, where $\alpha$ is a binary string.
A function $f \in \mathcal{F}$ is also called a \emph{signature}.
A symmetric signature $f$ on $k$ Boolean variables can be expressed as $[f_0,f_1,\dotsc,f_k]$,
where $f_w$ is the value of $f$ on inputs of Hamming weight $w$.
In this paper, we consider symmetric signatures.
Sometimes we represent a signature of arity $k$ by a labeled vertex with $k$ ordered dangling edges corresponding to its input.

A Holant problem is parametrized by a set of signatures.

\begin{definition}
 Given a set of signatures $\mathcal{F}$,
 we define the counting problem $\Holant(\mathcal{F})$ as:

 Input: A \emph{signature grid} $\Omega = (G, \mathcal{F}, \pi)$;

 Output: $\Holant_\Omega$.
\end{definition}

The problem $\PlHolant(\mathcal{F})$ is defined similarly using a planar signature grid.
The $\Holant^c$ framework is the special case of Holant problems when the constant signatures of the domain are freely available.
In the Boolean domain, the constant signatures are $[1,0]$ and $[0,1]$.

\begin{definition}
 Given a set of signatures $\mathcal{F}$,
 $\Holant^c(\mathcal{F})$ denotes $\Holant(\mathcal{F} \cup \{[0,1],[1,0]\})$.
\end{definition}

The problem $\PlHolant^c(\mathcal{F})$ is defined similarly.
A signature $f$ of arity $n$ is \emph{degenerate} if there exist unary signatures $u_j \in \mathbb{C}^2$ ($1 \le j \le n$)
such that $f = u_1 \otimes \cdots \otimes u_n$.
A symmetric degenerate signature has the from $u^{\otimes n}$.
For such signatures, it is equivalent to replace it by $n$ copies of the corresponding unary signature.
Replacing a signature $f \in \mathcal{F}$ by a constant multiple $c f$,
where $c \ne 0$, does not change the complexity of $\Holant(\mathcal{F})$.
It introduces a global factor to $\Holant_\Omega$.
Hence, for two signatures $f,g$ of the same arity,
we use $f \neq g$ to mean that these signatures are not equal in the projective space sense,
i.e.~not equal up to any nonzero constant multiple.

We say a signature set $\mathcal{F}$ is tractable (resp.~$\SHARPP$-hard)
if the corresponding counting problem $\PlCSP(\mathcal{F})$ is tractable (resp.~$\SHARPP$-hard).
Sometimes we abuse this notation and also say that $\mathcal{F}$ is tractable to mean $\PlHolant(\mathcal{F})$ is tractable.
The intended counting problem should be clear from context.
Similarly for a signature $f$, we say $f$ is tractable (resp.~$\SHARPP$-hard) if $\{f\}$ is.
We follow the usual conventions about polynomial-time Turing reduction $\le_T$ and polynomial-time Turing equivalence $\equiv_T$.

\subsection{Holographic Reduction}

To introduce the idea of holographic reductions, it is convenient to consider bipartite graphs.
For a general graph, we can always transform it into a bipartite graph while preserving the Holant value, as follows.
For each edge in the graph, we replace it by a path of length two.
(This operation is called the \emph{2-stretch} of the graph and yields the edge-vertex incidence graph.)
Each new vertex is assigned the binary \textsc{Equality} signature $(=_2) = [1,0,1]$.

We use $\holant{\mathcal{F}}{\mathcal{G}}$ to denote the Holant problem on bipartite graphs $H = (U,V,E)$,
where each signature for a vertex in $U$ or $V$ is from $\mathcal{F}$ or $\mathcal{G}$, respectively.
An input instance for this bipartite Holant problem is a bipartite signature grid and is denoted by $\Omega = (H;\ \mathcal{F} \mid \mathcal{G};\ \pi)$.
Signatures in $\mathcal{F}$ are considered as row vectors (or covariant tensors);
signatures in $\mathcal{G}$ are considered as column vectors (or contravariant tensors)~\cite{DP91}.
Similarly, $\plholant{\mathcal{F}}{\mathcal{G}}$ denotes the Holant problem on planar bipartite graphs.

For a 2-by-2 matrix $T$ and a signature set $\mathcal{F}$,
define $T \mathcal{F} = \{g \mid \exists f \in \mathcal{F}$ of arity $n,~g = T^{\otimes n} f\}$, similarly for $\mathcal{F} T$.
Whenever we write $T^{\otimes n} f$ or $T \mathcal{F}$,
we view the signatures as column vectors;
similarly for $f T^{\otimes n} $ or $\mathcal{F} T$ as row vectors.
In the special case that $T = \left[\begin{smallmatrix} 1 & 1 \\ 1 & -1 \end{smallmatrix}\right]$,
we also define $T \mathcal{F} = \widehat{\mathcal{F}}$.

Let $T$ be an invertible 2-by-2 matrix.
The holographic transformation defined by $T$ is the following operation:
given a signature grid $\Omega = (H;\ \mathcal{F} \mid \mathcal{G};\ \pi)$, for the same graph $H$,
we get a new grid $\Omega' = (H;\ \mathcal{F} T \mid T^{-1} \mathcal{G};\ \pi')$ by replacing each signature in
$\mathcal{F}$ or $\mathcal{G}$ with the corresponding signature in $\mathcal{F} T$ or $T^{-1} \mathcal{G}$.

\begin{theorem}[Valiant's Holant Theorem~\cite{Val08}]
 If there is a holographic transformation mapping signature grid $\Omega$ to $\Omega'$,
 then $\Holant_\Omega = \Holant_{\Omega'}$.
\end{theorem}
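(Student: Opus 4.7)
The plan is to prove this by the standard ``tensor contraction'' argument: write $\Holant_\Omega$ as a sum of products where each edge corresponds to contracting two indices, and then insert copies of the identity $T T^{-1} = I$ along every edge, absorbing the factors of $T$ into the $\mathcal{F}$-signatures and the factors of $T^{-1}$ into the $\mathcal{G}$-signatures.

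First I would fix the bipartite signature grid $\Omega = (H;\ \mathcal{F} \mid \mathcal{G};\ \pi)$ with $H = (U,V,E)$ and unpack the definition:
\[
\Holant_\Omega \;=\; \sum_{\sigma : E \to \{0,1\}} \prod_{u \in U} f_u\bigl(\sigma|_{E(u)}\bigr) \prod_{v \in V} g_v\bigl(\sigma|_{E(v)}\bigr).
\]
Viewing each $f_u$ as a covariant tensor in $(\mathbb{C}^2)^{\otimes \deg(u)}$ and each $g_v$ as a contravariant tensor in $(\mathbb{C}^2)^{\otimes \deg(v)}$, the Holant value is exactly the complete tensor contraction in which every edge $e \in E$ contracts one index of an $\mathcal{F}$-tensor with one index of a $\mathcal{G}$-tensor along the two-dimensional space of assignments to $e$.

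Next, for every edge $e = (u,v)$ I would insert the identity factor $\sum_{b,c} T_{ab} (T^{-1})_{bc} = \delta_{ac}$ in place of the Kronecker delta used for contraction. Regrouping, the factor $T_{ab}$ gets attached to the $u$-th signature (one edge at a time on each of its dangling edges), turning $f_u$ into $f_u T^{\otimes \deg(u)}$, while the factor $(T^{-1})_{bc}$ gets attached to $g_v$, turning it into $(T^{-1})^{\otimes \deg(v)} g_v$. Doing this simultaneously on all edges and collecting terms yields precisely the contraction that defines $\Holant_{\Omega'}$ for $\Omega' = (H;\ \mathcal{F} T \mid T^{-1}\mathcal{G};\ \pi')$.

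The only genuine content is the algebraic identity $\sum_b T_{ab} (T^{-1})_{bc} = \delta_{ac}$, applied once per edge; everything else is careful index bookkeeping. The main potential obstacle is making sure the regrouping is legitimate, i.e.\ that one can commute the sums and products to pull all $T$'s into the appropriate tensor powers on each vertex. This is justified because the sum over $\sigma : E \to \{0,1\}$ factorizes over edges and the inserted identities on distinct edges are independent, so Fubini lets us perform the contractions vertex-by-vertex after the substitution. Once this is done, the resulting expression is literally $\Holant_{\Omega'}$, completing the proof.
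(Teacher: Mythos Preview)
Your argument is correct and is precisely the standard proof of Valiant's Holant Theorem. However, note that the paper does not actually give a proof of this statement: it is stated as a cited result from~\cite{Val08} and used as a black box, so there is no ``paper's own proof'' to compare against. Your tensor-contraction argument with the insertion of $T T^{-1} = I$ on each edge is exactly the argument one finds in the original reference and in standard expositions.
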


Therefore, an invertible holographic transformation does not change the complexity of the Holant problem in the bipartite setting.
Furthermore, there is a special kind of holographic transformation, the orthogonal transformation,
that preserves the binary equality and thus can be used freely in the standard setting.

\begin{theorem}[Theorem~2.2 in~\cite{CLX09a}] \label{thm:orthogonal}
 Suppose $T$ is a 2-by-2 orthogonal matrix $(T \transpose{T} = I_2)$ and let $\Omega = (H, \mathcal{F}, \pi)$ be a signature grid.
 Under a holographic transformation by $T$, we get a new grid $\Omega' = (H, T \mathcal F, \pi')$ and $\Holant_\Omega = \Holant_{\Omega'}$.
\end{theorem}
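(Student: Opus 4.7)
The plan is to reduce to the bipartite version of Valiant's Holant Theorem via the 2-stretch construction, exploiting the fact that the binary equality signature $(=_2)$ is fixed by holographic transformations induced by orthogonal matrices.

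First I would convert $\Omega = (H, \mathcal{F}, \pi)$ into an equivalent bipartite signature grid $\Omega_B = (H_B;\ \{(=_2)\} \mid \mathcal{F};\ \pi_B)$ by subdividing each edge of $H$ with a new vertex carrying $(=_2)$, as described in Section~\ref{sec:preliminaries} (the 2-stretch). This yields $\Holant_\Omega = \Holant_{\Omega_B}$. Next I would apply Valiant's Holant Theorem to $\Omega_B$ with the holographic transformation $T^{-1}$, obtaining $\Omega_B' = (H_B;\ \{(=_2)\, T^{-1}\} \mid T\mathcal{F};\ \pi_B')$ with $\Holant_{\Omega_B} = \Holant_{\Omega_B'}$. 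The choice of $T^{-1}$ rather than $T$ in the bipartite setting is precisely what is needed so that the right-hand partition becomes $T\mathcal{F}$, matching the conclusion of the theorem.

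The crux is to verify that $(=_2)\, T^{-1} = (=_2)$ when $T$ is orthogonal. Identifying the symmetric arity-$2$ signature $(=_2) = [1,0,1]$ with the $2 \times 2$ identity matrix $I_2$, the row-vector transformation $(=_2)\,(T^{-1})^{\otimes 2}$ corresponds to the matrix product $\transpose{(T^{-1})}\, I_2\, T^{-1} = T\, T^{-1} = I_2$, where I use $T\transpose{T} = I_2$. Hence $\Omega_B' = (H_B;\ \{(=_2)\} \mid T\mathcal{F};\ \pi_B')$ is exactly the 2-stretch of $\Omega' = (H, T\mathcal{F}, \pi')$, so undoing the 2-stretch gives $\Holant_{\Omega_B'} = \Holant_{\Omega'}$, and chaining the equalities yields $\Holant_\Omega = \Holant_{\Omega'}$.

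I do not anticipate a serious obstacle; the argument is a short bookkeeping exercise once one decides to apply Valiant's theorem with $T^{-1}$ in the bipartite setting. The orthogonality hypothesis $T\transpose{T} = I_2$ is used in exactly one place, namely in the tensor calculation showing that the subdivision vertices continue to realize $(=_2)$ after the transformation, which is precisely what licenses undoing the 2-stretch at the end.
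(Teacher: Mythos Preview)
Your argument is correct and is precisely the standard derivation of this fact from Valiant's Holant Theorem. The paper does not actually supply a proof of this statement; it is quoted as Theorem~2.2 of~\cite{CLX09a} and used as a black box. Your route---2-stretch to make the grid bipartite, apply Valiant's Holant Theorem with $T^{-1}$, then observe that $(=_2)(T^{-1})^{\otimes 2}=(=_2)$ because $T\transpose{T}=I_2$---is exactly the intended justification, and your tensor/matrix bookkeeping is right.
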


Since the complexity of signatures are equivalent up to a nonzero constant factor,
we also call a transformation $T$ such that $T \transpose{T} = \lambda I$ for some $\lambda \neq 0$ an orthogonal transformation.
Such transformations do not change the complexity of a problem.

\subsection{Realization}

One basic notion used throughout the paper is realization.
We say a signature $f$ is \emph{realizable} or \emph{constructible} from a signature set $\mathcal{F}$
if there is a gadget with some dangling edges such that each vertex is assigned a signature from $\mathcal{F}$,
and the resulting graph, when viewed as a black-box signature with inputs on the dangling edges, is exactly $f$.
If $f$ is realizable from a set $\mathcal{F}$, then we can freely add $f$ into $\mathcal{F}$ preserving the complexity.

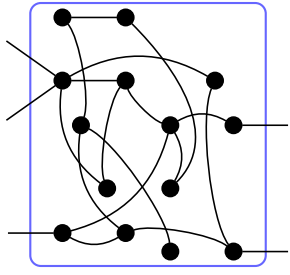
\begin{figure}[t]
 \centering
 \begin{tikzpicture}[scale=\scale,transform shape,node distance=\nodeDist,semithick]
  \node[external]  (0)                     {};
  \node[internal]  (1) [below right of=0]  {};
  \node[external]  (2) [below left  of=1]  {};
  \node[internal]  (3) [above       of=1]  {};
  \node[internal]  (4) [right       of=3]  {};
  \node[internal]  (5) [below       of=4]  {};
  \node[internal]  (6) [below right of=5]  {};
  \node[internal]  (7) [right       of=6]  {};
  \node[internal]  (8) [below       of=6]  {};
  \node[internal]  (9) [below       of=8]  {};
  \node[internal] (10) [right       of=9]  {};
  \node[internal] (11) [above right of=6]  {};
  \node[internal] (12) [below left  of=8]  {};
  \node[internal] (13) [left        of=8]  {};
  \node[internal] (14) [below left  of=13] {};
  \node[external] (15) [left        of=14] {};
  \node[internal] (16) [below left  of=5]  {};
  \path let
         \p1 = (15),
         \p2 = (0)
        in
         node[external] (17) at (\x1, \y2) {};
  \path let
         \p1 = (15),
         \p2 = (2)
        in
         node[external] (18) at (\x1, \y2) {};
  \node[external] (19) [right of=7]  {};
  \node[external] (20) [right of=10] {};
  \path (1) edge                             (5)
            edge[bend left]                 (11)
            edge[bend right]                (13)
            edge node[near start] (e1) {}   (17)
            edge node[near start] (e2) {}   (18)
        (3) edge                             (4)
        (4) edge[out=-45,in=45]              (8)
        (5) edge[bend right, looseness=0.5] (13)
            edge[bend right, looseness=0.5]  (6)
        (6) edge[bend left]                  (8)
            edge[bend left]                  (7)
            edge[bend left]                 (14)
        (7) edge node[near start] (e3) {}   (19)
       (10) edge[bend right, looseness=0.5] (12)
            edge[bend left,  looseness=0.5] (11)
            edge node[near start] (e4) {}   (20)
       (12) edge[bend left]                 (16)
       (14) edge node[near start] (e5) {}   (15)
            edge[bend right]                (12)
       (16) edge[bend left,  looseness=0.5]  (9)
            edge[bend right, looseness=0.5]  (3);
  \begin{pgfonlayer}{background}
   \node[draw=\borderColor,thick,rounded corners,fit = (3) (4) (9) (e1) (e2) (e3) (e4) (e5)] {};
  \end{pgfonlayer}
 \end{tikzpicture}
 \caption{An $\mathcal{F}$-gate with 5 dangling edges.}
 \label{fig:Fgate}
\end{figure}

Formally, such a notion is defined by an $\mathcal{F}$-gate~\cite{CLX09a, CLX10}.
An $\mathcal{F}$-gate is similar to a signature grid $(H, \mathcal{F}, \pi)$ except that $H = (V,E,D)$ is a graph with some dangling edges $D$.
The dangling edges define external variables for the $\mathcal{F}$-gate.
(See Figure~\ref{fig:Fgate} for an example.)
We denote the regular edges in $E$ by $1, 2, \dotsc, m$, and denote the dangling edges in $D$ by $m+1, \dotsc, m+n$.
Then we can define a function $\Gamma$ for this $\mathcal{F}$-gate as
\[
 \Gamma(y_1, y_2, \dotsc, y_n) = \sum_{x_1, x_2, \dotsc, x_m \in \{0, 1\}} H(x_1, x_2, \dotsc, x_m, y_1, \dotsc, y_n),
\]
where $(y_1, y_2, \dotsc, y_n) \in \{0, 1\}^n$ denotes an assignment on the dangling edges
and $H(x_1, x_2, \dotsc, x_m,$ $y_1, y_2, \dotsc, y_n)$ denotes the value of the signature grid on an assignment of all edges,
which is the product of evaluations at all internal vertices.
We also call this function the signature $\Gamma$ of the $\mathcal{F}$-gate.
An $\mathcal{F}$-gate with underlying graph $H$ is planar if the graph $H'$,
formed by introducing a new vertex $v$ and attaching each dangling edge to $v$, is also planar.
A planar $\mathcal{F}$-gate can be used in a planar signature grid as if it is just a single vertex with the particular signature.

Using the idea of planar $\mathcal{F}$-gates, we can reduce one planar Holant problem to another.
Suppose $g$ is the signature of some planar $\mathcal{F}$-gate.
Then $\PlHolant(\mathcal{F} \cup \{g\}) \leq_T \PlHolant(\mathcal{F})$.
The reduction is quite simple.
Given an instance of $\PlHolant(\mathcal{F} \cup \{g\})$,
by replacing every appearance of $g$ by the $\mathcal{F}$-gate,
we get an instance of $\PlHolant(\mathcal{F})$.
Since the signature of the $\mathcal{F}$-gate is $g$,
the Holant values for these two signature grids are identical.

We note that even for a very simple signature set $\mathcal{F}$,
the signatures for all planar $\mathcal{F}$-gates can be quite complicated and expressive.

\subsection{\texorpdfstring{\#}{Count-}CSP and Its Tractable Signatures} \label{subsec:CSP-Tractable}

An instance of $\CSP(\mathcal{F})$ has the following bipartite view.
Create a node for each variable and each constraint.
Connect a variable node to a constraint node if the variable appears in the constraint function.
This bipartite graph is also known as the \emph{incidence graph}~\cite{CKS01} or \emph{constraint graph}.
Under this view, we can see that
\[
 \CSP(\mathcal{F}) \equiv_T \holant{\mathcal{F}}{\EQ} \equiv_T \Holant(\mathcal{F} \cup \EQ),
\]
where $\EQ = \{=_1, =_2, =_3, \dotsc\}$ is the set of equality signatures of all arities.
This equivalence also holds for the planar versions of these frameworks.

For the \#CSP framework, the following two signature sets are tractable~\cite{CLX09a}.

\begin{definition}
 A $k$-ary function $f(x_1, \dotsc, x_k)$ is \emph{affine} if it has the form
 \[\lambda \chi_{Ax = 0} \cdot \sqrt{-1}^{\sum_{j=1}^n \langle \alpha_j, x \rangle},\]
 where $\lambda \in \mathbb{C}$, $x = \transpose{(x_1, x_2, \dotsc, x_k, 1)}$, $A$ is a matrix over $\mathbb{F}_2$, $\alpha_j$ is a vector over $\mathbb{F}_2$,
 and $\chi$ is a 0-1 indicator function such that $\chi_{Ax = 0}$ is 1 iff $A x = 0$.
 Note that the dot product $\langle \alpha_j, x \rangle$ is calculated over $\mathbb{F}_2$,
 while the summation $\sum_{j=1}^n$ on the exponent of $i = \sqrt{-1}$ is evaluated as a sum mod 4 of 0-1 terms.
 We use $\mathscr{A}$ to denote the set of all affine functions.
\end{definition}

Notice that there is no restriction on the number of rows in the matrix $A$.
The trivial case is when $A$ is the zero matrix so that $\chi_{A x = 0} = 1$ holds for all $x$.

\begin{definition}
 A function is of \emph{product type} if it can be expressed as a product of unary functions,
 binary equality functions $([1,0,1])$, and binary disequality functions $([0,1,0])$.
 We use $\mathscr{P}$ to denote the set of product type functions.
\end{definition}

An alternate definition for $\mathscr{P}$, implicit in~\cite{CLX11a},
is the tensor closure of signatures with support on two entries of complement indices.

It is easy to see (cf.~Lemma~A.1 in~\cite{HL13}, the full version of~\cite{HL12}) that if $f$ is a symmetric signature in $\mathscr{P}$,
then $f$ is either degenerate, binary disequality, or generalized equality (i.e.~$[a,0,\dotsc,0,b]$ for $a, b \in \mathbb{C}$).
It is known that the set of non-degenerate symmetric signatures in $\mathscr{A}$ is precisely the nonzero signatures
($\lambda \neq 0$) in $\mathscr{F}_1 \union \mathscr{F}_2 \union \mathscr{F}_3$ with arity at least two,
where $\mathscr{F}_1$, $\mathscr{F}_2$, and $\mathscr{F}_3$ are three families of signatures defined as
\begin{align*}
 \mathscr{F}_1 &= \left\{\lambda \left([1,0]^{\otimes k} + i^r [0, 1]^{\otimes k}\right) \st \lambda \in \mathbb{C}, k = 1, 2, \dotsc, r = 0, 1, 2, 3\right\},\\
 \mathscr{F}_2 &= \left\{\lambda \left([1,1]^{\otimes k} + i^r [1,-1]^{\otimes k}\right) \st \lambda \in \mathbb{C}, k = 1, 2, \dotsc, r = 0, 1, 2, 3\right\}, \text{ and}\\
 \mathscr{F}_3 &= \left\{\lambda \left([1,i]^{\otimes k} + i^r [1,-i]^{\otimes k}\right) \st \lambda \in \mathbb{C}, k = 1, 2, \dotsc, r = 0, 1, 2, 3\right\}.
\end{align*}

\noindent
We explicitly list all the signatures in $\mathscr{F}_1 \cup \mathscr{F}_2 \cup \mathscr{F}_3$ up to an arbitrary constant multiple from $\mathbb{C}$:

\parbox{0.61\textwidth}{
 \begin{enumerate}
  \item $[1, 0, \dotsc, 0, \pm 1]$; \hfill $(\mathscr{F}_1, r=0,2)$
  \item $[1, 0, \dotsc, 0, \pm i]$; \hfill $(\mathscr{F}_1, r=1,3)$
  \item $[1,  0, 1,  0, \dotsc,   0  \text{ or } 1]$; \hfill $(\mathscr{F}_2, r=0)$
  \item $[1, -i, 1, -i, \dotsc, (-i) \text{ or } 1]$; \hfill $(\mathscr{F}_2, r=1)$
  \item $[0,  1, 0,  1, \dotsc,   0  \text{ or } 1]$; \hfill $(\mathscr{F}_2, r=2)$
  \item $[1,  i, 1,  i, \dotsc,   i  \text{ or } 1]$; \hfill $(\mathscr{F}_2, r=3)$
  \item $[1,  0, -1,  0, 1,  0, -1,  0, \dotsc, 0 \text{ or } 1 \text{ or } (-1)]$; \hfill $(\mathscr{F}_3, r=0)$
  \item $[1,  1, -1, -1, 1,  1, -1, -1, \dotsc,               1 \text{ or } (-1)]$; \hfill $(\mathscr{F}_3, r=1)$
  \item $[0,  1,  0, -1, 0,  1,  0, -1, \dotsc, 0 \text{ or } 1 \text{ or } (-1)]$; \hfill $(\mathscr{F}_3, r=2)$
  \item $[1, -1, -1,  1, 1, -1, -1,  1, \dotsc,               1 \text{ or } (-1)]$. \hfill $(\mathscr{F}_3, r=3)$
 \end{enumerate}}

In the Holant framework, there are two corresponding signature sets that are tractable.
A signature $f$ is $\mathscr{A}$-transformable
if there exists a holographic transformation $T$ such that $f \in T \mathscr{A}$ and $[1,0,1] T^{\otimes 2} \in \mathscr{A}$.
Similarly, a signature $f$ is $\mathscr{P}$-transformable
if there exists a holographic transformation $T$ such that $f \in T \mathscr{P}$ and $[1,0,1] T^{\otimes 2} \in \mathscr{P}$.
These two families are tractable because after a transformation by $T$, it is a tractable \#CSP instance.
We note that $\widehat{\mathscr{A}} = \mathscr{A}$.
For symmetric signatures, this easily follows from the expressions of the signatures in $\mathscr{F}_1 \cup \mathscr{F}_2 \cup \mathscr{F}_3$.

\subsection{Matchgate Signatures}

Matchgates were introduced by Valiant~\cite{Val02a, Val02b} in order to give polynomial-time algorithms for a collection of counting problems over planar graphs.
As the name suggests,
problems expressible by matchgates can be reduced to computing a weighted sum of perfect matchings.
The latter problem is tractable over planar graphs by Kasteleyn's algorithm~\cite{Kas67}.
These counting problems are naturally expressed in the Holant framework using \emph{matchgate signatures}.
We use $\mathscr{M}$ to denote the set of all matchgate signatures; thus $\PlHolant(\mathscr{M})$ is tractable.
In general, matchgate signatures are characterized by the matchgate identities (see~\cite{CG13} for the identities and a self-contained proof).

The parity of a matchgate signature is even (resp.~odd) if its support is on entries of even (resp.~odd) Hamming weight.
Lemmas~6.2 and~6.3 in~\cite{CC07b} (and the paragraph the follows them) characterize the symmetric signatures in $\mathscr{M}$.
Instead of formally stating these two lemmas,
we explicitly list all the symmetric signatures in $\mathscr{M}$:
For any $\alpha, \beta \in \mathbb{C}$,
\begin{enumerate}
 \item $[\alpha^n, 0, \alpha^{n-1} \beta, 0, \dotsc, 0, \alpha \beta^{n-1}, 0, \beta^n]$;
 \item $[\alpha^n, 0, \alpha^{n-1} \beta, 0, \dotsc, 0, \alpha \beta^{n-1}, 0, \beta^n, 0]$;
 \item $[0, \alpha^n, 0, \alpha^{n-1} \beta, 0, \dotsc, 0, \alpha \beta^{n-1}, 0, \beta^n]$;
 \item $[0, \alpha^n, 0, \alpha^{n-1} \beta, 0, \dotsc, 0, \alpha \beta^{n-1}, 0, \beta^n, 0]$.
\end{enumerate}
Roughly speaking, the symmetric matchgate signatures have~0 for every other entry (which is called the \emph{parity condition}),
and form a geometric progression with the remaining entries.

In the standard basis of the \#CSP framework,
the set of signatures $\left[\begin{smallmatrix} 1 & 1 \\ 1 & -1 \end{smallmatrix}\right] \mathscr{M} = \widehat{\mathscr{M}}$ is tractable
and consists of signatures with the following expressions.\footnote{Even though Theorem~\ref{thm:matchgate_char:b2} is technically about generator signatures,
neither generators nor recognizers are mentioned because Theorems~3 and~4 in~\cite{CL10a} coincide when the basis is an orthogonal transformation.}

\begin{theorem}[Special case of Theorem~4 in~\cite{CL10a}] \label{thm:matchgate_char:b2}
 A symmetric signature $[f_0, f_1, \dotsc, f_n]$ is realizable under the basis $\left[\begin{smallmatrix} 1 & 1 \\ 1 & -1 \end{smallmatrix}\right]$
 iff it takes one of the following forms:
 \begin{enumerate}
  \item there exists constants $\lambda, \alpha, \beta \in \mathbb{C}$ and $\varepsilon = \pm 1$, such that for all $\ell$, $0 \le \ell \le n$,
  \label{case:hadamard_matchgate:general}
  \[f_\ell = \lambda [(\alpha + \beta)^{n-\ell} (\alpha - \beta)^\ell + \varepsilon (\alpha - \beta)^{n-\ell} (\alpha + \beta)^\ell];\]
  \item there exists a constant $\lambda \in \mathbb{C}$, such that for all $\ell$, $0 \le \ell \le n$,
  \label{case:hadamard_matchgate:pm_reverse}
  \[f_\ell = \lambda (n - 2 \ell) (-1)^\ell;\]
  \item there exists a constant $\lambda \in \mathbb{C}$, such that for all $\ell$, $0 \le \ell \le n$,
  \label{case:hadamard_matchgate:pm}
  \[f_\ell = \lambda (n - 2 \ell).\]
 \end{enumerate}
\end{theorem}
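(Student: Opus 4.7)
The plan is to prove this theorem by directly computing $H^{\otimes n} g$ for every symmetric matchgate signature $g$ in the preceding listing. The essential tool is the polynomial encoding: to $[f_0, \dotsc, f_n]$ associate $\tilde f(x,y) = \sum_{\ell=0}^{n} \binom{n}{\ell} f_\ell \, x^{n-\ell} y^\ell$. Under this encoding, a rank-one signature $(a,b)^{\otimes n}$ corresponds to $(ax+by)^n$, and since $H(a,b)^\top = (a+b,\, a-b)^\top$, the transformation $H^{\otimes n}$ acts on the polynomial by the invertible substitution $x \mapsto x+y$ and $y \mapsto x-y$.

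First I would sort the symmetric matchgate signatures into two types. Whenever $\alpha\beta \ne 0$, each of the four families from the preceding listing can be rewritten, after a complex reparameterization $A = \sqrt{\alpha}$, $B = \sqrt{\beta}$, in the \emph{sum form} $\lambda\bigl[(A,B)^{\otimes n} + \varepsilon (A,-B)^{\otimes n}\bigr]$, with $\varepsilon = +1$ in the even-parity cases~1 and~2 and $\varepsilon = -1$ in the odd-parity cases~3 and~4. The degenerate specializations $\alpha = 0$ or $\beta = 0$ either collapse to a rank-one signature (still expressible in the sum form) or yield one of the \emph{singleton} signatures $\lambda[0,1,0,\dotsc,0]$ and $\lambda[0,\dotsc,0,1,0]$, which turn out to be precisely the matchgates not captured by the sum form.

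For the backward direction I exhibit each of the three target forms as the image of a matchgate under $H^{\otimes n}$. A sum-form matchgate $g$ transforms to $\lambda \bigl[(A+B,\, A-B)^{\otimes n} + \varepsilon(A-B,\, A+B)^{\otimes n}\bigr]$, whose $\ell$-th entry is exactly item~\ref{case:hadamard_matchgate:general}. The singleton $[0,1,0,\dotsc,0]$ has polynomial $n x^{n-1} y$; the substitution sends it to $n(x+y)^{n-1}(x-y)$, and expanding while using the identities $n\binom{n-1}{\ell}/\binom{n}{\ell} = n-\ell$ and $n\binom{n-1}{\ell-1}/\binom{n}{\ell} = \ell$ recovers $f_\ell = n - 2\ell$, which is item~\ref{case:hadamard_matchgate:pm}. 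Similarly, $[0,\dotsc,0,1,0]$ has polynomial $n x y^{n-1}$, transforms to $n(x+y)(x-y)^{n-1}$, and yields $f_\ell = (-1)^\ell (n - 2\ell)$, which is item~\ref{case:hadamard_matchgate:pm_reverse}.

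For the forward direction the same calculations, read in reverse, show that every symmetric matchgate signature transforms under $H^{\otimes n}$ into one of items~\ref{case:hadamard_matchgate:general}, \ref{case:hadamard_matchgate:pm_reverse}, or~\ref{case:hadamard_matchgate:pm}. The main obstacle is therefore not analytical but organizational: one must verify that the sum-form and singleton-form dichotomy exhausts the four matchgate families in the preceding listing, including all their $\alpha = 0$ and $\beta = 0$ specializations, and assign each degenerate limit to the correct type. Once this routine case check is done, the polynomial substitution rule for $H^{\otimes n}$ reduces the remainder to straightforward coefficient matching.
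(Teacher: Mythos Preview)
The paper does not actually prove this theorem; it is stated as a special case of Theorem~4 in~\cite{CL10a} and is accompanied only by the explanatory remark that case~\ref{case:hadamard_matchgate:general} corresponds to the generic matchgate families while cases~\ref{case:hadamard_matchgate:pm_reverse} and~\ref{case:hadamard_matchgate:pm} correspond to the perfect-matching signature and its reversal. Your proposal is a correct and complete proof, and it is precisely the natural argument one would give: the polynomial encoding turns $H^{\otimes n}$ into the linear substitution $(x,y)\mapsto(x+y,x-y)$, rank-one tensors transform to rank-one tensors, and the coefficient extraction for the two singleton cases is the expected binomial identity.

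One small point worth tightening in the write-up: the symbols $\alpha,\beta$ appear with two different meanings---once as the geometric-progression parameters in the matchgate listing preceding the theorem, and once as the parameters in item~\ref{case:hadamard_matchgate:general} of the theorem itself---and your square-root reparametrization $A=\sqrt{\alpha},\,B=\sqrt{\beta}$ links them. You should rename one pair to avoid overloading. Also, your case check of the $\alpha=0$ or $\beta=0$ degenerations should explicitly note that the arity-$2$ signature $[0,1,0]$ is simultaneously a singleton and a sum-form signature (take $\varepsilon=-1$ with $A=B$), so the dichotomy is an exhaustive covering rather than a partition; this does not affect correctness but makes the ``both directions'' claim cleaner.
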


We note that case~\ref{case:hadamard_matchgate:general} corresponds to the general case
($\varepsilon = +1$ for signatures with even parity and $\varepsilon = -1$ for signatures with odd parity)
while case~\ref{case:hadamard_matchgate:pm} corresponds to the perfect matching signatures $[0,1,0,\dotsc,0]$
and case~\ref{case:hadamard_matchgate:pm_reverse} corresponds to their reversals.

We summarize the known tractability results for the $\PlCSP$ framework in the following theorem,
which is stated in the Hadamard basis with $[1,0]$ and $[0,1]$ present.

\begin{theorem} \label{thm:PlCSP_tractable:Hadamard}
 Let $\mathcal{F}$ be any set of symmetric, complex-valued signatures in Boolean variables.
 Then $\PlHolant^c(\mathcal{F} \union \widehat{\EQ})$ is tractable if
 $\mathcal{F} \subseteq \mathscr{A}$, $\mathcal{F} \subseteq \widehat{\mathscr{P}}$, or $\mathcal{F} \subseteq \mathscr{M}$.
\end{theorem}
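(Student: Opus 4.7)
The plan is to handle each tractability condition separately by reducing to a previously established tractable class. For the first two cases, I apply the Hadamard holographic transformation to move back to the standard \#CSP basis; for the third case, I observe that the instance already lies inside $\PlHolant(\mathscr{M})$ so Kasteleyn's algorithm applies.

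The key tool for cases 1 and 2 is that $H H^\top = 2 I_2$, so $H$ is an orthogonal transformation up to scalar, and by Theorem~\ref{thm:orthogonal} applying it vertex-by-vertex preserves both complexity and planarity. Applied to an instance of $\PlHolant^c(\mathcal{F} \cup \widehat{\EQ})$ it yields, up to global scalars, an instance of $\PlHolant(\widehat{\mathcal{F}} \cup \EQ \cup \{[1,1],[1,-1]\})$: the set $\widehat{\EQ}$ goes back to $\EQ$ because $H^2 = 2 I$, and the constants $[1,0]$, $[0,1]$ become the unary signatures $[1,1]$ and $[1,-1]$.

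If $\mathcal{F} \subseteq \mathscr{A}$, then $\widehat{\mathcal{F}} \subseteq \widehat{\mathscr{A}} = \mathscr{A}$ (noted in the Preliminaries), and $[1,1]$, $[1,-1]$ lie in $\mathscr{F}_2$ with $k=1$, hence in $\mathscr{A}$, so the result is a $\PlCSP(\mathscr{A})$ instance, tractable as a special case of $\CSP(\mathscr{A})$ proved in~\cite{CLX09a}. If $\mathcal{F} \subseteq \widehat{\mathscr{P}}$, then $\widehat{\mathcal{F}} \subseteq \widehat{\widehat{\mathscr{P}}} = \mathscr{P}$ (since doubled Hadamard is the identity up to scalar), and unary signatures are trivially in $\mathscr{P}$, so the result is a $\PlCSP(\mathscr{P})$ instance, again tractable by the same reference.

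For $\mathcal{F} \subseteq \mathscr{M}$, no transformation is needed: I would verify directly from the explicit list of symmetric matchgate signatures that each of the extra signatures is already in $\mathscr{M}$. The constants $[1,0]$ and $[0,1]$ are the $n=0$ instances of forms~2 and~3. Each $\widehat{=_k} = [1,0,1,0,\dotsc]$ fits form~1 with $\alpha=\beta=1$ when $k$ is even, and form~2 with $\alpha=\beta=1$ when $k$ is odd. Hence the entire instance is a $\PlHolant(\mathscr{M})$ instance and is tractable via Kasteleyn's algorithm. There is no genuine obstacle here; the content of the theorem is essentially bookkeeping on top of the definitions in the Preliminaries. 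The only work is the routine verification that the auxiliary signatures $\widehat{\EQ}$ and $\{[1,0],[0,1]\}$ interact nicely with each of the three tractable classes, either by being closed under the Hadamard transformation (cases 1 and 2) or by being directly contained (case 3).
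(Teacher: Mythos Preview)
Your argument is correct and matches the reasoning implicitly relied upon by the paper. Note, however, that the paper does not actually supply a proof of this theorem: it is stated as a summary of known tractability results (``We summarize the known tractability results for the $\PlCSP$ framework in the following theorem\ldots''), with the underlying facts already established in the Preliminaries (that $\widehat{\mathscr{A}} = \mathscr{A}$, that $\CSP(\mathscr{A})$ and $\CSP(\mathscr{P})$ are tractable by~\cite{CLX09a}, and that $\PlHolant(\mathscr{M})$ is tractable by Kasteleyn). Your write-up simply makes explicit the routine verifications the paper leaves to the reader, namely that the auxiliary signatures $\{[1,0],[0,1]\}$ and $\widehat{\EQ}$ land in the right tractable class in each case; there is nothing to compare beyond that.
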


We also say a signature $f$ is $\mathscr{M}$-transformable
if there exists a holographic transformation $T$ such that $f \in T \mathscr{M}$ and $[1,0,1] T^{\otimes 2} \in \mathscr{M}$.

\subsection{Some Known Dichotomies}

We use the dichotomy for a single ternary signature in the Holant framework to prove the dichotomy for a single arity~4 signature.
A signature is called \emph{vanishing} if the Holant of any signature grid using only that signature is zero (see~\cite{CGW13}, the full version of~\cite{CGW12}).

\begin{theorem}[Special case of Theorem~V.1 in~\cite{CLX10}] \label{thm:arity3:singleton}
 If $f$ is a symmetric, non-degenerate, complex-valued ternary signature,
 then $\PlHolant(f)$ is $\SHARPP$-hard unless $f$ satisfies one of the following conditions,
 in which case the problem is in $\P$:
 \begin{enumerate}
  \item $\Holant(f)$ is tractable (i.e. $f$ is $\mathscr{A}$-transformable, $\mathscr{P}$-transformable, or vanishing);
  \item $f$ is $\mathscr{M}$-transformable.
 \end{enumerate}
\end{theorem}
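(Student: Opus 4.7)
The plan is to split into tractability (the easy direction) and hardness (the bulk of the work). For tractability, case~1 follows from the \#CSP dichotomy of~\cite{CLX09a} combined with the vanishing criterion: each of the three conditions ($\mathscr{A}$-transformable, $\mathscr{P}$-transformable, vanishing) yields a polynomial-time algorithm for $\Holant(f)$ on arbitrary graphs, and restricting the input to planar graphs cannot make the problem harder. For case~2, a holographic transformation $T$ putting $f \in T \mathscr{M}$ with $[1,0,1] T^{\otimes 2} \in \mathscr{M}$ reduces the evaluation of $\PlHolant(f)$ to computing a weighted sum of perfect matchings on a planar graph, which is tractable by Kasteleyn's algorithm.

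For hardness, the first step is to normalize $f$ by an orthogonal holographic transformation, which is free in the standard setting by Theorem~\ref{thm:orthogonal}. Then I would construct auxiliary signatures via planar gadgets on $f$: a self-loop on $f$ produces the unary signature $[f_0 + f_2, f_1 + f_3]$; connecting a unary into a single edge of $f$ produces a symmetric binary signature; and chaining these binaries through degree-$2$ vertices yields powers that drive a recursive unary construction of the type developed in Section~\ref{sec:interpolation}. The aim is to realize one of the following from $f$ alone: (a) a family of unary signatures rich enough to reduce from a hard $\PlCSP$ instance; (b) another arity-$3$ signature already known (inductively or by earlier results) to be $\SHARPP$-hard under $\PlHolant$; or (c) a concrete hard planar problem such as planar \#EO on $4$-regular graphs or evaluation of the planar Tutte polynomial at a hard point like $(3,3)$.

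The central obstacle is the failure-case analysis for interpolation. Each recursive gadget carries an eigenvalue ratio that must avoid being a root of unity for interpolation to succeed; over $\mathbb{C}$ this failure set is countably infinite, in sharp contrast to the finitely many failure points one encounters over $\mathbb{R}$ in~\cite{CLX10}. The heart of the proof is to argue that when \emph{every} natural gadget-based construction simultaneously falls into its failure mode, the four parameters $(f_0,f_1,f_2,f_3)$ are pinned to a highly restrictive algebraic locus, and this locus coincides exactly with the union of the $\mathscr{A}$-, $\mathscr{P}$-, vanishing-, and $\mathscr{M}$-transformable families. A secondary difficulty is that one must avoid realizing $[1,0]$ and $[0,1]$ directly in the planar setting: such pinning would put $\widehat{\EQ}$ with the constants in a hard regime and collapse $\SHARPP$ to $\P$ under this very theorem, so every hardness construction must remain planarity-preserving and, where the $\CSP$ dichotomy is invoked, it must be invoked in the Hadamard basis as foreshadowed in Section~\ref{sec:pinning}.
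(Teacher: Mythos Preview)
The paper does not prove this statement at all: it is quoted verbatim as a special case of Theorem~V.1 in~\cite{CLX10} and used as a black box (for instance in Lemma~\ref{lem:arity4:double_root}). There is therefore no ``paper's own proof'' to compare against; the correct move here is simply to cite~\cite{CLX10}.

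Your sketch is a reasonable caricature of how dichotomies of this type are proved, but it has two structural problems as a proof of \emph{this} statement in \emph{this} paper. First, you invoke machinery from Sections~\ref{sec:interpolation} and~\ref{sec:pinning} of the present paper, but those sections are developed \emph{downstream} of Theorem~\ref{thm:arity3:singleton}; the result in~\cite{CLX10} predates the tight recursive-unary characterization and the Hadamard-basis pinning argument, and was proved with the earlier (real-weighted) toolkit. Second, listing ``planar \#EO on $4$-regular graphs'' as a candidate hard starting point is anachronistic: the $\SHARPP$-hardness of that problem is precisely what Theorem~\ref{thm:4reg_planar_EO_hard} of the present paper establishes, and~\cite{CLX10} could not have used it.

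Beyond that, the hardness portion of your plan is not yet a proof. The claim that ``when every natural gadget-based construction simultaneously falls into its failure mode, the four parameters are pinned to a highly restrictive algebraic locus that coincides exactly with the tractable families'' is the entire content of the theorem, and you have not supplied the gadgets, the eigenvalue computations, or the algebraic elimination that would make it go through. For the ternary case specifically, the actual argument in~\cite{CLX10} proceeds by an explicit normal form (via orthogonal transformation to set $f_1=0$ or to one of a few canonical shapes), then a finite case analysis on the resulting two-parameter family, with concrete binary recursive gadgets whose failure conditions are solved exactly rather than described qualitatively.
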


We use the following theorem about edge-weighted signatures on degree prescribed graphs in both of our dichotomy theorems.
See also Theorem~22 in~\cite{Kow10}, which contains a proof.

\begin{theorem}[Theorem~4 in~\cite{CK11}] \label{thm:degree_prescribed_homomorphism}
 Let $S \subseteq \mathbb{Z}^+$ be nonempty, let $\mathcal{G} = \{=_k \st k \in S\}$, and let $d = \gcd(S)$.
 Then $\plholant{[f_0, f_1, f_2]}{\mathcal{G}}$ is $\SHARPP$-hard for all $f_0, f_1, f_2 \in \mathbb{C}$
 unless one of the following conditions hold, in which case the problem is in $\P$:
 \begin{enumerate}
  \item $\mathcal{G} \subseteq \{=_1, =_2\}$;
  \item $f_0 f_2 = f_1^2$;
  \item $f_1 = 0$;
  \item $f_0 f_2 = -f_1^2 \wedge f_0^d = -f_2^d$;
  \item $f_0^d = f_2^d$.
 \end{enumerate}
\end{theorem}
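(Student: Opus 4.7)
The plan is to verify polynomial-time algorithms for each of the five listed cases and to prove $\SHARPP$-hardness for every other parameter setting, while keeping all reductions planar.

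\textbf{Tractability.} In Case~1, when $\mathcal{G} \subseteq \{=_1, =_2\}$, the signature grid (after absorbing equalities of arities 1 and 2) decomposes into paths and cycles. The Holant value is then a product of quantities of the form $\transpose{u} M^\ell v$ or $\operatorname{tr}(M^\ell)$ where $M = \left(\begin{smallmatrix} f_0 & f_1 \\ f_1 & f_2\end{smallmatrix}\right)$, each computable in polynomial time. In Case~2, $f_0 f_2 = f_1^2$ makes $M$ rank one, so $[f_0,f_1,f_2]$ is degenerate and the Holant factors edge-by-edge. In Case~3, $f_1 = 0$ forces each connected component (after collapsing the binary signatures to edges) to be monochromatic, giving a product of terms of the form $f_0^m + f_2^m$. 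For Cases~4 and~5, I would expand each $=_k$ in a Fourier basis over an appropriate root of unity; the conditions $f_0^d = \pm f_2^d$ combined with $d = \gcd(S)$ make most Fourier coefficients vanish, leaving a polynomial-time sum.

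\textbf{Hardness.} I would proceed in two stages. First, planar polynomial interpolation combined with the numerical-semigroup (Chicken McNugget) theorem lets me effectively reduce to a single equality arity: every sufficiently large multiple of $d$ is a nonnegative integer combination of elements of $S$, and such a combination is realized in a planar fashion by splicing equalities of arities in $S$ together at a common vertex, with $[f_0,f_1,f_2]$ edges as connectors. This reduces the problem to $\plholant{[f_0,f_1,f_2]}{=_m}$ for a single convenient multiple $m$ of $d$. Second, chaining $t$ copies of $[f_0,f_1,f_2]$ along a planar path produces a binary signature corresponding to $M^t$, whose entries form geometric progressions in the eigenvalues $\lambda_1, \lambda_2$ of $M$. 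When $\lambda_1/\lambda_2$ is not a root of unity, standard interpolation recovers every binary signature on the left side, reducing to a known planar-hard 2-spin problem on degree-$m$ graphs.

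\textbf{Main obstacle.} The delicate part is the interpolation failure that arises when $\lambda_1/\lambda_2$ is a root of unity. Over $\mathbb{C}$ there are infinitely many such roots, so a direct case analysis of failure modes is infeasible. I would instead attempt to handle these cases by introducing auxiliary planar gadgets (self-loops, anti-gadgets, or compressed matrices built from a pair of in-series paths) that break the degeneracy, and then verify that every remaining irreducible failure coincides with one of the listed tractable conditions. The algebraic identities $f_0^d = \pm f_2^d$ appear to be engineered precisely to encompass these root-of-unity failures, but matching them exhaustively to the interpolation obstructions, while simultaneously maintaining planarity and accommodating the gcd-$d$ structure of $S$, is the technical heart of the argument.
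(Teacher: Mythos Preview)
This theorem is not proved in the paper; it is cited as a known result (Theorem~4 in~\cite{CK11}, with a proof also referenced in~\cite{Kow10}). The paper uses it as a black box, so there is no proof here against which to compare your proposal.

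That said, a few remarks on the proposal itself. Your tractability arguments for Cases~1--3 are standard and correct in outline. For Cases~4 and~5 your Fourier sketch is too vague to be convincing: the actual tractability in these cases typically goes through a holographic transformation diagonalizing $M$ (possible since $f_0 f_2 \ne f_1^2$ outside Case~2), after which the condition $f_0^d = \pm f_2^d$ translates into the transformed equality signatures lying in a product-type or affine class. Your hardness outline is reasonable, and you have correctly identified the root-of-unity obstruction as the crux. However, the statement ``every remaining irreducible failure coincides with one of the listed tractable conditions'' is doing all the work and is exactly what needs a careful argument; in the original proof this is handled not by ad hoc auxiliary gadgets but by a systematic case analysis on the eigenvalue ratio combined with the gcd structure, and it is nontrivial to get the exponent $d$ to emerge cleanly. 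If you intend to write this out, you should expect the bulk of the effort to go into that alignment.
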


For the arity~4 dichotomy,
we use Theorem~\ref{thm:degree_prescribed_homomorphism} with $\mathcal{G} = \{=_4\}$.
For the $\PlCSP$ dichotomy,
we use Theorem~\ref{thm:degree_prescribed_homomorphism} with $\mathcal{G} = \EQ$,
which is the special case of $\PlCSP(\mathcal{F})$ when $\mathcal{F}$ contains a single binary signature.
Over general domains,
this special case is also known as counting graph homomorphism from a planar input graph to a fixed target graph.
Furthermore, we perform a holographic transformation by the Hadamard matrix $H = \left[\begin{smallmatrix} 1 & 1 \\ 1 & -1 \end{smallmatrix}\right]$.
Under this transformation, it is easy to see that the conditions $f_0 f_2 = f_1^2$ and $f_0 f_2 = -f_1^2 \wedge f_0 = -f_2$ are invariant
while the conditions $f_1 = 0$ and $f_0 = f_2$ map to each other.
Therefore, by an apparent coincidence, the tractability conditions remain the same.
To be clear, we restate Theorem~\ref{thm:degree_prescribed_homomorphism} both before and after a holographic transformation by $H$ with $\mathcal{G} = \EQ$.

\begin{theorem}[Special case of Theorem~\ref{thm:degree_prescribed_homomorphism}] \label{thm:Pl-Graph_Homomorphism}
 For any $f_0, f_1, f_2 \in \mathbb{C}$,
 both $\plholant{[f_0, f_1, f_2]}{\EQ}$ and $\plholant{[f_0, f_1, f_2]}{\widehat{\EQ}}$ are $\SHARPP$-hard unless one of the following conditions hold,
 in which case both problems are in $\P$:
 \begin{enumerate}
  \item $f_0 f_2 = f_1^2$;                   \label{case:Pl-GH:degenerate}
  \item $f_1 = 0$;                           \label{case:Pl-GH:M}
  \item $f_0 f_2 = -f_1^2$ and $f_0 = -f_2$; \label{case:Pl-GH:A}
  \item $f_0 = f_2$.                         \label{case:Pl-GH:P-hat}
 \end{enumerate}
\end{theorem}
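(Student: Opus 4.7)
The plan is to deduce this theorem directly from Theorem~\ref{thm:degree_prescribed_homomorphism}. For the first half, concerning $\plholant{[f_0,f_1,f_2]}{\EQ}$, I would apply Theorem~\ref{thm:degree_prescribed_homomorphism} with $S = \mathbb{Z}^+$, so $\mathcal{G} = \EQ$ and $d = \gcd(S) = 1$. Because $\EQ$ contains $=_k$ for every $k \ge 1$, the first tractability condition $\mathcal{G} \subseteq \{=_1,=_2\}$ fails. With $d = 1$, the condition $f_0 f_2 = -f_1^2 \wedge f_0^d = -f_2^d$ simplifies to condition~\ref{case:Pl-GH:A} of the present theorem, and $f_0^d = f_2^d$ simplifies to condition~\ref{case:Pl-GH:P-hat}. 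The remaining two cases match conditions \ref{case:Pl-GH:degenerate} and~\ref{case:Pl-GH:M} verbatim.

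For the second half, I would apply the holographic transformation by $H$. Valiant's Holant Theorem gives
\[
\plholant{[f_0,f_1,f_2]}{\widehat{\EQ}} \;\equiv_T\; \plholant{[f_0,f_1,f_2] H^{\otimes 2}}{\EQ},
\]
since $T = H$ satisfies $T\,\EQ = \widehat{\EQ}$ (up to overall constants that do not affect the complexity). A direct computation of the row vector $[f_0, f_1, f_2] H^{\otimes 2}$ yields the transformed binary signature $[g_0, g_1, g_2] = [f_0 + 2f_1 + f_2,\; f_0 - f_2,\; f_0 - 2f_1 + f_2]$. Applying the first half to this signature reduces the problem to checking that the set of four tractability conditions is invariant under the substitution $(f_0,f_1,f_2) \mapsto (g_0,g_1,g_2)$.

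The verification is the routine algebraic core of the argument. Condition~\ref{case:Pl-GH:degenerate} is preserved because a short expansion gives $g_0 g_2 - g_1^2 = 4(f_0 f_2 - f_1^2)$, so degeneracy on one side is equivalent to degeneracy on the other. Conditions~\ref{case:Pl-GH:M} and~\ref{case:Pl-GH:P-hat} swap under $H$: since $g_1 = f_0 - f_2$, we have $f_1 = 0 \iff g_0 = g_2$ (checking $g_0 - g_2 = 4f_1$) and symmetrically $f_0 = f_2 \iff g_1 = 0$. Condition~\ref{case:Pl-GH:A} requires the joint check that $f_0 = -f_2$ together with $f_0 f_2 = -f_1^2$ produces $g_0 = -g_2$ together with $g_0 g_2 = -g_1^2$, which is a direct substitution. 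Taken together these verifications show that $(f_0,f_1,f_2)$ lies in the tractable locus if and only if $(g_0,g_1,g_2)$ does, completing the reduction.

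The argument is almost entirely bookkeeping; there is no real obstacle beyond the algebraic verification that the four-condition set is closed under the $H$-transformation. This is precisely the ``apparent coincidence'' noted in the paragraph preceding the theorem, and it is what lets the statement be formulated uniformly for both $\EQ$ and $\widehat{\EQ}$.
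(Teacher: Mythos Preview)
Your proposal is correct and follows essentially the same route as the paper: specialize Theorem~\ref{thm:degree_prescribed_homomorphism} to $\mathcal{G}=\EQ$ (so $d=1$), then handle $\widehat{\EQ}$ via the holographic transformation by $H$ and check that the four tractability conditions are permuted among themselves (conditions~\ref{case:Pl-GH:degenerate} and~\ref{case:Pl-GH:A} are fixed, conditions~\ref{case:Pl-GH:M} and~\ref{case:Pl-GH:P-hat} swap). The paper carries out exactly this argument in the paragraph preceding the theorem statement rather than in a separate proof block; your write-up simply supplies the explicit algebraic verifications that the paper leaves as ``easy to see.''
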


\section{The Complexity of Counting Eulerian Orientations} \label{sec:EO}

Recall the definition of an Eulerian orientation.

\begin{definition}
 Given a graph $G$, an orientation of its edges is an \emph{Eulerian orientation} if for each vertex $v$ of $G$,
 the number of incoming edges of $v$ equals the number of outgoing edges of $v$.
\end{definition}

Counting the number of (unweighted) Eulerian orientations over 4-regular graphs was shown to be $\SHARPP$-hard in Theorem~V.10 of~\cite{HL12}.
We improve this result by showing that this problem remains $\SHARPP$-hard when the input is also planar.
The reduction begins with the problem of evaluating the Tutte polynomial at the point~(3,3), which is $\SHARPP$-hard even for planar graphs.

\begin{theorem}[Theorem~5.1 in~\cite{Ver05}] \label{thm:Tutte}
 For any $x, y \in \mathbb{C}$,
 the problem of computing the Tutte polynomial at $(x,y)$ over planar graphs is $\SHARPP$-hard
 unless $(x - 1) (y - 1) \in \{1, 2\}$ or $(x,y) \in \{(1,1), (-1, -1), (j,j^2), (j^2, j)\}$,
 where $j = e^{2 \pi i / 3}$.
 In each of these exceptional cases,
 the computation can be done in polynomial time.
\end{theorem}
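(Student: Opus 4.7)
The plan is to split the argument into a tractability half and a hardness half. For tractability, each exceptional locus has a known structural reason: on the hyperbola $(x-1)(y-1)=1$ the Tutte polynomial factors multiplicatively over edges, so evaluation is a product computable in linear time on any graph; on $(x-1)(y-1)=2$ the Tutte polynomial coincides (up to an explicit edge-weight prefactor) with the Ising partition function on $G$, which on planar graphs is polynomial-time via Kasteleyn's Pfaffian method; at $(1,1)$ it counts spanning trees, handled by the matrix-tree theorem; at $(-1,-1)$ it has a closed form in terms of the dimension of the bicycle space over $\mathbb{F}_2$; and at $(j,j^2), (j^2,j)$ with $j=e^{2\pi i/3}$ it has an explicit expression involving $\mathbb{Z}_3$-flow counts. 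None of these algorithms requires planarity in an essential way, so they apply in the planar setting as well.

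For hardness, the technique I would use is planarity-preserving gadgetry on edges combined with polynomial interpolation. The two central operations are the \emph{$k$-stretch} (replace each edge by a path of $k$ edges in series) and the \emph{$k$-thickening} (replace each edge by $k$ parallel edges), both of which trivially preserve planarity. Each induces an explicit rational map on the Tutte plane: the stretch multiplies $(x-1)$ by a factor involving $y$, the thickening multiplies $(y-1)$ by a factor involving $x$, so iterating for $k=1,2,\dotsc,N$ produces Tutte evaluations on $G$ at a sequence of points $(x_k,y)$ or $(x,y_k)$ obtained as specializations of a single polynomial in $k$. Running the assumed oracle for $\PlHolant$ at $(x,y)$ on the transformed instances and solving the associated Vandermonde-type system recovers Tutte evaluations at many points from a single oracle call, provided the relevant base is not a root of unity.

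The reduction then fixes a pivot at which planar $\SHARPP$-hardness is already established---for instance the chromatic polynomial at $3$, i.e.~$T(G;-2,0)$, which is $\SHARPP$-hard on planar cubic graphs---and argues that from any point $(x,y)$ outside the listed exceptions, some composition of stretching and thickening followed by interpolation reaches the pivot. The key geometric observation is that both operations preserve the hyperbolic quantity $(x-1)(y-1)$ in a controlled way, so the orbits of the special hyperbolas $(x-1)(y-1)\in\{1,2\}$ are closed under the operations; this is \emph{why} they must appear as tractable exceptions, and the isolated points $(1,1),(-1,-1),(j,j^2),(j^2,j)$ appear as fixed points or short periodic orbits at which the Vandermonde system collapses.

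The main obstacle is exactly this interpolation analysis over $\mathbb{C}$: the failure set for a single stretch or thickening is a root-of-unity locus, infinite in $\mathbb{C}$, and must be enumerated and shown to coincide (or be bypassed using a second, independent gadget) with the listed tractable cases. The second technical delicacy is planarity: every gadget substitution and every anti-gadget used for interpolation must embed in the plane without introducing crossings, which rules out some of the simpler nonplanar reductions from Jaeger--Vertigan--Welsh and forces the use of series/parallel and stretching/thickening operations as the only permissible moves. Once these two obstacles are handled, the dichotomy falls out by exhaustion over the orbit structure of the combined operations.
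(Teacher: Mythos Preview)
The paper does not prove this theorem. It is quoted verbatim as ``Theorem~5.1 in~[Ver05]'' and used as a black box in the proof that counting Eulerian orientations on planar 4-regular graphs is $\SHARPP$-hard. There is therefore no proof in the paper to compare your proposal against.

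That said, your outline is a faithful sketch of the Vertigan argument (itself an adaptation of Jaeger--Vertigan--Welsh to the planar setting). The tractability half is correct as you describe it: the hyperbolas $(x-1)(y-1)\in\{1,2\}$ correspond respectively to the trivially multiplicative case and the planar Ising partition function (Kasteleyn/Pfaffian), and the four isolated points have the combinatorial interpretations you list. The hardness half is also the right shape: $k$-stretch and $k$-thickening are planarity-preserving, act on the Tutte plane by explicit rational maps preserving the hyperbola $(x-1)(y-1)=q$, and the interpolation degenerates exactly along the stated exceptional loci. A planar $\SHARPP$-hard pivot such as counting proper 3-colourings of planar graphs (i.e.\ $T(G;-2,0)$) is available. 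The one point on which your sketch is informal rather than wrong is the claim that the root-of-unity failure loci over $\mathbb{C}$ can always be bypassed by a second gadget; in Vertigan's proof this is where the real work lies, and it requires a careful case analysis rather than a single uniform argument.
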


The first step in the reduction concerns a sum of weighted Eulerian orientations on a medial graph of a planar graph.
Recall the definition of a medial graph.

\begin{definition}[cf.~\cite{BO92}]
 For a connected plane graph $G$ (i.e.~a planar embedding of a planar graph),
 its \emph{medial graph} $H$ has a vertex for each edge of $G$
 and two vertices in $H$ are joined by an edge for each face of $G$ in which their corresponding edges occur consecutively.
\end{definition}

\begin{figure}[t]
 \centering
 \def\medialNodeDist{2.5cm}
 \tikzstyle{open}   = [draw, black, fill=white, shape=circle]
 \tikzstyle{closed} = [draw,        fill,       shape=circle]
 \subcaptionbox{\label{subfig:planar_graph}}{
  \centering
  \begin{tikzpicture}[scale=\scale,transform shape,node distance=\medialNodeDist,semithick]
   \node[closed] (0)              {};
   \node[closed] (1) [right of=0] {};
   \node[closed] (2) [above of=0] {};
   \node[closed] (3) [above of=1] {};
   \node[closed] (4) [above of=2] {};
   \path (0) edge[out=-45, in=-135]               node[external] (m0) {} (1)
             edge[out= 45, in= 135]               node[external] (m1) {} (1)
             edge                                 node[external] (m2) {} (2)
         (1) edge                                 node[external] (m3) {} (3)
         (2) edge                                 node[external] (m4) {} (3)
             edge                                 node[external] (m5) {} (4)
         (3) edge[out=125, in=  55, looseness=30] node[external] (m6) {} (3);
   \path (m0) edge[white, densely dashed, out= 135, in=-135]                (m1)
              edge[white, densely dashed, out=  45, in= -45]                (m1)
              edge[white, densely dashed, out=-145, in=-135, looseness=1.7] (m2)
              edge[white, densely dashed, out= -35, in= -45, looseness=1.7] (m3)
         (m1) edge[white, densely dashed]                                   (m2)
              edge[white, densely dashed]                                   (m3)
         (m2) edge[white, densely dashed]                                   (m4)
              edge[white, densely dashed, out= 135, in=-135]                (m5)
         (m3) edge[white, densely dashed]                                   (m4)
              edge[white, densely dashed, out=  45, in=  15]                (m6)
         (m4) edge[white, densely dashed]                                   (m5)
              edge[white, densely dashed, out=  90, in= 165]                (m6)
         (m5) edge[white, densely dashed, out= 125, in=  55, looseness=30]  (m5)
         (m6) edge[white, densely dashed, out=-125, in= -55, looseness=15]  (m6);
  \end{tikzpicture}}
 \qquad
 \qquad
 \subcaptionbox{\label{subfig:superimposed}}{
  \centering
  \begin{tikzpicture}[scale=\scale,transform shape,node distance=\medialNodeDist,semithick]
   \node[closed] (0)              {};
   \node[closed] (1) [right of=0] {};
   \node[closed] (2) [above of=0] {};
   \node[closed] (3) [above of=1] {};
   \node[closed] (4) [above of=2] {};
   \path (0) edge[out=-45, in=-135]               node[open] (m0) {} (1)
             edge[out= 45, in= 135]               node[open] (m1) {} (1)
             edge                                 node[open] (m2) {} (2)
         (1) edge                                 node[open] (m3) {} (3)
         (2) edge                                 node[open] (m4) {} (3)
             edge                                 node[open] (m5) {} (4)
         (3) edge[out=125, in=  55, looseness=30] node[open] (m6) {} (3);
   \path (m0) edge[densely dashed, out= 135, in=-135]                (m1)
              edge[densely dashed, out=  45, in= -45]                (m1)
              edge[densely dashed, out=-145, in=-135, looseness=1.7] (m2)
              edge[densely dashed, out= -35, in= -45, looseness=1.7] (m3)
         (m1) edge[densely dashed]                                   (m2)
              edge[densely dashed]                                   (m3)
         (m2) edge[densely dashed]                                   (m4)
              edge[densely dashed, out= 135, in=-135]                (m5)
         (m3) edge[densely dashed]                                   (m4)
              edge[densely dashed, out=  45, in=  15]                (m6)
         (m4) edge[densely dashed]                                   (m5)
              edge[densely dashed, out=  90, in= 165]                (m6)
         (m5) edge[densely dashed, out= 125, in=  55, looseness=30]  (m5)
         (m6) edge[densely dashed, out=-125, in= -55, looseness=15]  (m6);
  \end{tikzpicture}}
 \qquad
 \qquad
 \subcaptionbox{\label{subfig:medial_graph}}{
  \centering
  \begin{tikzpicture}[scale=\scale,transform shape,node distance=\medialNodeDist,semithick]
   \node[external] (0)              {};
   \node[external] (1) [right of=0] {};
   \node[external] (2) [above of=0] {};
   \node[external] (3) [above of=1] {};
   \node[external] (4) [above of=2] {};
   \path (0) edge[white, out=-45, in=-135]               node[open] (m0) {} (1)
             edge[white, out= 45, in= 135]               node[open] (m1) {} (1)
             edge[white]                                 node[open] (m2) {} (2)
         (1) edge[white]                                 node[open] (m3) {} (3)
         (2) edge[white]                                 node[open] (m4) {} (3)
             edge[white]                                 node[open] (m5) {} (4)
         (3) edge[white, out=125, in=  55, looseness=30] node[open] (m6) {} (3);
   \path (m0) edge[densely dashed, out= 135, in=-135]                (m1)
              edge[densely dashed, out=  45, in= -45]                (m1)
              edge[densely dashed, out=-145, in=-135, looseness=1.7] (m2)
              edge[densely dashed, out= -35, in= -45, looseness=1.7] (m3)
         (m1) edge[densely dashed]                                   (m2)
              edge[densely dashed]                                   (m3)
         (m2) edge[densely dashed]                                   (m4)
              edge[densely dashed, out= 135, in=-135]                (m5)
         (m3) edge[densely dashed]                                   (m4)
              edge[densely dashed, out=  45, in=  15]                (m6)
         (m4) edge[densely dashed]                                   (m5)
              edge[densely dashed, out=  90, in= 165]                (m6)
         (m5) edge[densely dashed, out= 125, in=  55, looseness=30]  (m5)
         (m6) edge[densely dashed, out=-125, in= -55, looseness=15]  (m6);
  \end{tikzpicture}}
 \caption{A plane graph~(\protect\subref{subfig:planar_graph}), its medial graph~(\protect\subref{subfig:medial_graph}),
 and the two graphs superimposed~(\protect\subref{subfig:superimposed}).}
 \label{fig:medial_graph_example}
\end{figure}

An example of a plane graph and its medial graph are given in Figure~\ref{fig:medial_graph_example}.
Notice that a medial graph of a planar graph is always a planar 4-regular graph.
Las Vergnas~\cite{Ver88} connected the evaluation of the Tutte polynomial of a graph $G$ at the point~(3,3)
with a sum of weighted Eulerian orientations on a medial graph of $G$.

\begin{theorem}[Theorem~2.1 in~\cite{Ver88}] \label{thm:TutteToWEO}
 Let $G$ be a connected plane graph and let $\mathscr{O}(H)$ be the set of all Eulerian orientations in the medial graph $H$ of $G$.
 Then
 \begin{align}
  2 \cdot \operatorname{T}(G; 3, 3) = \sum_{O \in \mathscr{O}(H)} 2^{\beta(O)}, \label{equ:TutteAndWEO}
 \end{align}
 where $\beta(O)$ is the number of saddle vertices in the orientation $O$,
 i.e.~the number of vertices in which the edges are oriented ``in, out, in, out'' in cyclic order.
\end{theorem}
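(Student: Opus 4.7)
The plan is to prove (\ref{equ:TutteAndWEO}) by interpreting both sides as state sums on the medial graph $H$. At each 4-valent vertex of $H$ there are three local transitions, pairing the four incident edges into two pairs: two non-crossing \emph{smoothings} and one \emph{saddle} (crossing) pairing. A \emph{state} of $H$ assigns one transition to each vertex; a state is \emph{pure-smoothing} if no saddle transition is used.

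First, I would derive the identity
$$2\,T(G;3,3) \;=\; \sum_{\text{pure-smoothing } \sigma} 2^{\kappa(\sigma)},$$
where $\kappa(\sigma)$ is the number of closed planar curves produced by smoothing $H$ according to $\sigma$. The argument starts from the rank-nullity form
$$T(G;3,3) \;=\; \sum_{A \subseteq E(G)} 2^{r(E)-r(A)+|A|-r(A)},$$
and uses the natural bijection sending $A \subseteq E(G)$ to the pure-smoothing state that smooths the vertex of $H$ corresponding to edge $e \in E(G)$ one way if $e \in A$ and the other way otherwise. A standard planar-duality application of Euler's formula shows that the number of resulting closed curves equals $c(A) + n(A) = r(E) - r(A) + |A| - r(A) + 1$ for connected $G$, so the pure-smoothing state sum reproduces $2T(G;3,3)$ (the factor of $2$ absorbing the $+1$).

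Second, I would establish
$$\sum_{O \in \mathscr{O}(H)} 2^{\beta(O)} \;=\; \sum_{\text{pure-smoothing } \sigma} 2^{\kappa(\sigma)}.$$
Each Eulerian orientation $O$ induces a state $\sigma_O$ at every vertex by pairing its two in-edges together and its two out-edges together: at a saddle vertex of $O$ this pairing is the crossing transition, and at every other vertex it is one of the two smoothings. Using a local identity that expands each saddle transition as a combination of the two smoothings (with weights tracking the change in closed-curve count), the saddle-weighted orientation sum can be rewritten as a pure-smoothing state sum. The factor $2^{\beta(O)}$ exactly accounts for the two-fold local freedom at each saddle vertex when passing between Eulerian orientations and pure-smoothing states, so that each pure-smoothing state $\sigma$ contributes $2^{\kappa(\sigma)}$ after reorganization. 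Combining this with the first identity gives (\ref{equ:TutteAndWEO}).

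The main obstacle is the second stage, specifically the local saddle-expansion identity and its global consequences. Concretely, one must verify (via planar-local analysis) that replacing a saddle transition at a vertex by either of its two smoothings changes $\kappa$ by $\pm 1$ in a controlled way, and then show that summing these local contributions against the factor $2^{\beta(O)}$ recovers precisely the pure-smoothing state sum with weight $2^{\kappa(\sigma)}$. The planar-duality curve count of the first stage is classical but also requires careful bookkeeping — identifying the closed curves produced by $A$ with the boundaries of the planar regions determined by $A$ and by $E \setminus A$ in $G^*$, and then invoking Euler's formula. Once these two identities are in place, they combine directly to yield the theorem.
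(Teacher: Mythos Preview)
The paper does not prove this theorem; it is quoted verbatim as Theorem~2.1 of Las Vergnas~\cite{Ver88} and used as a black box in the reduction of Theorem~\ref{thm:4reg_planar_EO_hard}. So there is no ``paper's own proof'' to compare against.

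That said, your outline is essentially the standard argument. The first stage (the identification of $2\,T(G;3,3)$ with the pure-smoothing state sum $\sum_\sigma 2^{\kappa(\sigma)}$ via the bijection between edge subsets $A\subseteq E(G)$ and smoothing states of the medial graph, together with Euler's formula to count curves) is correct and classical.

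Your second stage is right in spirit but phrased in a way that makes it look harder than it is. You do not need a ``local saddle-expansion identity'' tracking changes in $\kappa$. The equality $\sum_O 2^{\beta(O)} = \sum_\sigma 2^{\kappa(\sigma)}$ is a straight double count of pairs $(O,\sigma)$ where the pure-smoothing state $\sigma$ is \emph{compatible} with the Eulerian orientation $O$, meaning that at every vertex each of the two smoothed strands carries one incoming and one outgoing edge. On one side: at a non-saddle vertex of $O$ (cyclic pattern in, in, out, out) exactly one of the two planar smoothings is compatible, while at a saddle vertex (in, out, in, out) both are, so $O$ has $2^{\beta(O)}$ compatible states. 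On the other side: given $\sigma$, orienting each of its $\kappa(\sigma)$ closed curves independently (two choices each) produces exactly the Eulerian orientations compatible with $\sigma$, giving $2^{\kappa(\sigma)}$ of them. Equating the two counts finishes the argument, with no need to analyse how $\kappa$ changes under a local resmoothing.
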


Although the medial graph depends on a particular embedding of the planar graph $G$,
the right-hand side of~(\ref{equ:TutteAndWEO}) is invariant under different embeddings of $G$.
This follows from~(\ref{equ:TutteAndWEO}) and the fact that the Tutte polynomial does not depend on the embedding of $G$.

In addition to these two theorems,
our proof also uses two definitions from~\cite{CGW13}.

\begin{definition}[Definition~6.1 in~\cite{CGW13}]
 A 4-by-4 matrix is \emph{redundant} if its middle two rows and middle two columns are the same.
\end{definition}

An example of a redundant matrix is the signature matrix of a symmetric arity~4 signature.

\begin{definition}[Definition~6.2 in~\cite{CGW13}]
 The \emph{signature matrix} of a symmetric arity~4 signature $f = [f_0, f_1, f_2, f_3, f_4]$ is
 \begin{align*}
  M_f =
  \begin{bmatrix}
   f_0 & f_1 & f_1 & f_2\\
   f_1 & f_2 & f_2 & f_3\\
   f_1 & f_2 & f_2 & f_3\\
   f_2 & f_3 & f_3 & f_4
  \end{bmatrix}.
 \end{align*}
 This definition extends to an asymmetric signature $g$ as
 \begin{align*}
  M_g =
  \begin{bmatrix}
   g^{0000} & g^{0010} & g^{0001} & g^{0011}\\
   g^{0100} & g^{0110} & g^{0101} & g^{0111}\\
   g^{1000} & g^{1010} & g^{1001} & g^{1011}\\
   g^{1100} & g^{1110} & g^{1101} & g^{1111}
  \end{bmatrix}.
 \end{align*}
 When we present $g$ as an $\mathcal{F}$-gate, we order the four external edges ABCD counterclockwise.
 In $M_g$, the row index bits are ordered AB and the column index bits are ordered DC, in a reverse way.
 This is for convenience so that the signature matrix of the linking of two arity 4 $\mathcal{F}$-gates
 is the matrix product of the signature matrices of the two $\mathcal{F}$-gates.

 If $M_g$ is redundant, we also define the \emph{compressed signature matrix} of $g$ as
 \[
  \widetilde{M_g}
  =
  \begin{bmatrix}
   1 & 0 & 0 & 0\\
   0 & \frac{1}{2} & \frac{1}{2} & 0\\
   0 & 0 & 0 & 1
  \end{bmatrix}
  M_g
  \begin{bmatrix}
   1 & 0 & 0\\
   0 & 1 & 0\\
   0 & 1 & 0\\
   0 & 0 & 1
  \end{bmatrix}.
 \]
\end{definition}

Now we can prove our hardness result.

\begin{theorem} \label{thm:4reg_planar_EO_hard}
 \textsc{\#Eulerian-Orientations} is $\SHARPP$-hard for planar 4-regular graphs.
\end{theorem}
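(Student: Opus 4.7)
The plan is to reduce from evaluating the Tutte polynomial of a planar graph at $(3,3)$, which is $\SHARPP$-hard by Theorem~\ref{thm:Tutte}, and to route through Theorem~\ref{thm:TutteToWEO}. Given a connected plane graph $G$, its medial graph $H$ is planar and $4$-regular, and $2\operatorname{T}(G;3,3)$ equals the weighted sum $\sum_{O \in \mathscr{O}(H)} 2^{\beta(O)}$. In the Holant framework, this weighted sum is $\Holant_\Omega$ for the signature grid $\Omega$ that assigns to every vertex of $H$ the arity-$4$ signature $g$ taking value $2$ on the two ``saddle'' $2$-in-$2$-out cyclic configurations, value $1$ on the four ``parallel'' ones, and $0$ elsewhere. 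The signature $g$ is not symmetric -- its signature matrix has distinct middle rows, hence is not redundant -- but it is invariant under cyclic rotation $C_4$ of its four inputs. Meanwhile, unweighted \#EO on planar $4$-regular graphs is exactly $\PlHolant(f)$ for the symmetric signature $f = [0,0,1,0,0]$, so the task is to compute $\Holant_\Omega$ using a $\PlHolant(f)$ oracle.

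The reduction is the planar interpolation with a rotationally invariant signature flagged in the introduction. First I would construct a family $\{N_k\}_{k \ge 1}$ of planar $\{f\}$-gates whose external structure is invariant under the cyclic group $C_4$ acting on the four dangling edges -- for instance, a small ring of $f$'s into which additional copies of $f$ are inserted symmetrically along four radial spokes. Because every internal copy of $f$ enforces a local $2$-in-$2$-out constraint, a flow-conservation argument shows that the signature of $N_k$ is supported only on $2$-in-$2$-out external configurations, and the $C_4$-symmetry of the construction then forces this signature to be determined by a single pair $(p_k, s_k) \in \mathbb{C}^2$, namely its values on parallel and on saddle configurations. Inserting one more symmetric layer of $f$'s corresponds to applying a fixed $2 \times 2$ transition matrix $A$ to $(p_k, s_k)$, so the pairs trace out the orbit of $A$ starting from $(p_1, s_1) = (1,1)$.

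Given such a family, fix any planar $4$-regular $H$ on $n$ vertices and let $\Omega_k$ be the planar $\{f\}$-signature grid obtained from $\Omega$ by replacing each occurrence of $g$ by $N_k$. The oracle supplies $\Holant_{\Omega_k}$ for every $k$, and grouping the contribution of each Eulerian orientation by its number of parallel and saddle vertices gives
\[
 \Holant_{\Omega_k} = \sum_{a + b = n} c_{a,b}\, p_k^{\,a}\, s_k^{\,b},
\]
where $c_{a,b}$ counts Eulerian orientations of $H$ with exactly $a$ parallel and $b$ saddle vertices. After dividing by $s_k^n$, the right-hand side is a polynomial of degree $n$ in the ratio $r_k = p_k / s_k$, while the Tutte-target value $\sum_O 2^{\beta(O)}$ is $\sum_{a+b=n} c_{a,b}\cdot 1^a\cdot 2^b$. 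Recovering all $c_{a,b}$ by Vandermonde interpolation then succeeds as soon as the sequence $r_1, r_2, \dots$ takes at least $n + 1$ distinct values, from which $2\operatorname{T}(G;3,3)$ follows in polynomial time.

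The main obstacle will be establishing the spectral property of $A$ that underwrites the interpolation. If the two eigenvalues of $A$ are distinct and their ratio is not a root of unity, then $\{r_k\}$ already takes infinitely many values and the argument goes through unconditionally. The genuinely delicate case is when the ratio happens to be a root of unity -- the archetypal failure mode of interpolation over $\mathbb{C}$ that the introduction explicitly warns of. I would handle it by either (i) precomputing $A$ from an explicit choice of $N_k$ and verifying directly that its eigenvalue ratio has irrational argument, or (ii) producing a second $C_4$-symmetric planar $\{f\}$-gate whose transition matrix $B$ has spectrum incommensurate with that of $A$, so that composing $N_k$'s and $B$'s sweeps out enough of the $2$-dimensional $C_4$-invariant, $2$-in-$2$-out subspace to realize $g$ up to a nonzero scalar or, failing that, to supply enough distinct sample ratios for the Vandermonde extraction. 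Designing at least one such non-degenerate planar family and verifying its spectrum is the core technical step.
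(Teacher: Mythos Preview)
Your reduction outline matches the paper's: Tutte at $(3,3)$, then Theorem~\ref{thm:TutteToWEO}, then interpolate the saddle-weighted signature from gadgets built out of the unweighted $2$-in-$2$-out constraint. One correction: unweighted \#EO is $\plholant{{\ne}_2}{[0,0,1,0,0]}$, not $\PlHolant([0,0,1,0,0])$; the ${\ne}_2$ on edges is what encodes an orientation, and dropping it gives a genuinely different problem. Your gadgets must therefore be $\{{\ne}_2,[0,0,1,0,0]\}$-gates, with compositions passing through ${\ne}_2$ vertices.

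The substantive gap is that you explicitly defer the only hard step---producing a concrete gadget family and verifying its spectrum---while the paper's main idea is precisely what makes that step trivial. Rather than interpolate in the orientation basis, the paper first applies a holographic transformation by $Z = \left[\begin{smallmatrix}1 & 1\\ i & -i\end{smallmatrix}\right]$, which sends ${\ne}_2$ to $=_2$, turns $[0,0,1,0,0]$ into the \emph{fully symmetric} signature $[3,0,1,0,3]$, and turns the saddle-weighted $g$ into a rotationally symmetric $\hat f$ with $M_{\hat f} = \left[\begin{smallmatrix}2&0&0&1\\0&1&0&0\\0&0&1&0\\1&0&0&2\end{smallmatrix}\right]$. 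One explicit gadget---the planar tetrahedron with $[3,0,1,0,3]$ at every vertex---then yields a $\hat g$ whose $4\times4$ signature matrix is simultaneously diagonalizable with $M_{\hat f}$, with integer eigenvalues $(1,6,6,13)$ against $(1,1,1,3)$. Since the numbers $6^k 13^\ell$ are pairwise distinct, the Vandermonde system has full rank and the root-of-unity obstruction you worry about never arises. Your plan to stay in the orientation basis and track a $2$-dimensional $(p,s)$-state under a $C_4$-equivariant recursion is not unreasonable, but you have neither written down a concrete planar gadget nor the matrix $A$ nor its eigenvalues, so nothing is yet proved; the $Z$-transformation is the missing idea that converts your plan into a short computation.
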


\begin{proof}
 We reduce calculating the right-hand side of~(\ref{equ:TutteAndWEO}) to $\plholant{{\ne}_2}{[0,0,1,0,0]}$,
 which denotes the problem of counting the number of Eulerian orientations over planar 4-regular graphs as a bipartite Holant problem.
 Then by Theorem~\ref{thm:Tutte} and Theorem~\ref{thm:TutteToWEO}, we conclude that $\plholant{{\ne}_2}{[0,0,1,0,0]}$ is $\SHARPP$-hard.
 
 The right-hand side of~(\ref{equ:TutteAndWEO}) is the bipartite Holant problem $\plholant{{\ne_2}}{f}$,
 where the signature matrix of $f$ is
 \[
  M_f
  =
  \begin{bmatrix}
   0 & 0 & 0 & 1 \\
   0 & 1 & 2 & 0 \\
   0 & 2 & 1 & 0 \\
   1 & 0 & 0 & 0
  \end{bmatrix}.
 \]
 We perform a holographic transformation by $Z = \left[\begin{smallmatrix} 1 & 1 \\ i & -i \end{smallmatrix}\right]$ to get
 \begin{align*}
  \plholant{{\ne}_2}{f}
  &\equiv_T \plholant{[0,1,0] (Z^{-1})^{\otimes 2}}{Z^{\otimes 4} f}\\
  &\equiv_T \plholant{[1,0,1] / 2}{4 \hat{f}}\\
  &\equiv_T \PlHolant(\hat{f}),
 \end{align*}
 where the signature matrix of $\hat{f}$ is
 \[
  M_{\hat{f}}
  =
  \begin{bmatrix}
   2 & 0 & 0 & 1 \\
   0 & 1 & 0 & 0 \\
   0 & 0 & 1 & 0 \\
   1 & 0 & 0 & 2
  \end{bmatrix}.
 \]
 We also perform the same holographic transformation by $Z$ on our target counting problem $\plholant{{\ne}_2}{[0,0,1,0,0]}$ to get
 \begin{align*}
  \plholant{{\ne}_2}{[0,0,1,0,0]}
  &\equiv_T \plholant{[0,1,0] (Z^{-1})^{\otimes 2}}{Z^{\otimes 4} [0,0,1,0,0]}\\
  &\equiv_T \plholant{[1,0,1] / 2}{2 [3,0,1,0,3]}\\
  &\equiv_T \PlHolant([3,0,1,0,3]).
 \end{align*}
 Using the planar tetrahedron gadget in Figure~\ref{fig:gadget:planar_tetrahedron},
 we assign $[3,0,1,0,3]$ to every vertex and obtain a gadget with signature $32 \hat{g}$,
 where the signature matrix of $\hat{g}$ is
 \[
  M_{\hat{g}}
  =
  \frac{1}{2}
  \begin{bmatrix}
   19 & 0 & 0 &  7 \\
    0 & 7 & 5 &  0 \\
    0 & 5 & 7 &  0 \\
    7 & 0 & 0 & 19
  \end{bmatrix}.
 \]

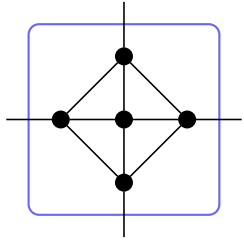
\begin{figure}[t]
 \centering
 \begin{tikzpicture}[scale=\scale,transform shape,node distance=\nodeDist,semithick]
  \node[external] (0)                    {};
  \node[internal] (1) [right       of=0] {};
  \node[internal] (2) [right       of=1] {};
  \node[internal] (3) [above       of=2] {};
  \node[external] (4) [above       of=3] {};
  \node[internal] (5) [below       of=2] {};
  \node[external] (6) [below       of=5] {};
  \node[internal] (7) [right       of=2] {};
  \node[external] (8) [right       of=7] {};
  \path (1) edge (2)
            edge (3)
            edge (5)
        (2) edge (3)
            edge (5)
            edge (7)
        (3) edge (7)
        (5) edge (7);
  \path (0) edge node[near end]   (e1) {} (1)
        (3) edge node[near start] (e2) {} (4)
        (5) edge node[near start] (e3) {} (6)
        (7) edge node[near start] (e4) {} (8);
  \begin{pgfonlayer}{background}
   \node[draw=\borderColor,thick,rounded corners,fit = (e1) (e2) (e3) (e4)] {};
  \end{pgfonlayer}
 \end{tikzpicture}
 \caption{The planar tetrahedron gadget. Each vertex is assigned $[3, 0, 1, 0, 3]$.}
 \label{fig:gadget:planar_tetrahedron}
\end{figure}

\begin{figure}[b]
 \centering
 \captionsetup[subfigure]{labelformat=empty}
 \subcaptionbox{$N_1$}{
  \begin{tikzpicture}[scale=\scale,transform shape,node distance=\nodeDist,semithick]
   \node[external] (0)                    {};
   \node[external] (1) [right       of=0] {};
   \node[internal] (2) [below right of=1] {};
   \node[external] (3) [below left  of=2] {};
   \node[external] (4) [left        of=3] {};
   \node[external] (5) [above right of=2] {};
   \node[external] (6) [right       of=5] {};
   \node[external] (7) [below right of=2] {};
   \node[external] (8) [right       of=7] {};
   \path (0) edge[out=   0, in=135] (2)
         (2) edge[out=-135, in=  0] (4)
             edge[out=  45, in=180] (6)
             edge[out= -45, in=180] (8);
   \begin{pgfonlayer}{background}
    \node[draw=\borderColor,thick,rounded corners,fit = (1) (3) (5) (7)] {};
   \end{pgfonlayer}
  \end{tikzpicture}}
 \qquad
 \subcaptionbox{$N_2$}{
  \begin{tikzpicture}[scale=\scale,transform shape,node distance=\nodeDist,semithick]
   \node[external]  (0)                    {};
   \node[external]  (1) [right       of=0] {};
   \node[internal]  (2) [below right of=1] {};
   \node[external]  (3) [below left  of=2] {};
   \node[external]  (4) [left        of=3] {};
   \node[external]  (5) [right       of=2] {};
   \node[internal]  (6) [right       of=5] {};
   \node[external]  (7) [above right of=6] {};
   \node[external]  (8) [right       of=7] {};
   \node[external]  (9) [below right of=6] {};
   \node[external] (10) [right       of=9] {};
   \path (0) edge[out=   0, in= 135]  (2)
         (2) edge[out=-135, in=   0]  (4)
             edge[out=  45, in= 135]  (6)
             edge[out= -45, in=-135]  (6)
         (6) edge[out=  45, in= 180]  (8)
             edge[out= -45, in= 180] (10);
   \begin{pgfonlayer}{background}
    \node[draw=\borderColor,thick,rounded corners,fit = (1) (3) (7) (9)] {};
   \end{pgfonlayer}
  \end{tikzpicture}}
 \qquad
 \subcaptionbox{$N_{s+1}$}{
  \begin{tikzpicture}[scale=\scale,transform shape,node distance=\nodeDist,semithick]
   \node[external]  (0)                     {};
   \node[external]  (1) [above left  of=0]  {};
   \node[external]  (2) [below left  of=0]  {};
   \node[external]  (3) [below left  of=1]  {};
   \node[external]  (4) [below left  of=3]  {};
   \node[external]  (5) [above left  of=3]  {};
   \node[external]  (6) [left        of=4]  {};
   \node[external]  (7) [left        of=5]  {};
   \node[external]  (8) [right       of=0]  {};
   \node[internal]  (9) [right       of=8]  {};
   \node[external] (10) [above right of=9]  {};
   \node[external] (11) [below right of=9]  {};
   \node[external] (12) [right       of=10] {};
   \node[external] (13) [right       of=11] {};
   \path let
          \p1 = (1),
          \p2 = (2)
         in
          node[external] at (\x1, \y1 / 2 + \y2 / 2) {\Huge $N_s$};
   \path let
          \p1 = (0)
         in
          node[external] (14) at (\x1 + 2, \y1 + 10) {};
   \path let
          \p1 = (0)
         in
          node[external] (15) at (\x1 + 2, \y1 - 10) {};
   \path let
          \p1 = (3)
         in
          node[external] (16) at (\x1 - 2, \y1 + 10) {};
   \path let
          \p1 = (3)
         in
          node[external] (17) at (\x1 - 2, \y1 - 10) {};
   \path (7) edge[out=   0, in=135] (16)
        (17) edge[out=-135, in=  0]  (6)
        (14) edge[out=  35, in=135]  (9)
         (9) edge[out=-135, in=-35] (15)
             edge[out=  45, in=180] (12)
             edge[out= -45, in=180] (13);
   \begin{pgfonlayer}{background}
    \node[draw=\borderColor,thick,densely dashed,rounded corners,fit = (0) (1.south) (2.north) (3)] {};
    \node[draw=\borderColor,thick,rounded corners,fit = (4) (5) (10) (11)] {};
   \end{pgfonlayer}
  \end{tikzpicture}}
 \caption{Recursive construction to interpolate $\hat{f}$. The vertices are assigned $\hat{g}$.}
 \label{fig:gadget:arity4:interpolate_WEO}
\end{figure}
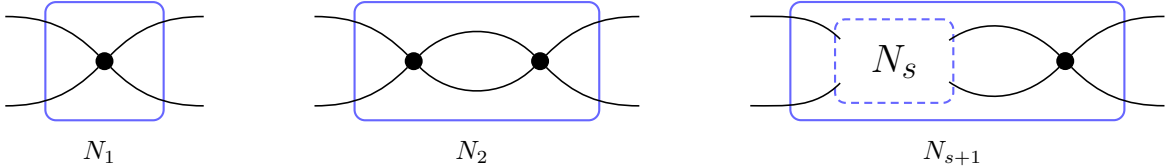
 
 Now we show how to reduce $\PlHolant(\hat{f})$ to $\PlHolant(\hat{g})$ by interpolation. 
 Consider an instance $\Omega$ of $\PlHolant(\hat{f})$.
 Suppose that $\hat{f}$ appears $n$ times in $\Omega$.
 We construct from $\Omega$ a sequence of instances $\Omega_s$ of $\Holant(\hat{g})$ indexed by $s \ge 1$.
 We obtain $\Omega_s$ from $\Omega$ by replacing each occurrence of $\hat{f}$ with the gadget $N_s$ in
 Figure~\ref{fig:gadget:arity4:interpolate_WEO} with $\hat{g}$ assigned to all vertices.
 Although $\hat{f}$ and $\hat{g}$ are asymmetric signatures, they are invariant under a cyclic permutation of their inputs.
 Thus, it is unnecessary to specify which edge corresponds to which input.
 We call such signatures \emph{rotationally symmetric}.
 
 To obtain $\Omega_s$ from $\Omega$,
 we effectively replace $M_{\hat{f}}$ with $M_{N_s} = (M_{\hat{g}})^s$,
 the $s$th power of the signature matrix $M_{\hat{g}}$.
 Let
 \[
  T
  =
  \begin{bmatrix}
    0 & 0 &  1 & 1 \\
    1 & 1 &  0 & 0 \\
   -1 & 1 &  0 & 0 \\
    0 & 0 & -1 & 1
  \end{bmatrix}.
 \]
 Then
 \[
  M_{\hat{f}}
  =
  T \Lambda_{\hat{f}} T^{-1}
  =
  T
  \begin{bmatrix}
   1 & 0 & 0 & 0 \\
   0 & 1 & 0 & 0 \\
   0 & 0 & 1 & 0 \\
   0 & 0 & 0 & 3
  \end{bmatrix}
  T^{-1}
  \qquad \text{and} \qquad
  M_{\hat{g}}
  =
  T \Lambda_{\hat{g}} T^{-1}
  =
  T
  \begin{bmatrix}
   1 & 0 & 0 &  0 \\
   0 & 6 & 0 &  0 \\
   0 & 0 & 6 &  0 \\
   0 & 0 & 0 & 13
  \end{bmatrix}
  T^{-1}.
 \]
 We can view our construction of $\Omega_s$ as first replacing each $M_{\hat{f}}$ by $T \Lambda_{\hat{f}} T^{-1}$ to obtain a signature grid $\Omega'$,
 which does not change the Holant value,
 and then replacing each $\Lambda_{\hat{f}}$ with $\Lambda_{\hat{g}}^s$.
 We stratify the assignments in $\Omega'$ based on the assignment to $\Lambda_{\hat{f}}$.
 We only need to consider the assignments to $\Lambda_{\hat{f}}$ that assign
 \begin{itemize}
  \item 0000 $j$ many times,
  \item 0110 or 1001 $k$ many times, and
  \item 1111 $\ell$ many times.
 \end{itemize}
 Let $c_{j k \ell}$ be the sum over all such assignments of the products of evaluations from $T$ and $T^{-1}$ but excluding $\Lambda_{\hat{f}}$ on $\Omega'$.
 Then
 \[\PlHolant_\Omega = \sum_{j + k + \ell = n} 3^\ell c_{j k \ell}\]
 and the value of the Holant on $\Omega_s$, for $s \ge 1$, is
 \[\PlHolant_{\Omega_s} = \sum_{j + k + \ell = n} (6^k 13^\ell)^s c_{j k \ell}.\]
 This coefficient matrix in the linear system involving $\PlHolant_{\Omega_s}$ is Vandermonde and of full rank
 since for any $0 \le k + \ell \le n$ and $0 \le k' + \ell' \le n$ such that $(k, \ell) \ne (k', \ell')$, $6^k 13^\ell \neq 6^{k'} 13^{\ell'}$.
 Therefore, we can solve the linear system for the unknown $c_{j k \ell}$'s and obtain the value of $\Holant_\Omega$.
\end{proof}

The previous proof can be easily modified to reduce from \#EO over 4-regular graphs by interpolating the so-called crossover signature.
Conceptually, the current proof is simpler because
the $\SHARPP$-hardness proof for \#EO over 4-regular graphs in~\cite{HL12} reduces from the same starting point as our current proof.

One of our main results in this paper is a dichotomy for $\PlHolant(f)$ when $f$ is a symmetric arity~4 signature with complex weights.
This dichotomy uses the $\SHARPP$-hardness of counting Eulerian orientations over planar 4-regular graphs in a crucial way.
In~\cite{CGW13},
it was shown that most arity~4 signatures define a $\SHARPP$-hard Holant problem by a reduction from counting Eulerian orientations over 4-regular graphs
(see Lemmas~6.4 and~6.6 in~\cite{CGW13}).
Although the reductions were planar,
$\SHARPP$-hardness over planar 4-regular graphs did not follow because the complexity of counting Eulerian orientations over such graphs was unknown.
Theorem~\ref{thm:4reg_planar_EO_hard} shows that this problem is $\SHARPP$-hard.
Therefore, we obtain the planar version of Corollary~6.7 in~\cite{CGW13}.

\begin{corollary} \label{cor:arity4:nonsingular_compressed_hard}
 Let $f$ be an arity 4 signature with complex weights.
 If $M_f$ is redundant and $\widetilde{M_f}$ is nonsingular,
 then $\PlHolant(f)$ is $\SHARPP$-hard.
\end{corollary}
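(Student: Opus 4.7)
The plan is to lift Corollary~6.7 of~\cite{CGW13} from arbitrary graphs to the planar setting; as the paragraph preceding the statement makes clear, the corollary is essentially the planarized restatement of that earlier result, and Theorem~\ref{thm:4reg_planar_EO_hard} supplies the one missing ingredient. First I would revisit Lemmas~6.4 and~6.6 of~\cite{CGW13}, which together show that whenever $M_f$ is redundant and $\widetilde{M_f}$ is nonsingular, one can polynomially interpolate the arity~$4$ Eulerian orientation signature $[0,0,1,0,0]$ from copies of $f$. The interpolation proceeds by stringing together $s$ copies of $f$ into a chain-like $\{f\}$-gate (directly analogous to the $N_s$ construction of Figure~\ref{fig:gadget:arity4:interpolate_WEO}), whose signature matrix is $(M_f)^s$; conjugating by the diagonalizer of $\widetilde{M_f}$ produces a full-rank Vandermonde system in the stratified unknown coefficients, which can be solved in polynomial time to recover the target Holant value.

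The second step is to check that every gadget and substitution used in that argument can be realized in the plane. A chain of copies of $f$ joined along pairs of parallel edges in a rotationally consistent way is clearly planar, and substituting such a planar $\{f\}$-gate for a single occurrence of $[0,0,1,0,0]$ in a planar signature grid preserves planarity. Hence the interpolation becomes $\PlHolant([0,0,1,0,0]) \le_T \PlHolant(f)$. Combining with Theorem~\ref{thm:4reg_planar_EO_hard}, which establishes that counting Eulerian orientations over planar $4$-regular graphs (equivalently $\plholant{{\ne}_2}{[0,0,1,0,0]}$ after the standard edge subdivision) is $\SHARPP$-hard, yields the claimed $\SHARPP$-hardness of $\PlHolant(f)$.

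The main obstacle I expect is the degenerate eigenvalue cases of the interpolation: if the eigenvalues of $\widetilde{M_f}$ have ratios that are roots of unity, or otherwise collapse the rank of the Vandermonde matrix, the chain construction alone does not suffice. These cases are already dispatched in~\cite{CGW13} by auxiliary constructions, notably the anti-gadget technique, and the resulting gadgets are planar in the source; however, one must transcribe those case analyses carefully to confirm that no nonplanar crossover is quietly invoked. Apart from this bookkeeping, the proof is essentially a citation of~\cite{CGW13} combined with the new planar hardness starting point given by Theorem~\ref{thm:4reg_planar_EO_hard}.
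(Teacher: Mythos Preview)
Your overall approach is exactly the paper's: the corollary is stated without a separate proof, as the planarized restatement of Corollary~6.7 in~\cite{CGW13}, justified by the preceding paragraph's observation that the reductions in Lemmas~6.4 and~6.6 of~\cite{CGW13} are already planar and that Theorem~\ref{thm:4reg_planar_EO_hard} now supplies the planar hardness source.

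One small slip to fix: you write the interpolation as $\PlHolant([0,0,1,0,0]) \le_T \PlHolant(f)$ and then invoke Theorem~\ref{thm:4reg_planar_EO_hard}. But \#EO is $\plholant{{\ne}_2}{[0,0,1,0,0]}$, whereas the standard edge subdivision gives $\PlHolant([0,0,1,0,0]) \equiv_T \plholant{{=}_2}{[0,0,1,0,0]}$; these are different problems (indeed, the $Z$-transformation in the proof of Theorem~\ref{thm:4reg_planar_EO_hard} shows \#EO is equivalent to $\PlHolant([3,0,1,0,3])$, not $\PlHolant([0,0,1,0,0])$). The reductions in~\cite{CGW13} reduce from \#EO in its bipartite disequality form, or equivalently from a $Z$-transformed target in the standard Holant setting, so once you correct the identification of the target signature your chain of reductions goes through.
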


There is a simpler corollary for symmetric signatures.

\begin{corollary} \label{cor:arity4:nonsingular_compressed_hard:symmetric}
 For a symmetric arity~4 signature $[f_0, f_1, f_2, f_3, f_4]$ with complex weights,
 if there does \emph{not} exist $a, b, c \in \mathbb{C}$, not all zero, such that for all $k \in \{0,1,2\}$,
 \[a f_k + b f_{k+1} + c f_{k+2} = 0,\]
 then $\PlHolant(f)$ is $\SHARPP$-hard.
\end{corollary}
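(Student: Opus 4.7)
The plan is to recognize this statement as the specialization of Corollary~\ref{cor:arity4:nonsingular_compressed_hard} to symmetric signatures. First, I would observe that for any symmetric $f = [f_0, f_1, f_2, f_3, f_4]$ the signature matrix $M_f$ is automatically redundant: the entries $f^{x_1 x_2 x_3 x_4}$ depend only on the Hamming weight of the index, which forces the two middle rows, and likewise the two middle columns, of $M_f$ to coincide. Hence the redundancy hypothesis of Corollary~\ref{cor:arity4:nonsingular_compressed_hard} is free, and the task reduces to verifying that the non-existence of a nontrivial $(a, b, c)$ in the statement is equivalent to the nonsingularity of $\widetilde{M_f}$.

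Next, I would carry out the direct matrix product in the definition of $\widetilde{M_f}$, which yields
\[
 \widetilde{M_f} = \begin{bmatrix} f_0 & 2 f_1 & f_2 \\ f_1 & 2 f_2 & f_3 \\ f_2 & 2 f_3 & f_4 \end{bmatrix}.
\]
Unpacking $\widetilde{M_f}\,(a, b, c)^{\mathsf{T}} = 0$ gives the three scalar equations $a f_k + 2 b f_{k+1} + c f_{k+2} = 0$ for $k \in \{0, 1, 2\}$. After the harmless substitution $b' = 2b$, these are precisely the three relations in the statement. Since $(a, b, c) \mapsto (a, 2b, c)$ is a bijection that preserves being nonzero, the existence of a nontrivial $(a, b', c)$ as in the statement is equivalent to $\ker \widetilde{M_f} \ne 0$. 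So the hypothesis that no such triple exists is exactly the condition that $\widetilde{M_f}$ is nonsingular.

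At that point Corollary~\ref{cor:arity4:nonsingular_compressed_hard} applies verbatim and delivers $\SHARPP$-hardness of $\PlHolant(f)$. There is no substantive obstacle; the only step that demands any explicit computation is producing $\widetilde{M_f}$, after which matching the Hankel-style linear-dependence condition in the statement to the nullspace of $\widetilde{M_f}$ is a one-line change of variable.
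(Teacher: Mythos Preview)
Your proposal is correct and is essentially the same argument as the paper's: both reduce directly to Corollary~\ref{cor:arity4:nonsingular_compressed_hard} by observing that the non-existence of a nontrivial $(a,b,c)$ is exactly the nonsingularity of $\widetilde{M_f}$, via the same $b' = 2b$ substitution coming from the factor of $2$ in the middle column of $\widetilde{M_f}$. The only cosmetic difference is that you write out $\widetilde{M_f}$ explicitly and phrase the equivalence in terms of its kernel, whereas the paper phrases it contrapositively as a column dependence when the rank drops.
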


\begin{proof}
 If the compressed signature matrix $\widetilde{M_f}$ is nonsingular,
 then $\PlHolant(f)$ is $\SHARPP$-hard by Corollary~\ref{cor:arity4:nonsingular_compressed_hard},
 so assume that the rank of $\widetilde{M_f}$ is at most~2.
 Then we have
 \[
      a' \begin{pmatrix} f_0 \\ f_1 \\ f_2 \end{pmatrix}
  + 2 b' \begin{pmatrix} f_1 \\ f_2 \\ f_3 \end{pmatrix}
  +   c' \begin{pmatrix} f_2 \\ f_3 \\ f_4 \end{pmatrix}
  =      \begin{pmatrix} 0   \\ 0   \\ 0   \end{pmatrix}
 \]
 for some $a',b',c' \in \mathbb{C}$, not all zero.
 Thus, $a = a'$, $b = 2 b'$, and $c = c'$ have the desired property.
\end{proof}

\begin{figure}[t]
 \centering
 \begin{tikzpicture}[scale=\scale,transform shape,node distance=\nodeDist,semithick]
  \node[internal]  (0)                    {};
  \node[external]  (1) [above left  of=0] {};
  \node[external]  (2) [below left  of=0] {};
  \node[external]  (3) [left        of=1] {};
  \node[external]  (4) [left        of=2] {};
  \node[external]  (5) [right       of=0] {};
  \node[external]  (6) [right       of=5] {};
  \node[internal]  (7) [right       of=6] {};
  \node[external]  (8) [above right of=7] {};
  \node[external]  (9) [below right of=7] {};
  \node[external] (10) [right       of=8] {};
  \node[external] (11) [right       of=9] {};
  \path (0) edge[out= 135, in=   0]                                     (3)
            edge[out=-135, in=   0]                                     (4)
            edge[out=  45, in= 135]                                     (7)
            edge                                                        (7)
            edge[out= -45, in=-135]                                     (7)
        (7) edge[out=  45, in= 180]                                    (10)
            edge[out= -45, in= 180]                                    (11);
  \begin{pgfonlayer}{background}
   \node[draw=\borderColor,thick,rounded corners,fit = (1) (2) (8) (9)] {};
  \end{pgfonlayer}
 \end{tikzpicture}
 \caption{The circles are assigned $[a,0,0,0,b,c]$.}
 \label{fig:gadget:arity5_232}
\end{figure}
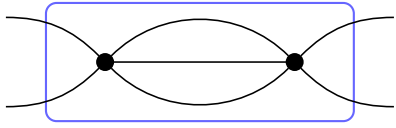

We close this section with a simple application of Corollary~\ref{cor:arity4:nonsingular_compressed_hard} to an arity~5.
We encounter signatures of this form in Sections~\ref{sec:pinning} and~\ref{sec:dichotomy}.

\begin{lemma} \label{lem:arity5:hard_sig}
 Let $a, b, c \in \mathbb{C}$.
 If $a b \ne 0$,
 then for any set $\mathcal{F}$ of complex-weighted symmetric signatures containing $[a,0,0,0,b,c]$,
 $\PlHolant(\mathcal{F})$ is $\SHARPP$-hard.
\end{lemma}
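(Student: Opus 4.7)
The plan is to build a planar arity-$4$ $\mathcal{F}$-gate realizing an asymmetric but redundant signature $g$, compute its compressed signature matrix $\widetilde{M_g}$, and check that it is nonsingular precisely when $ab \ne 0$; the conclusion will then follow from Corollary~\ref{cor:arity4:nonsingular_compressed_hard}. Concretely, I would use the gadget in Figure~\ref{fig:gadget:arity5_232}, which joins two copies of $f = [a,0,0,0,b,c]$ by three parallel edges and leaves two dangling edges on each side. The construction is planar and uses only $f$, so $\PlHolant(g) \le_T \PlHolant(\{f\}) \le_T \PlHolant(\mathcal{F})$, and it suffices to show $\PlHolant(g)$ is $\SHARPP$-hard.

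To compute $g$, I would exploit the fact that $f$ has nonzero entries only at Hamming weights $0$, $4$, and $5$. For a dangling assignment of Hamming weight $w_L$ on the left pair and $w_R$ on the right pair, the gadget value is
\[
g_{w_L, w_R} = \sum_{k=0}^{3} \binom{3}{k} f_{w_L + k} f_{w_R + k},
\]
where $k$ is the Hamming weight assigned to the three internal edges. Only those terms with $f_0$, $f_4$, or $f_5$ appearing in both factors survive, yielding $g_{00} = a^2$, $g_{01} = g_{02} = 0$, $g_{11} = b^2$, $g_{12} = bc$, and $g_{22} = 3 b^2 + c^2$. Because the two left (resp.~right) dangling edges are exchangeable, the $4 \times 4$ signature matrix $M_g$ has its two middle rows and its two middle columns equal, hence is redundant.

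Forming the compressed matrix then gives
\[
\widetilde{M_g} = \begin{pmatrix} a^2 & 0 & 0 \\ 0 & 2b^2 & bc \\ 0 & 2bc & 3b^2 + c^2 \end{pmatrix},
\]
whose determinant expands as $a^2\bigl(2 b^2 (3 b^2 + c^2) - (bc)(2 bc)\bigr) = 6 a^2 b^4$. Since $ab \ne 0$, this is nonzero, so Corollary~\ref{cor:arity4:nonsingular_compressed_hard} yields the $\SHARPP$-hardness of $\PlHolant(g)$, and therefore of $\PlHolant(\mathcal{F})$. I do not expect any genuine obstacle; the only mildly pleasant feature is the clean cancellation of the $2 b^2 c^2$ terms in the determinant, which is precisely what lets the hypothesis be the weak condition $ab \ne 0$ with no constraint on $c$.
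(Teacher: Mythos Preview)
Your proof is correct and follows essentially the same approach as the paper: the same gadget from Figure~\ref{fig:gadget:arity5_232}, the same resulting redundant signature matrix, the same determinant $6a^2b^4$ for the compressed matrix, and the same appeal to Corollary~\ref{cor:arity4:nonsingular_compressed_hard}. Your write-up simply supplies more of the intermediate computation than the paper does.
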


\begin{proof}
 Let $f$ be the signature of the gadget in Figure~\ref{fig:gadget:arity5_232} with $[a,0,0,0,b,c]$ assigned to both vertices. 
 The signature matrix of $f$ is
 \[
  \begin{bmatrix}
   a^2 & 0   & 0   & 0 \\
   0   & b^2 & b^2 & b c \\
   0   & b^2 & b^2 & b c \\
   0   & b c & b c & 3 b^2 + c^2
  \end{bmatrix},
 \]
 which is redundant.
 Its compressed form is nonsingular since its determinant is $6 a^2 b^4 \ne 0$.
 Thus, $\PlHolant(f)$ is $\SHARPP$-hard by Corollary~\ref{cor:arity4:nonsingular_compressed_hard},
 so $\PlHolant(\mathcal{F})$ is also $\SHARPP$-hard.
\end{proof}

\section{An Improved Interpolation Technique} \label{sec:interpolation}

In the previous section, we used interpolation to show that counting the number of Eulerian orientations is $\SHARPP$-hard over planar 4-regular graphs.
Polynomial interpolation is a powerful tool in the study of counting problems that was initiated by Valiant~\cite{Val79a}.
In this section, we discuss a common interpolation method called the \emph{recursive unary construction} and obtain a tight characterization of when it succeeds.
The goal of this construction is to interpolate a unary signature and is based on work by Vadhan~\cite{Vad01} and further developed by others~\cite{CLX12, CLX11d, CK12}.

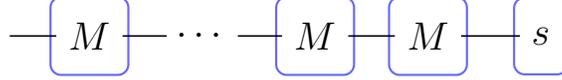
\begin{figure}[t]
 \centering
 \def\minLength{1.7cm}
 \begin{tikzpicture}[scale=\scale,transform shape,node distance=2.5cm,semithick]
  \node[external,white] (0)        {\Huge $M$};
  \node[external] (1) [right of=0] {\Huge $M$};
  \node[external] (2) [right of=1] {\Huge $\cdots$};
  \node[external] (3) [right of=2] {\Huge $M$};
  \node[external] (4) [right of=3] {\Huge $M$};
  \node[external] (5) [right of=4] {\Huge $s$};
  \node[external] (6) [right of=5] {};
  \path (0) edge (1)
        (1) edge (2)
        (2) edge (3)
        (3) edge (4)
        (4) edge (5);
  \begin{pgfonlayer}{background}
   \node[draw=\borderColor,thick,rounded corners,minimum height=\minLength,minimum width=\minLength,fit = (1)] {};
   \node[draw=\borderColor,thick,rounded corners,minimum height=\minLength,minimum width=\minLength,fit = (3)] {};
   \node[draw=\borderColor,thick,rounded corners,minimum height=\minLength,minimum width=\minLength,fit = (4)] {};
   \node[draw=\borderColor,thick,rounded corners,minimum height=\minLength,minimum width=       0cm,fit = (5)] {};
  \end{pgfonlayer}
 \end{tikzpicture}
 \caption{Recursive unary construction $(M,s)$.}
 \label{fig:gadget:recursive_unary_construction}
\end{figure}

There are two gadgets in the recursive unary construction:
a \emph{starter} gadget of arity~1 and a \emph{recursive} gadget of arity~2.
The signature of the starter gadget is represented by a two-dimensional column vectors $s$
and the signature of the recursive gadget is represented by a 2-by-2 matrix $M$.
The construction begins with the starter gadget and proceeds by connecting $k \ge 0$ recursive gadgets,
one at a time, to the only available edge (see Figure~\ref{fig:gadget:recursive_unary_construction}).
The signature of this gadget can be expressed as $M^k s$.
This construction is denoted by $(M,s)$.

The essential difficulty in using polynomial interpolation is constructing an infinite set of signatures that are pairwise linearly independent~\cite{CK12}.
The pairwise linear independence of signatures translates into distinct evaluation points for the polynomial being interpolated.
Thus, the essence of this interpolation technique can be stated as follows.

\begin{lemma}[Lemma~5.2 in~\cite{CLX11d}] \label{lem:pairwise_linearly_independent:2}
 Suppose $M \in \mathbb{C}^{2 \times 2}$ and $s \in \mathbb{C}^{2 \times 1}$.
 If the following three conditions are satisfied,
 \begin{enumerate}
  \item $\det(M) \ne 0$;
  \item $s$ is not a column eigenvector of $M$ (nor the zero vector);
  \item the ratio of the eigenvalues of $M$ is not a root of unity;
 \end{enumerate}
 then the vectors in the set $V = \{M^k s\}_{k \ge 0}$ are pairwise linearly independent.
\end{lemma}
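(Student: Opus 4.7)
The plan is to argue by contradiction: suppose two vectors $M^i s$ and $M^j s$ with $0 \le i < j$ are linearly dependent. Since $\det(M) \ne 0$ and $s \ne 0$, every $M^k s$ is nonzero, so linear dependence means $M^j s = c\, M^i s$ for some $c \in \mathbb{C}$. Applying $M^{-i}$ (available thanks to condition~1) yields
\[
M^k s = c\, s, \qquad k := j - i \ge 1.
\]
So the whole problem reduces to showing that no such equation can hold under conditions~(2) and~(3).

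Next I would use condition~(3) to control the Jordan structure of $M$. Let $\lambda_1, \lambda_2$ be the eigenvalues, both nonzero by condition~(1). If $\lambda_1 = \lambda_2$ then their ratio is $1$, a root of unity, contradicting~(3); so $\lambda_1 \ne \lambda_2$ and $M$ is diagonalizable. Writing $M = P \operatorname{diag}(\lambda_1, \lambda_2) P^{-1}$ with columns $v_1, v_2$ of $P$, I expand $s = a v_1 + b v_2$. Condition~(2) (that $s$ is neither zero nor an eigenvector) forces both coefficients $a, b \ne 0$.

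Now I substitute into $M^k s = c s$: since $M^k v_i = \lambda_i^k v_i$, independence of $v_1, v_2$ gives the two scalar equations $a \lambda_1^k = c a$ and $b \lambda_2^k = c b$. As $a, b$ are nonzero, both $\lambda_1^k$ and $\lambda_2^k$ equal $c$, hence $(\lambda_1/\lambda_2)^k = 1$. This exhibits the eigenvalue ratio as a $k$-th root of unity, contradicting condition~(3) and completing the argument.

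The three hypotheses each play an essential and distinct role, which is the only subtle point: condition~(1) lets us cancel $M^i$ and keeps the $M^k s$ nonzero, condition~(3) both rules out the non-diagonalizable (repeated eigenvalue) case and delivers the final contradiction, and condition~(2) is exactly what ensures neither coordinate of $s$ in the eigenbasis vanishes — without it, $s$ could be a genuine eigenvector and a single equation $\lambda_i^k = c$ would be satisfiable. The main thing to be careful about is not skipping the diagonalizability step; it is tempting to jump straight to the eigenbasis expansion, but one must first invoke~(3) to rule out a Jordan block.
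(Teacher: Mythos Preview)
Your proof is correct. The paper does not itself prove this lemma (it is quoted from~\cite{CLX11d}), but it does prove the $n$-dimensional generalization, Lemma~\ref{lem:pairwise_linearly_independent:d}, by essentially the same contradiction-and-eigenbasis argument you give: reduce to $M^k s = c\, s$, pass to a Jordan/eigenbasis, use the non-orthogonality condition to ensure the relevant coordinates of $s$ are nonzero, and read off a relation among eigenvalue powers. The only structural difference is that the paper's $n$-dimensional proof must treat nontrivial Jordan blocks by a separate computation, whereas you correctly observe that in the $2 \times 2$ case condition~(3) already forces distinct eigenvalues (ratio $1$ is a root of unity), so diagonalizability comes for free.
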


Clearly the first condition is necessary.
The second condition is equivalent to $\det([s\ M s]) \ne 0$,
which is necessary since it checks the linear dependence of the first two vectors in $V$.

The recursive unary construction can be generalized to larger dimensions,
where the starter gadget has arity $d$ and the recursive gadget has arity $2 d$~\cite{KC10}.
In this generalized construction,
the starter gadget is represented by a column vector in $\mathbb{C}^{2^d}$ and the recursive gadget is represented by a matrix in $\mathbb{C}^{2^d \times 2^d}$.

For dimensions larger than one, the second condition in Lemma~\ref{lem:pairwise_linearly_independent:2} must be replaced by a stronger assumption,
such as ``\emph{$s$ is not orthogonal to any row eigenvector of $M$}''~\cite{CLX12}.
Previous work (Lemma~4.10 in~\cite{KC10_arXiv}, the full version of~\cite{KC10})
satisfied this stronger condition by showing that it follows from $\det([s\ M s\ \ldots\ M^{n-1} s]) \ne 0$.
For completeness, we show that these two conditions are equivalent.
We note that the use of $n$ instead of $2^d$ in the next two lemmas is not overly general.
Sometimes degeneracies or redundancies in the starter and recursive gadgets warrant the consideration of such cases.

\begin{lemma} \label{lem:2nd_condition_equivalence}
 Suppose $M \in \mathbb{C}^{n \times n}$ and $s \in \mathbb{C}^{n \times 1}$.
 Then $\det([s\ M s\ \ldots\ M^{n-1} s]) \ne 0$ iff $s$ is not orthogonal to any row eigenvector of $M$.
\end{lemma}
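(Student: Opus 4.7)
The plan is to prove both directions by exploiting the observation that the left null space of the matrix $K_s = [s\ Ms\ \cdots\ M^{n-1}s]$ is closed under right multiplication by $M$, and therefore contains a row eigenvector of $M$ whenever it is nonzero.

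For the easy direction, if $v$ is a nonzero row eigenvector of $M$ with $v M = \lambda v$ and $v s = 0$, then $v M^k s = \lambda^k (v s) = 0$ for every $k \ge 0$, so $v K_s = 0$, and hence $\det(K_s) = 0$.

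For the converse, suppose $\det(K_s) = 0$ and let $W$ be the (nonzero) left null space of $K_s$. I would first argue that $W$ equals the set of row vectors $v$ satisfying $v M^k s = 0$ for every $k \ge 0$, not just for $k = 0, 1, \ldots, n-1$. The nontrivial inclusion uses the Cayley--Hamilton theorem: every $M^k$ with $k \ge n$ is a $\mathbb{C}$-linear combination of $I, M, \ldots, M^{n-1}$, so annihilation on the first $n$ powers of $M$ applied to $s$ extends by induction to all nonnegative powers. This alternative description of $W$ makes its invariance under right multiplication by $M$ transparent: if $w \in W$, then $(wM) M^k s = w M^{k+1} s = 0$, so $wM \in W$. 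Since $W$ is a nonzero finite-dimensional complex subspace on which right multiplication by $M$ is a linear operator, the fundamental theorem of algebra furnishes an eigenvector, namely a nonzero $w \in W$ with $w M = \lambda w$. Such a $w$ is a row eigenvector of $M$ satisfying $w s = 0$, as required.

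The main subtlety is the Cayley--Hamilton step. Without extending the annihilation from $\{0, 1, \ldots, n-1\}$ to all $k \ge 0$, the left null space of $K_s$ would not obviously be $M$-invariant, and the argument could not produce an eigenvector inside it. With this extension in hand, the proof is uniform across all $M$ and avoids any case analysis on the Jordan structure --- we never need to separate derogatory from nonderogatory $M$, nor explicitly diagonalize or Jordanize.
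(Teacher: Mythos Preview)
Your proof is correct and follows essentially the same approach as the paper's: both use Cayley--Hamilton to exhibit a nonzero $M$-invariant subspace of row vectors annihilating $s$, and then extract a row eigenvector from it. The only cosmetic difference is that you show the \emph{entire} left null space of $K_s$ is $M$-invariant (via the ``all powers'' characterization), whereas the paper picks a single null vector $v$ and shows the cyclic span $\operatorname{span}\{v, vM, \ldots, vM^{n-1}\}$ is $M$-invariant; the underlying idea is identical.
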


\begin{proof}
 Suppose $\det([s\ M s\ \ldots\ M^{n-1} s]) \ne 0$ and assume for a contradiction that $s$ is orthogonal to some row eigenvector $v$ of $M$ with eigenvalue $\lambda$.
 Then $v [s\ M s\ \ldots\ M^{n-1} s] = \mathbf{0}$ is the zero vector because $v M^i s = \lambda^i v s = 0$.
 Since $v \ne \mathbf{0}$, this a contradiction.
 
 Now suppose that $s$ is not orthogonal to any row eigenvector of $M$ and assume for a contradiction that $\det([s\ M s\ \ldots\ M^{n-1} s]) = 0$.
 Then there is a nonzero row vector $v$ such that $v [s\ M s\ \ldots\ M^{n-1} s] = \mathbf{0}$ is the zero vector.
 Consider the linear span $S$ by row vectors in the set $\{v, v M, \dotsc, v M^{n-1}\}$.
 We claim that $S$ is an invariant subspace of row vectors under the action of multiplication by $M$ from the right.

 By the Caylay-Hamilton theorem, $M$ satisfies its own characteristic polynomial, which is a monic polynomial of degree $n$.
 Thus, $M^n$ is a linear combination of $I_n, M, \dotsc, M^{n-1}$.
 This shows that for any $u \in S$, $u M$ still belongs to $S$.

 Therefore, there exists a $u \in S$ such that $u$ is a row eigenvector of $M$.
 By the definition of $S$, this $u$ is orthogonal to $s$, which is a contradiction.
\end{proof}

Another necessary condition, even for the $d$-dimensional case, is that $M$ has infinite order modulo a scalar.
Otherwise, $M^k = \beta I_n$ for some $k$ and any vector of the form $M^\ell s$ for $\ell \ge k$ is some multiple of a vector in the set $\{M^i s\}_{0 \le i < k}$.
We improve the $d$-dimensional version of Lemma~\ref{lem:pairwise_linearly_independent:2} by replacing the third condition with this necessary condition.

\begin{lemma} \label{lem:pairwise_linearly_independent:d}
 Suppose $M \in \mathbb{C}^{n \times n}$ and $s \in \mathbb{C}^{n \times 1}$.
 If the following three conditions are satisfied,
 \begin{enumerate}
  \item $\det(M) \ne 0$;
  \item $s$ is not orthogonal to any row eigenvector of $M$;
  \item $M$ has infinite order modulo a scalar;
 \end{enumerate}
 then the vectors in the set $V = \{M^k s\}_{k \ge 0}$ are pairwise linearly independent.
\end{lemma}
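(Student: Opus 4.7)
The plan is to reduce pairwise linear independence of $\{M^k s\}_{k\ge 0}$ to the single statement that $s$ and $M^k s$ are linearly independent for every $k \ge 1$. Since condition~1 gives that $M$ is invertible, for any $0 \le i < j$ the pair $M^i s, M^j s$ is linearly dependent iff (after multiplying by $M^{-i}$) the pair $s, M^{j-i} s$ is linearly dependent. So the whole lemma follows once we rule out $M^k s = \lambda s$ for all $k \ge 1$ and all $\lambda \in \mathbb{C}$.

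The main idea is that condition~2 forces $s$ to be a cyclic vector for $M$ on the full space $\mathbb{C}^n$. Indeed, Lemma~\ref{lem:2nd_condition_equivalence} (stated just above) says that condition~2 is equivalent to $\det([s\ M s\ \ldots\ M^{n-1} s]) \ne 0$, so the vectors $s, M s, \ldots, M^{n-1} s$ form a basis of $\mathbb{C}^n$. Consequently the smallest $M$-invariant subspace containing $s$ is all of $\mathbb{C}^n$.

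Now suppose toward a contradiction that $M^k s = \lambda s$ for some $k \ge 1$ and some $\lambda \in \mathbb{C}$ (note $\lambda \ne 0$ since $s \ne 0$ and $M$ is invertible). Set $N = M^k - \lambda I_n$; by hypothesis $N s = 0$. Because $N$ is a polynomial in $M$, it commutes with $M$, so for every $i \ge 0$,
\[
N (M^i s) \;=\; M^i (N s) \;=\; 0.
\]
Hence $N$ vanishes on the span of $\{M^i s : i \ge 0\}$, which by the previous paragraph is all of $\mathbb{C}^n$. Therefore $N = 0$, i.e.\ $M^k = \lambda I_n$, contradicting condition~3 that $M$ has infinite order modulo a scalar.

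There isn't really a hard step here once the right framing is in place; the only subtlety is recognizing that condition~2 is not just a nonorthogonality statement but, via Lemma~\ref{lem:2nd_condition_equivalence}, the much stronger statement that $s$ is a cyclic vector for $M$. Once that is noted, the commutativity trick $N M^i s = M^i N s$ propagates the single relation $N s = 0$ to all of $\mathbb{C}^n$ and reduces the problem to the clean algebraic contradiction with condition~3.
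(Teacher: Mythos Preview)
Your proof is correct, and it takes a genuinely different route from the paper's own argument. The paper proceeds by writing the Jordan decomposition $M = P^{-1} J P$, setting $p = P s$, and then analyzing the equation $J^t p = \beta p$ directly: in the presence of a nontrivial Jordan block the two equations coming from the bottom $2 \times 2$ corner force a coordinate of $p$ (one that corresponds to a row eigenvector of $M$) to vanish, contradicting condition~2; when $J$ is diagonal, all coordinates of $p$ are nonzero and the equation forces $M^t = \beta I_n$, contradicting condition~3. Your approach instead leverages Lemma~\ref{lem:2nd_condition_equivalence} to reinterpret condition~2 as saying that $s$ is a cyclic vector, and then uses the commutativity of $N = M^k - \lambda I_n$ with $M$ to propagate $N s = 0$ across the basis $\{s, M s, \ldots, M^{n-1} s\}$, yielding $M^k = \lambda I_n$ in one stroke. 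The Jordan-form argument is more self-contained (it does not invoke Lemma~\ref{lem:2nd_condition_equivalence}) and makes explicit where each hypothesis bites at the level of individual coordinates; your argument is shorter, avoids the case split on Jordan block structure, and makes the role of condition~3 especially transparent. Both are perfectly valid.
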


\begin{proof}
 Since $\det(M) \ne 0$, $M$ is nonsingular and the eigenvalues $\lambda_i$ of $M$, for $1 \le i \le n$, are nonzero.
 Let $M = P^{-1} J P$ be the Jordan decomposition of $M$ and let $p = P s \in \mathbb{C}^{n \times 1}$.
 Suppose for a contradiction that the vectors in $V$ are not pairwise linearly independent.
 This means that there exists integers $k > \ell \ge 0$ such that $M^k s = \beta M^\ell s$ for some nonzero complex value $\beta$.
 Let $t = k - \ell > 0$.
 Then we have $P^{-1} J^t P s = M^t s = \beta s$ and $J^t p = \beta p$.
 
 Suppose that $J$ contains some nontrivial Jordan block and consider the 2-by-2 submatrix in the bottom right corner of this block.
 From this portion of $J$,
 the two equations given by $J^t p = \beta p$ are $\lambda_i^t p_{i-1} + t \lambda_i^{t-1} p_i = \beta p_{i-1}$ and $\lambda_i^t p_i = \beta p_i$.
 Since $s$ is not orthogonal to any row eigenvector of $M$, $p_i \ne 0$.
 But then these equations imply that $t \lambda_i^{t-1} p_i = 0$, a contradiction.
 
 Otherwise, $J$ contains only trivial Jordan blocks.
 From $J^t p = \beta p$, we get the equations $\lambda_i p_i = \beta p_i$ for $1 \le i \le n$.
 Since $s$ is not orthogonal to any row eigenvector of $M$, $p_i \ne 0$ for $1 \le i \le n$.
 But then $M^t = \beta I_n$, which contradicts that fact that $M$ has infinite order modulo a scalar.
\end{proof}

With this lemma, we obtain a tight characterization for the success of interpolation by a recursive unary construction.
For example,
the construction using a recursive gadget with signature matrix $M = \left[\begin{smallmatrix} 1 & 1 \\ 0 & 1 \end{smallmatrix}\right]$
and a starter gadget with signature $s = \left[\begin{smallmatrix} 0 \\ 1 \end{smallmatrix}\right]$
is successful because $M$ and $s$ satisfy our conditions but do not satisfy previous sufficient conditions.

\begin{lemma} \label{lem:unary_recursive_construction}
 Let $\mathcal{F}$ be a set of signatures.
 If there exists a planar $\mathcal{F}$-gate with signature matrix $M \in \mathbb{C}^{2 \times 2}$ and
 a planar $\mathcal{F}$-gate with signature $s \in \mathbb{C}^{2 \times 1}$ satisfying the following conditions,
 \begin{enumerate}
  \item $\det(M) \ne 0$;
  \item $\det([s\ M s]) \ne 0$;
  \item $M$ has infinite order modulo a scalar;
 \end{enumerate}
 then $\PlHolant(\mathcal{F} \union \{[a,b]\}) \le_T \PlHolant(\mathcal{F})$ for any $a, b \in \mathbb{C}$.
\end{lemma}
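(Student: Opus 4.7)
The plan is to use the recursive unary construction itself to interpolate the unary signature $[a,b]$. For each $k \geq 0$, let $N_k$ denote the planar $\mathcal{F}$-gate obtained by connecting $k$ copies of the recursive gadget (with signature $M$) to the starter gadget (with signature $s$); then $N_k$ has signature $M^k s$, and $N_k$ is planar since it is essentially a path of gadgets. By Lemma~\ref{lem:2nd_condition_equivalence} the hypothesis $\det([s\ Ms]) \ne 0$ is equivalent to $s$ not being orthogonal to any row eigenvector of $M$, so the three hypotheses of the present lemma are exactly those of Lemma~\ref{lem:pairwise_linearly_independent:d}. Applying that lemma guarantees that the vectors $\{M^k s\}_{k \geq 0}$ are pairwise linearly independent in $\mathbb{C}^2$.

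Given an instance $\Omega$ of $\PlHolant(\mathcal{F} \cup \{[a,b]\})$ in which $[a,b]$ occurs $n$ times, form $\Omega_k$ by replacing every occurrence of $[a,b]$ with $N_k$; planarity is clearly preserved. Fix any basis $\{u,w\}$ of $\mathbb{C}^2$, and write $M^k s = \alpha_k u + \beta_k w$ and $[a,b] = \alpha^{\ast} u + \beta^{\ast} w$. Expanding by multilinearity and grouping assignments to the $n$ unary slots by the number $t$ of slots receiving $w$, one obtains
\[
\PlHolant_{\Omega_k} \;=\; \sum_{t=0}^{n} \alpha_k^{\,n-t}\, \beta_k^{\,t}\, C_t,
\qquad
\PlHolant_{\Omega} \;=\; \sum_{t=0}^{n} (\alpha^{\ast})^{n-t}\, (\beta^{\ast})^{t}\, C_t,
\]
where each $C_t$ depends only on $\Omega$ and the chosen basis. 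Thus computing $\PlHolant_{\Omega}$ reduces to recovering the $n+1$ unknowns $C_0,\dotsc,C_n$ from polynomially many queries to $\PlHolant(\mathcal{F})$, one per value of $k$ used.

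To solve the resulting linear system, pick $n+1$ indices $k_0 < k_1 < \cdots < k_n$ with every $\alpha_{k_i} \ne 0$. Pairwise linear independence of $\{M^k s\}_{k \ge 0}$ implies that at most one index $k$ can satisfy $\alpha_k = 0$ (which would force $M^k s$ to be a scalar multiple of $w$), so a valid choice exists within any window of $n+2$ consecutive indices. Dividing the $i$th row of the coefficient matrix by $\alpha_{k_i}^n$ turns it into the Vandermonde matrix with nodes $\beta_{k_i}/\alpha_{k_i}$. Pairwise linear independence again ensures these ratios are pairwise distinct, so the matrix is nonsingular; standard Gaussian elimination recovers all $C_t$ in polynomial time, and substituting into the second displayed identity yields $\PlHolant_{\Omega}$.

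The only substantive point is step two — the pairwise linear independence of $\{M^k s\}$ — and it is precisely the reason Lemma~\ref{lem:pairwise_linearly_independent:d} had to be strengthened beyond Lemma~\ref{lem:pairwise_linearly_independent:2}: matrices with a single eigenvalue but a nontrivial Jordan block (such as $\left[\begin{smallmatrix}1 & 1 \\ 0 & 1\end{smallmatrix}\right]$) are excluded by the ``distinct eigenvalues with non-root-of-unity ratio'' formulation, yet their infinite-order-modulo-scalar behavior still powers the construction. Everything else in the argument is a routine Vandermonde interpolation, and the reduction produced is a polynomial-time Turing reduction as required.
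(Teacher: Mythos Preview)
Your proof is correct and follows essentially the same approach as the paper: both invoke Lemmas~\ref{lem:2nd_condition_equivalence} and~\ref{lem:pairwise_linearly_independent:d} to obtain pairwise linear independence of $\{M^k s\}_{k\ge 0}$, then set up a Vandermonde system after noting that at most one coordinate (in your notation, at most one $\alpha_k$) can vanish. The only cosmetic difference is that the paper works directly in the standard basis (writing $M^k s = [x_k, y_k]$ and stratifying by how many copies of $[a,b]$ receive input~$0$), whereas you phrase the same multilinear expansion over an arbitrary basis $\{u,w\}$; the resulting linear systems are identical.
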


\begin{proof}
 Consider an instance $\Omega = (G, \mathcal{F}, \pi)$ of $\PlHolant(\mathcal{F} \union \{[a,b]\})$.
 Let $V'$ be the subset of vertices assigned $[a,b]$ by $\pi$ and suppose that $|V'| = n$.
 We construct from $\Omega$ a sequence of instances $\Omega_k$ of $\PlHolant(\mathcal{F})$ indexed by $k \ge 1$.
 We obtain $\Omega_k$ from $\Omega$ by replacing each occurrence of $[a,b]$ with the recursive unary construction $(M,s)$
 in Figure~\ref{fig:gadget:recursive_unary_construction} containing $k$ copies of the recursive gadget.
 This recursive unary construction has the signature $[x_k, y_k] = M^k s$.

 By applying our assumptions to Lemmas~\ref{lem:2nd_condition_equivalence} 
 and~\ref{lem:pairwise_linearly_independent:d},
 we know that the signatures in the set $V = \{[x_k, y_k] \st 0 \le k \le n + 1\}$ 
 are pairwise linearly independent.
 In particular, at most one $y_k$ can be~0, so we may assume that $y_k \ne 0$ for $0 \le k \le n$, renaming variables if necessary.
 
 We stratify the assignments in $\Omega$ based on the assignment to $[a,b]$.
 Let $c_\ell$ be the sum over all assignments of products of evaluations at all $v \in V(G) - V'$
 such that exactly $\ell$ occurrences of $[a,b]$ have their incident edge assigned~0 (and $n - \ell$ have their incident edge assigned~1).
 Then
 \[
  \PlHolant_\Omega = \sum_{0 \le \ell \le n} a^\ell b^{n - \ell} c_\ell
 \]
 and the value of the Holant on $\Omega_k$, for $k \ge 1$, is
 \begin{align*}
  \PlHolant_{\Omega_k}
  &= \sum_{0 \le \ell \le n} x_k^\ell y_k^{n - \ell} c_\ell\\
  &= y_k^n \sum_{0 \le \ell \le n} \left(\frac{x_k}{y_k}\right)^\ell c_\ell.
 \end{align*}
 The coefficient matrix of this linear system is Vandermonde.
 Since the signatures in $V$ are pairwise linearly independent,
 the ratios $x_k / y_k$ are distinct (and well-defined since $y_k \ne 0$),
 which means that the Vandermonde matrix has full rank.
 Therefore, we can solve the linear system for the unknown $c_\ell$'s and obtain the value of $\Holant_\Omega$.
\end{proof}

The first two conditions of Lemma~\ref{lem:unary_recursive_construction} are easy to check.
The third condition holds in one of these two cases:
either the eigenvalues are the same but $M$ is not a multiple of the identity matrix,
or the eigenvalues are different but their ratio is not a root of unity.

Our refined conditions work well with the anti-gadget technique~\cite{CKW12}.
The power of this lemma is that when the third condition fails to hold,
there exists an integer $k$ such that $M^k = I_2$, where $I_2$ is the 2-by-2 identity matrix.
Therefore we can construct $M^{k-1} = M^{-1}$ and use this in other gadget constructions.

The anti-gadget technique is used in combination with Lemma~\ref{lem:unary_recursive_construction}
to give a succinct proof of Lemma~\ref{lem:arity4:double_root_interpolation}.
The construction in this proof is actually not a recursive unary construction,
but a recursive binary construction.
However, degeneracies in the starter and recursive gadgets permit analysis 
equivalent to that of the recursive unary construction.
We also use the anti-gadget technique and the power of Lemma~\ref{lem:unary_recursive_construction}
(via Lemma~\ref{lem:unary_interpolation:EQ-hat}) in the proof of 
Theorem~\ref{thm:pinning} to handle a difficult case.

\section{Pl-Holant Dichotomy for a Symmetric Arity~4 Signature} \label{sec:arity-4}

With Corollary~\ref{cor:arity4:nonsingular_compressed_hard:symmetric} in hand,
only one obstacle remains in proving a dichotomy for a symmetric arity~4 signature in the Pl-Holant framework:
the case $[v,1,0,0,0]$ when $v$ is different from~0.
Over the next two lemmas, we prove that this problem is $\SHARPP$-hard by a reduction from $\PlHolant([v,1,0,0])$.
These problems are the weighted versions of counting matchings over planar $k$-regular graphs for $k = 4$ and $k = 3$ respectively.

In the first lemma,
we show how to use either the anti-gadget technique from~\cite{CKW12}
or interpolation by our tight characterization of the recursive unary construction from Section~\ref{sec:interpolation} to effectively obtain $[1,0,0]$.

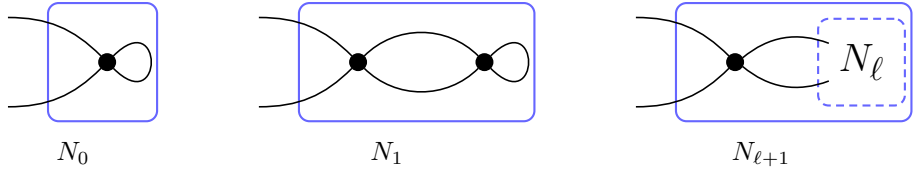
\begin{figure}[t]
 \centering
 \captionsetup[subfigure]{labelformat=empty}
 \subcaptionbox{$N_0$}{
  \begin{tikzpicture}[scale=\scale,transform shape,node distance=\nodeDist,semithick]
   \node[external] (0)                    {};
   \node[external] (1) [right       of=0] {};
   \node[internal] (2) [below right of=1] {};
   \node[external] (3) [below left  of=2] {};
   \node[external] (4) [left        of=3] {};
   \path (0) edge[out=   0, in=135] (2)
         (2) edge[out=-135, in=  0] (4)
         (2) edge[out=45,   in=-45, looseness=15] coordinate (c1) (2);
   \begin{pgfonlayer}{background}
    \node[draw=\borderColor,thick,rounded corners,fit = (1) (3) (c1)] {};
   \end{pgfonlayer}
  \end{tikzpicture}}
 \qquad
 \subcaptionbox{$N_1$}{
  \begin{tikzpicture}[scale=\scale,transform shape,node distance=\nodeDist,semithick]
   \node[external]  (0)                    {};
   \node[external]  (1) [right       of=0] {};
   \node[internal]  (2) [below right of=1] {};
   \node[external]  (3) [below left  of=2] {};
   \node[external]  (4) [left        of=3] {};
   \node[external]  (5) [right       of=2] {};
   \node[internal]  (6) [right       of=5] {};
   \path (0) edge[out=   0, in= 135] (2)
         (2) edge[out=-135, in=   0] (4)
             edge[out=  45, in= 135] (6)
             edge[out= -45, in=-135] (6)
         (6) edge[out=  45, in= -45, looseness=15] coordinate (c1) (6);
   \begin{pgfonlayer}{background}
    \node[draw=\borderColor,thick,rounded corners,fit = (1) (3) (c1)] {};
   \end{pgfonlayer}
  \end{tikzpicture}}
 \qquad
 \subcaptionbox{$N_{\ell+1}$}{
  \begin{tikzpicture}[scale=\scale,transform shape,node distance=\nodeDist,semithick]
   \node[external]  (0)                    {};
   \node[external]  (1) [right       of=0] {};
   \node[internal]  (2) [below right of=1] {};
   \node[external]  (3) [below left  of=2] {};
   \node[external]  (4) [left        of=3] {};
   \node[external]  (5) [right       of=2] {};
   \node[external]  (6) [right       of=5] {\Huge $N_\ell$};
   \path (0) edge[out=   0, in=135] (2)
         (2) edge[out=-135, in=  0] (4)
             edge[bend left]        (6)
             edge[bend right]       (6);
   \begin{pgfonlayer}{background}
    \node[draw=\borderColor,thick,densely dashed,rounded corners,fit = (6)] {};
    \node[draw=\borderColor,thick,rounded corners,fit = (1) (3) (c1)] {}; 
   \end{pgfonlayer}
  \end{tikzpicture}}
 \caption{Binary recursive construction to interpolate $[1,0,0]$. The vertices are assigned $[v,1,0,0,0]$.}
 \label{fig:gadget:arity4:interpolate_100}
\end{figure}

\begin{lemma} \label{lem:arity4:double_root_interpolation}
 For any $v \in \mathbb{C}$ and signature set $\mathcal{F}$ containing $[v,1,0,0,0]$,
 \[\PlHolant(\mathcal{F} \union \{[1,0,0]\}) \le_T \PlHolant(\mathcal{F}).\]
\end{lemma}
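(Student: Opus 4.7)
The plan is to interpolate $[1,0,0]$ using the recursive gadget family $\{N_\ell\}_{\ell \ge 0}$ shown in Figure~\ref{fig:gadget:arity4:interpolate_100}, with every vertex labeled $[v,1,0,0,0]$. A direct calculation shows that $N_0$ (one such vertex with a self-loop on two of its four edges) has symmetric signature $[v,1,0]$, and that attaching a fresh $[v,1,0,0,0]$ vertex to the two dangling edges of $N_\ell$ via two parallel edges produces $N_{\ell+1}$ with signature $[v a_\ell + 2 b_\ell,\ a_\ell,\ 0]$ whenever $N_\ell = [a_\ell, b_\ell, 0]$. Thus every $N_\ell$ has the form $[a_\ell, b_\ell, 0]$ satisfying the linear recursion
\[
 \begin{pmatrix} a_{\ell+1} \\ b_{\ell+1} \end{pmatrix}
 = M \begin{pmatrix} a_\ell \\ b_\ell \end{pmatrix},
 \qquad
 M = \begin{pmatrix} v & 2 \\ 1 & 0 \end{pmatrix},
 \qquad
 \begin{pmatrix} a_0 \\ b_0 \end{pmatrix} = \begin{pmatrix} v \\ 1 \end{pmatrix}.
\]
The essential degeneracy is that the third entry of $N_\ell$ is identically zero because $f_2 = f_3 = f_4 = 0$ in $[v,1,0,0,0]$; this collapses what is nominally a binary recursive construction into a $2$-dimensional setting in which Lemma~\ref{lem:unary_recursive_construction} applies.

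Conditions~(i) and~(ii) of Lemma~\ref{lem:unary_recursive_construction} are immediate for every $v$: $\det M = -2 \ne 0$ and $\det[s \mid Ms] = -2 \ne 0$ with $s = (v,1)^T$. In the generic case where $M$ additionally has infinite order modulo a scalar, the signatures $\{N_\ell\}_{\ell \ge 0}$ are pairwise linearly independent. Given an instance $\Omega$ of $\PlHolant(\mathcal{F} \cup \{[1,0,0]\})$ with $n$ copies of $[1,0,0]$, replace each copy by $N_\ell$ to form $\Omega_\ell$, and stratify the assignments by how many of the $n$ replaced pairs of incident edges receive Hamming weight~$0$ versus weight~$1$ (pairs of weight~$2$ contribute zero by the vanishing third entry). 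This yields, for $\ell = 0, 1, \dotsc, n$, a linear system
\[
 \PlHolant_{\Omega_\ell} = \sum_{j + k = n} a_\ell^j\, b_\ell^k\, D_{j,k},
\]
whose coefficient matrix is Vandermonde in the ratios $a_\ell / b_\ell$ and therefore invertible. Solving recovers $\PlHolant_\Omega = D_{n,0}$, the value of the original instance.

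The main obstacle is the exceptional case when $M^k = \beta I_2$ for some $k \ge 2$ and $\beta \ne 0$. This condition pins $v$ to a discrete set of algebraic values and limits the orbit $\{N_\ell\}$ modulo scalar to only $k$ distinct directions, which is insufficient for interpolation once $n \ge k$. Here I would invoke the anti-gadget technique of~\cite{CKW12}: the identity $M^{k-1} = \beta M^{-1}$ makes $N_{k-1}$ available as an ``inverse'' building block, and combining it with fresh $[v,1,0,0,0]$ vertices in auxiliary configurations (for example, joining two such vertices by three parallel edges produces the symmetric signature $[v^2+3,\ v,\ 1]$, whose nonzero third entry places it outside the $[a,b,0]$ family) yields additional gadgets that restore enough pairwise linearly independent signatures to complete the interpolation for arbitrary~$n$.
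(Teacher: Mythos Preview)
Your setup and the infinite-order case are correct and match the paper's argument: the same recursive family $N_\ell$, the same matrix $M=\left(\begin{smallmatrix} v & 2\\ 1 & 0\end{smallmatrix}\right)$ and start vector $s=(v,1)^T$, and the same verification that $\det M=-2$ and $\det[s\ Ms]=-2$, so Lemma~\ref{lem:unary_recursive_construction} applies and $[1,0,0]$ can be interpolated.

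The gap is in your finite-order case. You correctly recognize that the anti-gadget $M^{k-1}=\beta M^{-1}$ is available, but then you drift into an unsubstantiated plan: building $[v^2+3,v,1]$ and hoping to ``restore enough pairwise linearly independent signatures.'' You never show how that would actually produce $[1,0,0]$, and it is far from clear that it does for every $v$ in the (infinite) set where $M$ has finite order modulo a scalar. The paper's resolution is a one-line observation you overlooked: compute $M^{-1}s$ directly. Since
\[
 M^{-1}=\tfrac{1}{-2}\begin{pmatrix}0 & -2\\ -1 & v\end{pmatrix},
 \qquad
 M^{-1}\begin{pmatrix}v\\1\end{pmatrix}=\begin{pmatrix}1\\0\end{pmatrix},
\]
the gadget $N_{k-1}$ has signature $M^{k-1}s=\beta M^{-1}s=\beta(1,0)^T$, i.e.\ $N_{k-1}=\beta[1,0,0]$. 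So in the finite-order case the anti-gadget \emph{is} $[1,0,0]$ up to a scalar, and no further interpolation or auxiliary construction is needed at all. Once you add this computation, your proof is complete and coincides with the paper's.
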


\begin{proof}
 Consider the gadget construction in Figure~\ref{fig:gadget:arity4:interpolate_100}.
 For $k \ge 0$, the signature of $N_k$ is of the form $[a_k, b_k, 0]$, and $N_0 = [v,1,0]$.
 Since $N_k$ is symmetric and always ends with~0,
 we can analyze this construction as though it were a recursive unary construction.
 Let $s_k = \transpose{[a_k, b_k]}$, so $s_0 = \transpose{[v,1]}$.
 It is clear that $s_k = M^k s_0$,
 where $M = \left[\begin{smallmatrix} v & 2 \\ 1 & 0 \end{smallmatrix}\right]$.
 
 Since $\det(M) = -2$, $M$ is nonsingular.
 If $M$ has finite order modulo a scalar,
 then $M^\ell = \beta I_2$ for some positive integer $\ell$ and some nonzero complex value $\beta$.
 Thus, the signature of $N_{\ell - 1}$, which contains the anti-gadget of $M$, is $M^{\ell - 1} s_0 = \beta M^{-1} s_0 = \beta \transpose{[1,0]}$.
 After normalizing, we directly realize $[1,0,0]$.
 
 Now assume that $M$ has infinite order modulo a scalar.
 Since $\det([s_0\ M s_0]) = -2$, we can interpolate any signature of the form $[x,y,0]$ by Lemma~\ref{lem:unary_recursive_construction}, including $[1,0,0]$.
\end{proof}

For the next lemma, we use a well-known and easy generalization of a classic result of Petersen~\cite{Pet91}.
Petersen's theorem considers 3-regular, bridgeless, simple graphs
(i.e.~graphs without self-loops or parallel edges) and concludes that there exists a perfect matching.
The same conclusion holds even if the graphs are not simple.
We provide a proof for completeness.

\begin{theorem} \label{thm:arity4:PM_in_bridgeless}
 Any 3-regular bridgeless graph $G$ has a perfect matching.
\end{theorem}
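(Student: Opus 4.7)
The plan is to invoke Tutte's theorem on perfect matchings, which states that a graph $G$ has a perfect matching iff for every $S \subseteq V(G)$, the number of odd components of $G - S$ is at most $|S|$. So I need to verify this condition for an arbitrary 3-regular bridgeless graph $G$ and an arbitrary subset $S \subseteq V(G)$.

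The key observation is a parity argument combined with the bridgeless hypothesis. Let $C_1, \dots, C_k$ be the odd components of $G - S$, and for each $i$ let $m_i$ denote the number of edges of $G$ with one endpoint in $C_i$ and the other in $S$. Summing degrees inside $C_i$ gives
\[
3|C_i| \;=\; 2|E(C_i)| + m_i,
\]
so $m_i \equiv |C_i| \equiv 1 \pmod{2}$, i.e.\ each $m_i$ is odd. Since $G$ is bridgeless, no single edge separates $C_i$ from the rest of $G$, so $m_i \neq 1$, and combined with oddness this forces $m_i \geq 3$.

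Now I count the edges incident to $S$ in two ways. On one hand, each vertex of $S$ has degree $3$, so the number of edges with at least one endpoint in $S$ is at most $3|S|$. On the other hand, the edges between $S$ and the odd components contribute $\sum_{i=1}^{k} m_i \geq 3k$. Therefore $3k \leq 3|S|$, i.e.\ $k \leq |S|$, which is exactly Tutte's condition. Hence $G$ has a perfect matching.

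There is no real obstacle here beyond citing Tutte's theorem cleanly; the entire proof is the parity-plus-bridgeless observation that each odd component sends at least three edges back to $S$. The only subtlety to watch is that the graph may have parallel edges or self-loops (Petersen's original theorem assumes simple), but the argument above uses only the degree sum and the absence of bridges, both of which make sense and remain valid in the multigraph setting; a self-loop contributes $2$ to the degree of its vertex, which does not affect the parity calculation for $m_i$.
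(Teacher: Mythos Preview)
Your proof is correct and is in fact the standard textbook derivation of Petersen's theorem from Tutte's 1-factor theorem; the parity computation and the observation that $m_i=1$ would force a bridge are exactly right, and as you note the argument goes through unchanged for multigraphs since neither the handshake identity nor the definition of a bridge is sensitive to loops or parallel edges.

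The paper takes a different route: rather than invoking Tutte's theorem, it treats Petersen's classical result for \emph{simple} 3-regular bridgeless graphs as a black box and reduces the multigraph case to it. Self-loops are ruled out immediately (the third edge at such a vertex would be a bridge), a pair of vertices joined by three parallel edges is the whole graph, and each pair joined by exactly two parallel edges is replaced by a small four-vertex gadget to produce a simple 3-regular bridgeless graph $G'$; a perfect matching in $G'$ then pulls back to one in $G$. Your approach is more direct and more self-contained (modulo Tutte's theorem, which is the deeper ingredient), while the paper's approach is in keeping with the gadget-replacement style used throughout and only needs the nineteenth-century statement of Petersen's theorem rather than Tutte's stronger characterization.
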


\begin{proof}
 We may assume that $G$ is connected.
 If $G$ has a vertex $v$ with a self-loop, then the other edge of $v$ is a bridge since $G$ is 3-regular, which is a contradiction.
 If there exists some pair of vertices of $G$ joined by exactly three parallel edges, then $G$ has only these two vertices since it is connected and the theorem holds.

 In the remaining case, there exists some pair of vertices joined by exactly two parallel edges.
 We build a new graph $G'$ without any parallel edges.
 For vertices $u$ and $v$ joined by exactly two parallel edges,
 we remove these two parallel edges and introduce two new vertices $w_1$ and $w_2$.
 We also introduce the new edges $(u, w_1)$, $(u, w_2)$, $(v, w_1)$, $(v, w_2)$, and $(w_1, w_2)$.
 Then $G'$ is a 3-regular, bridgeless, simple graph.
 
 By Petersen's theorem, $G'$ has a perfect matching $P'$.
 Now we construct a perfect matching $P$ in $G$ using $P'$.
 We put any edge in both $G$ and $P'$ into $P$.
 If $u$ is matched by a new edge in $G'$, then $v$ must be matched by a new edge in $G'$ as well and we put the edge $(u, v)$ into $P$.
 If $u$ and $v$ are not matched by a new edge,
 then we do not add anything to $P$.
 It is easy to see that $P$ is a perfect matching in $G$.
\end{proof}

We use this result to show the existence of what we call a \emph{planar pairing} for any planar 3-regular graph,
which we use in our proof of $\SHARPP$-hardness.

\begin{definition}[Planar pairing]
 A \emph{planar pairing} in a graph $G = (V, E)$ is a set of edges $P \subset V \times V$
 such that $P$ is a perfect matching in the graph $(V, V \times V)$,
 and the graph $(V, E \union P)$ is planar.
\end{definition}

Obviously, a perfect matching in the original graph is a planar pairing.

\begin{lemma} \label{lem:planar:pairing}
 For any planar 3-regular graph $G$,
 there exists a planar pairing that can be computed in polynomial time.
\end{lemma}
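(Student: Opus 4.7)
The plan is to induct on $n := |V(G)|$. The base case $n = 2$ is trivial: the unique pairing $\{(v_1, v_2)\}$ is planar. For the inductive step, split on whether $G$ has a bridge. If $G$ is bridgeless, then Theorem~\ref{thm:arity4:PM_in_bridgeless} provides a perfect matching $M \subseteq E(G)$, and I output $P := M$; since $P \subseteq E(G)$, the graph $(V, E \cup P) = G$ is planar by hypothesis, and the standard constructive proof of Petersen's theorem runs in polynomial time.

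Otherwise $G$ has a bridge $e = (u, v)$. Let $G_1, G_2$ be the two components of $G - e$ with $u \in G_1$ and $v \in G_2$. In $G_i$ all vertices have degree $3$ except $u$ (resp.~$v$), which has degree $2$, and a parity count forces $|V(G_i)|$ to be odd. I form $G_1^*$ by suppressing $u$: letting $u_1, u_2$ denote its two neighbors in $G_1$, delete $u$ and add the edge $(u_1, u_2)$, which may be a loop or parallel edge. Then $G_1^*$ is a planar $3$-regular multigraph on an even number of vertices, and $G_2^*$ is constructed analogously. By the induction hypothesis, I obtain planar pairings $P_1^*$ of $G_1^*$ and $P_2^*$ of $G_2^*$ in polynomial time, and set
\[P := P_1^* \cup P_2^* \cup \{(u, v)\},\]
which is a perfect matching on $V(G)$. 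To verify that $(V, E \cup P)$ is planar, I lift a planar embedding of $G_i^* \cup P_i^*$ (furnished by the induction) to one of $G_i \cup P_i^*$ by subdividing the suppressed edge $(u_1, u_2)$ back into the path $u_1 - u - u_2$, an operation that preserves planarity. Using the standard fact that any face of a planar embedding of a connected graph can be moved to the outer face, I arrange $u$ (resp.~$v$) to lie on the outer face of $G_1 \cup P_1^*$ (resp.~$G_2 \cup P_2^*$), place the two drawings in disjoint regions of the plane, and draw the bridge $(u, v)$ in the shared outer region to obtain the required planar embedding of $(V, E \cup P)$.

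The hard part will be bookkeeping the multigraph artifacts produced by the suppression step when $u_1 = u_2$ or when $(u_1, u_2)$ already belongs to $E(G_1)$. Fortunately, every ingredient---including Theorem~\ref{thm:arity4:PM_in_bridgeless} and its proof as given in the excerpt---applies to multigraphs with loops, so no separate treatment is needed; the recursion simply descends further into the bridge case when suppression produces a self-loop, since a self-loop in a $3$-regular graph forces the incident non-loop edge to be a bridge. Polynomial runtime is immediate because each level of the recursion does polynomial work and the depth is at most $n/2$.
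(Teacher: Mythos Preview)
Your proof is essentially the same as the paper's: induct on the number of vertices, use Petersen's theorem (via a polynomial-time matching algorithm) in the bridgeless case, and in the bridge case suppress the two endpoints of the bridge, recurse on the resulting smaller $3$-regular planar (multi)graphs, and take the union of the two recursive pairings together with the bridge itself; the planarity of $(V, E\cup P)$ is then recovered by arranging the suppressed edges on the outer face and re-inserting the bridge. One small point of sloppiness: when you write ``let $G_1, G_2$ be the two components of $G-e$'' you are implicitly assuming $G$ is connected, whereas the paper disposes of the disconnected case explicitly before the bridge argument; this is easily patched and does not affect the substance of your argument.
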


\begin{proof}
 We efficiently find a planar pairing in $G$ by induction on the number of vertices in $G$.
 Since $G$ is a 3-regular graph, it must have an even number of vertices.
 If there are no vertices in $G$, then there is nothing to do.
 Suppose that $G$ has $n = 2 k$ vertices and that we can efficiently find a planar pairing in graphs containing fewer vertices.
 If $G$ is not connected, then we can already apply our inductive hypothesis on each connected component of $G$.
 The union of planar pairings in each connected component of $G$ is a planar pairing in $G$, so we are done.
 Otherwise assume that $G$ is connected.

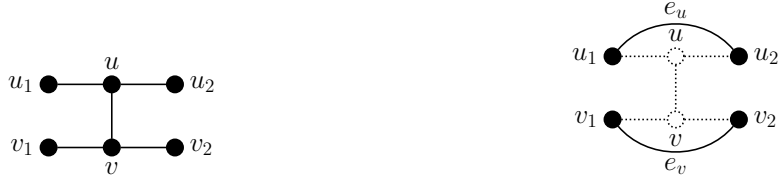
\begin{figure}[t]
 \centering
 \tikzstyle{removed} = [densely dotted, draw, black, fill=white, shape=circle]
 \def\bridgeWidth{6.6cm}
 \captionsetup[subfigure]{width=\bridgeWidth}
 \subcaptionbox{The neighborhood around $u$ and $v$ in $G$.\label{subfig:uv_neighborgood}}[\bridgeWidth]{
  \begin{tikzpicture}[scale=\scale,transform shape,node distance=\nodeDist,semithick,font=\LARGE]
   \node[internal] (0) [label=left:$u_1$]              {};
   \node[internal] (1) [below of=0, label=left:$v_1$]  {};
   \node[internal] (2) [right of=0, label=above:$u$]   {};
   \node[internal] (3) [right of=1, label=below:$v$]   {};
   \node[internal] (4) [right of=2, label=right:$u_2$] {};
   \node[internal] (5) [right of=3, label=right:$v_2$] {};
   \path (0) edge (2)
         (1) edge (3)
         (2) edge (3)
             edge (4)
         (3) edge (5);
  \end{tikzpicture}}
 \qquad
 \subcaptionbox{The same neighborhood in $H$.\label{subfig:neighborhood_without_uv}}[\bridgeWidth]{
  \begin{tikzpicture}[scale=\scale,transform shape,node distance=\nodeDist,semithick,font=\LARGE]
   \node[internal] (0) [            label=left:$u_1$]  {};
   \node[internal] (1) [below of=0, label=left:$v_1$]  {};
   \node[removed]  (2) [right of=0, label=above:$u$]   {};
   \node[removed]  (3) [right of=1, label=below:$v$]   {};
   \node[internal] (4) [right of=2, label=right:$u_2$] {};
   \node[internal] (5) [right of=3, label=right:$v_2$] {};
   \path (0) edge[densely dotted] (2)
         (1) edge[densely dotted] (3)
         (2) edge[densely dotted] (3)
             edge[densely dotted] (4)
         (3) edge[densely dotted] (5);
   \path (0) edge[out= 50, in= 130] node[label=above:$e_u$] {} (4)
         (1) edge[out=-50, in=-130] node[label=below:$e_v$] {} (5);
  \end{tikzpicture}}
 \caption{The neighborhood around $u$ and $v$ both before and after they are removed.}
 \label{fig:bridge_neighborhood}
\end{figure}

 Suppose that $G$ contains a bridge $(u,v)$.
 Let the three (though not necessarily distinct) neighbors of $u$ be $v$, $u_1$, and $u_2$,
 and let the three (though not necessarily distinct) neighbors of $v$ be $u$, $v_1$, and $v_2$ (see Figure~\ref{subfig:uv_neighborgood}).
 Furthermore, let $H_u$ be the connected component in $G - \{(u,v)\}$ containing $u$
 and let $H_v$ be the connected component in $G - \{(u,v)\}$ containing $v$.
 Consider the induced subgraph $H_u'$ of $H_u$ after adding the edge $e_u = (u_1, u_2)$ (which might be a self-loop) and removing $u$.
 Similarly, consider the induced subgraph $H_v'$ of $H_v$ after adding the edge $e_v = (v_1, v_2)$ (which might be a self-loop) and removing $v$.
 Both $H_u'$ and $H_v'$ are 3-regular graphs and their disjoint union gives a graph $H'$ with $n-2 = 2(k-1)$ vertices (see Figure~\ref{subfig:neighborhood_without_uv}).
 
 By induction on both $H_u'$ and $H_v'$,
 we have planar pairings $P_u$ and $P_v$ in $H_u'$ and $H_v'$ respectively.
 Let $H''$ be the graph $H'$ including the edges $P_u \union P_v$.
 If $H''$ contains both $e_u$ and $e_v$,
 then embed $H''$ in the plane so that both $e_u$ and $e_v$ are adjacent to the outer face.
 Then the graph $G$ including the edges $P_u \union P_v$ is also planar,
 so $P_u \union P_v \union \{(u,v)\}$ is a planar pairing in $G$.
 
 Otherwise, $G$ is bridgeless.
 Then by Theorem~\ref{thm:arity4:PM_in_bridgeless}, $G$ has a perfect matching, which is also a planar pairing in $G$.
 Since a perfect matching can be found in polynomial time by Edmond's blossom algorithm~\cite{Edm65},
 the whole procedure is in polynomial time.
\end{proof}

After publishing a preliminary version of this paper~\cite{GW13},
we learned that Cai and Kowalczyk had previously used the planar pairing technique to show that
counting the number of vertex covers over $k$-regular graphs is $\SHARPP$-hard for even $k \ge 4$ (see the proof of Lemma~15 in~\cite{CK13}).
Their algorithm to find a planar pairing starts by taking a spanning tree and then pairing up the vertices on this tree,
which is simpler than our approach.
We believe that it is worth emphasizing the importance of this technique.
Most gadget constructions in hardness proofs for Holant problems are local but the planar pairing technique is a global argument,
which permits reductions that are not otherwise possible.

Now we use the planar pairing technique to show the following.

\begin{lemma} \label{lem:arity4:double_root}
 If $v \in \mathbb{C} - \{0\}$,
 then $\PlHolant([v,1,0,0,0])$ is $\SHARPP$-hard.
\end{lemma}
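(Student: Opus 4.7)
The plan is to reduce from $\PlHolant([v,1,0,0])$, which computes the weighted matching polynomial on planar 3-regular graphs (weight $v$ per unmatched vertex, weight $1$ per matched one). For $v \neq 0$, this problem is $\SHARPP$-hard by Theorem \ref{thm:arity3:singleton}, since $[v,1,0,0]$ is a non-degenerate symmetric ternary signature that fails every tractability condition listed there. The task is to simulate an arity-3 vertex carrying $[v,1,0,0]$ using an arity-4 vertex carrying $[v,1,0,0,0]$ while preserving planarity.

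First I would invoke Lemma \ref{lem:arity4:double_root_interpolation} to obtain the binary signature $[1,0,0]$ for free in $\PlHolant([v,1,0,0,0])$. The key property of $[1,0,0]$, placed as a degree-2 vertex in the middle of an edge, is that it forces both endpoints of that edge to see the assignment $0$, and subdividing an edge is a planarity-preserving operation.

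Next, given an instance $\Omega$ of $\PlHolant([v,1,0,0])$ on a planar 3-regular graph $G = (V, E)$, apply Lemma \ref{lem:planar:pairing} to compute in polynomial time a planar pairing $P$ of $G$. The augmented graph $G' = (V, E \cup P)$ is planar and 4-regular, with every vertex incident to exactly three edges of $E$ and one edge of $P$. Form an instance $\Omega'$ of $\PlHolant([v,1,0,0,0])$ by assigning $[v,1,0,0,0]$ to every vertex of $G'$ and subdividing each pairing edge of $P$ with a $[1,0,0]$ vertex.

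In $\Omega'$, each pairing edge is pinned to $0$ at both ends, so at every vertex the signature $[v,1,0,0,0]$ is evaluated only at inputs of the form $(0,\cdot,\cdot,\cdot)$. Restricting $[v,1,0,0,0]$ to such inputs yields $[v,1,0,0]$ on the three remaining coordinates, which correspond precisely to the edges of $E$. Hence $\PlHolant_{\Omega'} = \PlHolant_{\Omega}$, completing the reduction. The main obstacle is not any local gadget calculation but the global existence of a planar pairing: without Lemma \ref{lem:planar:pairing} there is no evident way to add enough extra edges in a planar manner to bridge the arity gap from 3 to 4, and this is what makes the standard local gadget toolbox insufficient here.
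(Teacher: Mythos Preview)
Your proof is correct and follows essentially the same approach as the paper: reduce from $\PlHolant([v,1,0,0])$ via Theorem~\ref{thm:arity3:singleton}, use Lemma~\ref{lem:arity4:double_root_interpolation} to obtain $[1,0,0]$, apply Lemma~\ref{lem:planar:pairing} to find a planar pairing, and subdivide each pairing edge with a $[1,0,0]$ vertex so that each arity-4 vertex effectively computes $[v,1,0,0]$ on its three original edges. The only difference is cosmetic ordering of when $[1,0,0]$ is invoked relative to the planar pairing step.
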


\begin{proof}
 We reduce from $\PlHolant([v,1,0,0])$ to $\PlHolant([v,1,0,0,0])$.
 Since $\PlHolant([v,1,0,0])$ is $\SHARPP$-hard when $v \neq 0$ by Theorem~\ref{thm:arity3:singleton},
 this shows that $\PlHolant([v,1,0,0,0])$ is also $\SHARPP$-hard when $v \neq 0$.
 
 An instance of $\PlHolant([v,1,0,0])$ is a signature grid $\Omega$ with underlying graph $G = (V, E)$ that is planar and 3-regular.
 By Lemma~\ref{lem:planar:pairing}, there exists a planar pairing $P$ in $G$ and it can be found in polynomial time.
 Then the graph $G' = (V, E \union P)$ is planar and 4-regular.
 We assign $[v,1,0,0,0]$ to every vertex in $G'$.
 By Lemma~\ref{lem:arity4:double_root_interpolation}, we can assume that we have $[1,0,0]$.
 We replace each edge in $P$ with a path of length~2 to form a graph $G''$ and assign $[1,0,0] = [1,0]^{\otimes 2}$ to each of the new vertices.
 Then the signature grid $\Omega''$ with underlying graph $G''$ has the same Holant value as the original signature grid $\Omega$.
\end{proof}

Note that our proof of Lemma~\ref{lem:arity4:double_root} reduces $\PlHolant([v,1,0,0])$ to $\PlHolant([v,1,0,0,0])$ for all $v \in \mathbb{C}$.
Neither Lemma~\ref{lem:arity4:double_root_interpolation} nor Lemma~\ref{lem:arity4:double_root} ever considers the value of $v$.
This is consistent because both signatures are in $\mathscr{M}$ when $v = 0$, thus tractable,
and both signatures are $\SHARPP$-hard when $v$ is different from zero.

Now we are ready to prove our Pl-Holant dichotomy for a symmetric arity~4 signature.
A signature is called \emph{vanishing} if the Holant of any signature grid using only that signature is zero~\cite{CGW13}.

\begin{theorem} \label{thm:arity4:singleton}
 If $f$ is a non-degenerate, symmetric, complex-valued signature of arity 4 in Boolean variables,
 then $\PlHolant(f)$ is $\SHARPP$-hard unless
 $f$ is $\mathscr{A}$-transformable or $\mathscr{P}$-transformable or vanishing or $\mathscr{M}$-transformable,
 in which case the problem is in $\P$.
\end{theorem}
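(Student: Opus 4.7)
My plan is to combine the two hardness tools from this section---Corollary~\ref{cor:arity4:nonsingular_compressed_hard:symmetric} and Lemma~\ref{lem:arity4:double_root}---with a structural analysis of the remaining signatures. Corollary~\ref{cor:arity4:nonsingular_compressed_hard:symmetric} immediately handles every $f = [f_0, f_1, f_2, f_3, f_4]$ whose compressed signature matrix $\widetilde{M_f}$ is nonsingular, so I may assume that there exist $a, b, c \in \mathbb{C}$, not all zero, with $a f_k + b f_{k+1} + c f_{k+2} = 0$ for $k \in \{0, 1, 2\}$. This second-order linear recurrence sharply restricts the form of $f$ and drives the case analysis.

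Next, I would split on the recurrence. If $c \neq 0$ and the characteristic polynomial has two distinct roots $\alpha \neq \beta$, then $f = p[1,\alpha]^{\otimes 4} + q[1,\beta]^{\otimes 4}$ with $p, q \neq 0$ by non-degeneracy; a holographic transformation $T$ whose columns are scalings of $[1,\alpha]$ and $[1,\beta]$ reduces $f$ to a generalized equality, and I would then check when the edge image $[1,0,1] T^{\otimes 2}$ also lies in $\mathscr{A}$, $\mathscr{P}$, or $\mathscr{M}$---these parameter regimes pick out the $\mathscr{A}$-, $\mathscr{P}$-, or $\mathscr{M}$-transformable tractable cases. If $c \neq 0$ with a double root, then $f_k = (p + qk)\alpha^k$, which after normalization falls into the vanishing class. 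If $c = 0$, the recurrence degenerates and, up to the bit-reversal transformation $\bigl[\begin{smallmatrix} 0 & 1 \\ 1 & 0 \end{smallmatrix}\bigr]$, the only non-degenerate options are the generalized equality $[f_0, 0, 0, 0, f_4] \in \mathscr{P}$ (tractable) and the signature $[v, 1, 0, 0, 0]$; the latter is in $\mathscr{M}$ when $v = 0$ and is $\SHARPP$-hard when $v \neq 0$ by Lemma~\ref{lem:arity4:double_root}. For any $f$ satisfying the recurrence but escaping all four tractable classes, I would construct a small gadget---such as a self-loop or a short linking of copies of $f$---whose signature has nonsingular compressed matrix, reducing the argument back to Corollary~\ref{cor:arity4:nonsingular_compressed_hard:symmetric}.

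The main obstacle is the distinct-roots subcase. The two-parameter family $\{p[1,\alpha]^{\otimes 4} + q[1,\beta]^{\otimes 4}\}$ intersects each tractable class in a thin subvariety carved out by the edge condition $[1,0,1] T^{\otimes 2}$ (for example $\alpha\beta = -1$, $\{\alpha,\beta\} = \{i,-i\}$, or $p = \pm q$), and outside these subvarieties the hardness argument depends on exhibiting an auxiliary gadget with nonsingular compressed matrix even though $\widetilde{M_f}$ itself is singular. A parallel challenge arises in the double-root subcase when matching the explicit characterization of vanishing signatures; here the anti-gadget technique paired with Lemma~\ref{lem:unary_recursive_construction} may again be useful.
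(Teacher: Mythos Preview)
Your double-root claim is wrong, and this is the real gap. The signature $f_k = (p+qk)\alpha^k$ is vanishing only when $\alpha = \pm i$; for any other $\alpha$ it is not. For example, $[0,1,2,3,4]$ (take $\alpha=1$, $p=0$, $q=1$) has positive Holant on any signature grid. The paper handles the non-isotropic double-root case by an \emph{orthogonal} holographic transformation by $\tfrac{1}{\sqrt{1+\alpha^2}}\left[\begin{smallmatrix}1&\alpha\\\alpha&-1\end{smallmatrix}\right]$, which sends $f$ to $[v,1,0,0,0]$ for some $v$; then $v=0$ gives $\mathscr{M}$-transformability and $v\neq 0$ is handled by Lemma~\ref{lem:arity4:double_root}. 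So $[v,1,0,0,0]$ is the destination of the \emph{entire} double-root family, not merely of your $c=0$ bucket. Relatedly, your $c=0$ case analysis is incomplete: when $c=0$ but $a,b\neq 0$ you get $f=[f_0, r f_0, r^2 f_0, r^3 f_0, f_4]$ with $f_4$ free, which is neither a generalized equality nor $[v,1,0,0,0]$. The paper avoids this mess by splitting on the discriminant $b^2-4ac$ rather than on whether $c=0$.

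In the distinct-roots case your plan to ``construct a small gadget with nonsingular compressed matrix'' is both vague and unnecessary. After the holographic transformation sending $f$ to $=_4$, the edge $=_2$ becomes some binary $[g_0,g_1,g_2]$, and the paper simply invokes Theorem~\ref{thm:degree_prescribed_homomorphism} with $\mathcal{G}=\{=_4\}$, which is already a full dichotomy for $\plholant{[g_0,g_1,g_2]}{=_4}$. That theorem delivers hardness directly in every case not matching one of the tractable conditions, so no auxiliary gadget is needed; checking those tractable conditions is exactly your ``edge-image'' analysis, but the hardness half comes for free from the cited theorem rather than from a fresh construction.
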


\begin{proof}
 Let $f = [f_0, f_1, f_2, f_3, f_4]$.
 If there do not exist $a, b, c \in \mathbb{C}$, not all zero,
 such that for all $k \in \{0,1,2\}$, $a f_k + b f_{k+1} + c f_{k+2} = 0$,
 then $\PlHolant(f)$ is $\SHARPP$-hard by Corollary~\ref{cor:arity4:nonsingular_compressed_hard:symmetric}.
 Otherwise, there do exist such $a, b, c$.
 If $a = c = 0$, then $b \ne 0$, so $f_1 = f_2 = f_3 = 0$.
 In this case, $f \in \mathscr{P}$ is a generalized equality signature, so $f$ is $\mathscr{P}$-transformable.
 Now suppose $a$ and $c$ are not both 0.
 If $b^2 - 4 a c \ne 0$,
 then $f_k = \alpha_1^{4-k} \alpha_2^k + \beta_1^{4-k} \beta_2^k$,
 where $\alpha_1 \beta_2 - \alpha_2 \beta_1 \ne 0$.
 A holographic transformation by $\left[\begin{smallmatrix} \alpha_1 & \beta_1 \\ \alpha_2 & \beta_2 \end{smallmatrix}\right]$ transforms $f$ to $=_4$
 and we can use Theorem~\ref{thm:degree_prescribed_homomorphism} to show that $f$ is either $\mathscr{A}$-, $\mathscr{P}$-, or $\mathscr{M}$-transformable
 unless $\PlHolant(f)$ is $\SHARPP$-hard.
 Otherwise, $b^2 - 4 a c = 0$ and there are two cases.
 In the first,  for any $0 \le k \le 2$, $f_k = c k \alpha^{k-1} + d \alpha^k$, where $c \ne 0$.
 In the second, for any $0 \le k \le 2$, $f_k = c (4-k) \alpha^{3-k} + d \alpha^{4-k}$, where $c \ne 0$.
 These cases map between each other under a holographic transformation by $\left[\begin{smallmatrix} 0 & 1 \\ 1 & 0\end{smallmatrix}\right]$,
 so assume that we are in the first case.
 If $\alpha = \pm i$, then $f$ is vanishing.
 Otherwise, a further holographic transformation by $\frac{1}{\sqrt{1 + \alpha^2}} \left[\begin{smallmatrix} 1 & \alpha \\ \alpha & -1 \end{smallmatrix}\right]$
 transforms $f$ to $\hat{f} = [v,1,0,0,0]$ for some $v \in \mathbb{C}$ after normalizing the second entry.
 (See Appendix~B in~\cite{CGW13} for details.)
 If $v = 0$, then the problem is counting perfect matchings over planar 4-regular graphs,
 so $\hat{f} \in \mathscr{M}$ and $f$ is $\mathscr{M}$-transformable.
 Otherwise, $v \ne 0$ and we are done by Lemma~\ref{lem:arity4:double_root}.
\end{proof}

\section{Domain Pairing} \label{sec:pairing}

Now we turn our attention to our main result, a dichotomy for the $\PlCSP$ framework.
In this section, we discuss a technique called \emph{domain pairing},
which pairs input variables to simulate a problem on a domain of size four and then reduces a problem in the Boolean domain to it.
As explained in the introduction, we work in the Hadamard basis instead of the standard basis.
The goal then becomes a dichotomy for $\PlHolant(\mathcal{F} \union \widehat{\EQ})$.

In~\cite{CGW13}, a simple interpolation lemma for non-degenerate, generalized equality signatures of arity at least~3 was proved.
Although the lemma was only for general graphs, it was mentioned that it also holds for planar graphs.

\begin{lemma}[Lemma~A.2 in~\cite{CGW13}] \label{lem:=4_from_generalized_=}
 Let $a, b \in \mathbb{C}$.
 If $a b \ne 0$, then for any set $\mathcal{F}$ of complex-weighted signatures containing $[a, 0, \dotsc, 0, b]$ of arity at least~3,
 \[\PlHolant(\mathcal{F} \union \{{=}_4\}) \le_T \PlHolant(\mathcal{F}).\]
\end{lemma}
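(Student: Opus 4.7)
The plan is to realize $=_4$ by polynomial interpolation. Write $f = [a, 0, \ldots, 0, b]$ of arity $n \geq 3$. The key structural observation is that $f$ has support only on all-zero and all-one inputs, so any connected gadget built entirely from copies of $f$ forces, for any nonzero contribution, all its edges to share a common value. Consequently, a connected planar $\mathcal{F}$-gate $G_k$ using $k$ copies of $f$ and leaving exactly four dangling edges realizes the arity-$4$ signature $g_k := [a^k, 0, 0, 0, b^k]$. A handshake count shows such a $G_k$ exists exactly when $kn - 4$ is a nonnegative even integer, which holds for infinitely many $k$ (every $k \geq 2$ if $n$ is even; every even $k \geq 2$ if $n$ is odd). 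For concreteness I would use a planar chain of $k$ copies in which consecutive copies share $n-2$ and $2$ edges in alternation to balance the degrees, placing two dangling edges at each end.

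Next I would split on the multiplicative order of $a/b$. If $a/b$ is a root of unity of order $\ell$, I would choose a valid $k$ that is also a multiple of $\ell$ (possible since the valid values of $k$ form an arithmetic progression). Then $a^k = b^k$, so $g_k = a^k \cdot {=}_4$; since signatures equal up to a nonzero scalar define the same problem, $=_4$ is already realized. Otherwise $a/b$ has infinite multiplicative order, so the values $(a/b)^k$ for valid $k$ are pairwise distinct, and I would run the standard interpolation. Given an instance $\Omega$ of $\PlHolant(\mathcal{F} \cup \{=_4\})$ with $m$ occurrences of $=_4$, I would stratify the assignments by the number $j$ of $=_4$-vertices whose incident edges are all assigned $0$ (with the remaining $m - j$ all assigned $1$), and let $c_j$ be the resulting sum of products of the non-$=_4$ signature evaluations, so that $\PlHolant_\Omega = \sum_{j=0}^m c_j$. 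Replacing each $=_4$ with a planar copy of $G_k$ yields a planar signature grid $\Omega_k$ for $\PlHolant(\mathcal{F})$ satisfying
\[
\PlHolant_{\Omega_k} \;=\; \sum_{j=0}^m a^{kj}\, b^{k(m-j)}\, c_j \;=\; b^{km} \sum_{j=0}^m (a/b)^{kj}\, c_j.
\]
Choosing $m+1$ valid values of $k$ yields a Vandermonde linear system in the unknowns $c_0, \ldots, c_m$ with pairwise distinct nodes $(a/b)^k$, hence full rank, which one solves in polynomial time to recover $\PlHolant_\Omega$.

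The point that requires the most care is the planar gadget construction: for odd $n$ the parity condition forces $k$ even, but infinitely many valid $k$ still remain, which is all that interpolation needs. The apparent failure mode of interpolation---when the nodes $(a/b)^k$ collide because $a/b$ is a root of unity---is benign here precisely because in that regime a single gadget already produces $=_4$ up to a nonzero scalar.
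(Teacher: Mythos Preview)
The paper does not give its own proof of this lemma; it is quoted as Lemma~A.2 of~\cite{CGW13} and described only as ``a simple interpolation lemma'' that ``also holds for planar graphs.'' Your approach---build connected planar gadgets from copies of $f=[a,0,\dotsc,0,b]$ to realize $[a^k,0,0,0,b^k]$ for infinitely many $k$, then either hit $a^k=b^k$ directly (root-of-unity case) or interpolate via a Vandermonde system---is exactly the natural argument and is correct in outline.

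One small inaccuracy: your alternating chain with two dangling edges at each end forces $e_1=n-2$, $e_2=2$, $e_3=n-2,\dotsc$, and the last copy balances only when $e_{k-1}=n-2$, i.e.\ when $k$ is even (for general $n$). So the claim ``every $k\ge 2$ if $n$ is even'' is not supported by that particular construction unless $n=4$. This is harmless: taking all even $k\ge 2$ already gives infinitely many gadgets, and both branches of your case split go through unchanged (in the root-of-unity branch choose $k=2\ell$; in the infinite-order branch the nodes $(a/b)^{2},(a/b)^{4},\dotsc$ are pairwise distinct since $(a/b)^2$ also has infinite order). Alternatively, you can first self-loop $f$ down to arity~$3$ or~$4$ (a self-loop on $[a,0,\dotsc,0,b]$ of arity $m$ returns $[a,0,\dotsc,0,b]$ of arity $m-2$) and then chain copies of the resulting arity-$4$ signature by two edges at a time, which cleanly yields $[a^k,0,0,0,b^k]$ for every $k\ge 1$.
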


By a simple parity argument, gadgets constructed with signatures of even arity can only realize other signatures of even arity.
In particular, this means that $=_4$ cannot by itself be used to construct $=_3$.
Nevertheless, there is a clever argument that can realize $=_3$ using $=_4$.
The catch is the domain changes from individual elements to pairs of elements.
Thus, we call this reduction technique \emph{domain pairing}.
This technique was first used in the proof of Lemma~III.2 in~\cite{CLX10} with real weights.
It was also used in the proof of Lemma~4.6 in~\cite{GLV13} in the parity case
and in Lemma~IV.5 in~\cite{HL12} with real weights as well as grouping more than just two domain elements.
We prove a generalization of the domain pairing lemma for complex weights.

\begin{lemma}[Domain pairing] \label{lem:domain_pairing:strong}
 Let $a, b, x, y \in \mathbb{C}$.
 If $a b y \ne 0$ and $x^2 \ne y^2$,
 then for any set $\mathcal{F}$ of complex-valued symmetric signatures containing $[x,0,y,0]$ and $[a,0,\dotsc,0,b]$ of arity at least~3,
 $\PlHolant(\mathcal{F} \union \widehat{\EQ})$ is $\SHARPP$-hard.
\end{lemma}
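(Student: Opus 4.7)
My plan is to prove this by realizing a symmetric arity-2 planar $\mathcal{F}$-gate that fails every tractability condition of Theorem~\ref{thm:Pl-Graph_Homomorphism} and then invoking that theorem.

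First, using $[1,0,1] \in \widehat{\EQ}$ to repeatedly self-loop pairs of edges, I shorten $[a, 0, \dotsc, 0, b]$ two edges at a time. When $n$ is odd this eventually yields $[a, 0, 0, b]$ of arity~3, and one further self-loop on two of its three edges yields the unary $[a, b]$. Attaching $[a, b]$ to one edge of $[x, 0, y, 0]$ produces the arity-2 gate $g = [ax,\, by,\, ay]$. Separately, attaching $[a, b]$ to two of the three edges of one copy of $[x, 0, y, 0]$ yields the unary $[xa^2 + yb^2,\, 2yab]$; composing with one edge of a second copy of $[x, 0, y, 0]$ then yields
\[g' = \bigl[x(xa^2 + yb^2),\; 2y^2 ab,\; y(xa^2 + yb^2)\bigr].\]

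Next I verify that under $aby \ne 0$ and $x^2 \ne y^2$, at least one of $g, g'$ satisfies none of the four conditions of Theorem~\ref{thm:Pl-Graph_Homomorphism}. For $g$, conditions~\ref{case:Pl-GH:M}, \ref{case:Pl-GH:A}, \ref{case:Pl-GH:P-hat} fail automatically from $by \ne 0$ and $x \ne \pm y$, so $g$ is $\SHARPP$-hard unless $a^2 x = b^2 y$. In the remaining subcase $a^2 x = b^2 y$, I show $g'$ is hard instead: conditions~\ref{case:Pl-GH:M} and \ref{case:Pl-GH:A} for $g'$ always fail (directly from $aby \ne 0$); condition~\ref{case:Pl-GH:P-hat} fails iff $xa^2 + yb^2 \ne 0$, and here $xa^2 + yb^2 = 2yb^2 \ne 0$; and condition~\ref{case:Pl-GH:degenerate} for $g'$ reduces (after dividing through by $y^3 b^2$) to $b^2 x = a^2 y$, which combined with $a^2 x = b^2 y$ forces $a^4 = b^4$, and each of $b \in \{\pm a,\, \pm ia\}$ then forces $x \in \{\pm y,\, \pm iy\}$, all contradicting $x^2 \ne y^2$.

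The hard part will be the case $n$ even. Because then every signature in $\{[x, 0, y, 0], [a, 0, \dotsc, 0, b]\} \cup \widehat{\EQ}$ is supported on inputs of even Hamming weight, no planar $\mathcal{F}$-gate in this setting can produce a mixed-parity signature, and in particular the gate $g = [ax, by, ay]$ (whose middle entry $by$ is nonzero) is not directly constructible. I would handle this case by passing to the Hadamard basis: the problem $\PlHolant(\mathcal{F} \cup \widehat{\EQ})$ is equivalent to $\PlHolant(\widehat{\mathcal{F}} \cup \EQ)$, the transformed signatures $\widehat{[x, 0, y, 0]} = [x + 3y,\, x - y,\, x - y,\, x + 3y]$ and the alternating $\widehat{[a, 0, \dotsc, 0, b]}$ are now mixed-parity, the full set $\EQ$ is available, and analogous arity-2 constructions combined with Theorem~\ref{thm:Pl-Graph_Homomorphism} in its $\EQ$-form (which has the same tractability conditions) yield a binary gate to which the same case analysis applies.
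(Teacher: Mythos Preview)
Your odd-arity case is correct and is a genuinely different argument from the paper's: instead of domain pairing, you directly build a binary gate and invoke Theorem~\ref{thm:Pl-Graph_Homomorphism}. (One minor slip: from $a^2x=b^2y$ and $b^2x=a^2y$ you should multiply to get $x^2=y^2$ directly; your claim that some subcase forces $x=\pm iy$ is wrong, though the contradiction still goes through since every subcase actually forces $x=\pm y$.) This route is pleasantly elementary when it applies.

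The even-arity case, however, is a genuine gap. You correctly observe that every signature in $\{[x,0,y,0],[a,0,\dotsc,0,b]\}\cup\widehat{\EQ}$ then has even parity, so your gate $g=[ax,by,ay]$ is unreachable. But your fix---pass to the Hadamard basis and do ``analogous arity-2 constructions\ldots to which the same case analysis applies''---is not a proof. You do not say which binary gate to build, and it is not at all clear one exists with the right properties: for example, attaching $=_1=[1,1]\in\EQ$ to $\widehat{[x,0,y,0]}=[x+3y,x-y,x-y,x+3y]$ yields $[x+y,x-y,x+y]$, which satisfies condition~\ref{case:Pl-GH:P-hat} of Theorem~\ref{thm:Pl-Graph_Homomorphism} and is therefore useless. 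You would need to exhibit a concrete gadget and carry out a fresh case analysis; ``the same case analysis'' does not transfer, since the transformed signatures have an entirely different shape.

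The paper sidesteps this parity split entirely. Its proof never tries to manufacture a forbidden-parity signature; instead it uses Lemma~\ref{lem:=4_from_generalized_=} to obtain $=_4$ (hence all $=_{2k}$) from $[a,0,\dotsc,0,b]$, and then reduces $\plholant{[x,y,y]}{\EQ}$ to the target problem by \emph{domain pairing}: each edge becomes a bundle of two, each $=_k$ becomes $=_{2k}$, and the single-edge join of $[x,0,y,0]$ with $[1,0,1,0]\in\widehat{\EQ}$ simulates $[x,y,y]$ on the paired domain $\{(0,0),(1,1)\}$. Hardness then comes from Theorem~\ref{thm:Pl-Graph_Homomorphism} applied to $[x,y,y]$, which fails every tractable condition exactly when $y\ne0$ and $x^2\ne y^2$. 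This argument is uniform in the arity of $[a,0,\dotsc,0,b]$ and is what you should use to close the even case.
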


\begin{proof}
 We reduce from $\plholant{[x,y,y]}{\EQ}$ to $\PlHolant(\mathcal{F} \union \widehat{\EQ})$.
 Since $\plholant{[x,y,y]}{\EQ}$ is $\SHARPP$-hard when $y \ne 0$ and $x^2 \ne y^2$ by Theorem~\ref{thm:Pl-Graph_Homomorphism},
 this shows that $\PlHolant(\mathcal{F} \union \widehat{\EQ})$ is also $\SHARPP$-hard.
 
 An instance of $\plholant{[x,y,y]}{\EQ}$ is a signature grid $\Omega$ with underlying graph $G = (U, V, E)$.
 In addition to $G$ being bipartite and planar, every vertex in $U$ has degree~$2$.
 We replace every vertex in $V$ of degree $k$ (which is assigned ${=}_k \in \EQ$) with a vertex of degree $2k$,
 and bundle two adjacent variables to form $k$ bundles of~2 edges each.
 The $k$ bundles correspond to the $k$ incident edges of the original vertex with degree $k$.
 By Lemma~\ref{lem:=4_from_generalized_=}, we have $=_4$,
 which we use to construct $=_{2k}$ for any $k$.
 Then we assign $=_{2k}$ to the new vertices of degree $2k$.
 
 If the inputs to these equality signatures are restricted to $\{(0,0), (1,1)\}$ on each bundle,
 then these equality signatures take value~1 on $((0,0), \dotsc, (0,0))$ and $((1,1), \dotsc, (1,1))$ and take value~0 elsewhere.
 Thus, if we restrict the domain to $\{(0,0), (1,1)\}$, it is the equality signature $=_k$.

\begin{figure}[t]
 \centering
 \begin{tikzpicture}[scale=\scale,transform shape,node distance=\nodeDist,semithick,font=\LARGE]
  \node[external]  (0) [            label=left:$a_1$]  {};
  \node[external]  (1) [below of=0, label=left:$a_2$]  {};
  \node[internal]  (2) [right of=0]                    {};
  \node[internal]  (3) [right of=1]                    {};
  \node[external]  (4) [right of=2, label=right:$b_1$] {};
  \node[external]  (5) [right of=3, label=right:$b_2$] {};
  \path (0) edge node[very near end]   (n1) {} (2)
        (1) edge                               (3)
        (2) edge node[label=right:$c$]      {} (3)
            edge                               (4)
        (3) edge node[very near start] (n2) {} (5);
  \begin{pgfonlayer}{background}
   \node[draw=\borderColor,thick,rounded corners,inner sep=2mm,fit = (n1) (n2)] {};
  \end{pgfonlayer}
 \end{tikzpicture}
 \caption{Gadget designed for the paired domain. One vertex is assigned $[1,0,1,0]$ and the other is assigned $[x,0,y,0]$.}
 \label{fig:gadget:domain_pairing}
\end{figure}
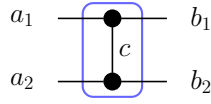
 
 To simulate $[x,y,y]$,
 we connect $f = [x,0,y,0]$ to $g = [1,0,1,0] \in \widehat{\EQ}$ by a single edge as shown in Figure~\ref{fig:gadget:domain_pairing} to form a gadget with signature
 \[h(a_1, a_2, b_1, b_2) = \sum_{c=0,1} f(a_1, b_1, c) g(a_2, b_2, c).\]
 We replace every (degree~2) vertex in $U$ (which is assigned $[x,y,y]$) by a degree~4 vertex assigned $h$,
 where the variables of $h$ are bundled as $(a_1, a_2)$ and $(b_1, b_2)$.
 
 The vertices in this new graph $G'$ are connected as in the original graph $G$,
 except that every original edge is replaced by two edges that connect to the same side of the gadget in Figure~\ref{fig:gadget:domain_pairing}.
 Notice that $h$ is only connected by $(a_1, a_2)$ and $(b_1, b_2)$ to some bundle of two incident edges of an equality signature.
 Since this equality signature enforces that the value on each bundle is either $(0,0)$ or $(1,1)$,
 we only need to consider the restriction of $h$ to the domain $\{(0,0), (1,1)\}$.
 On this domain, $h = [x,y,y]$ is a \emph{symmetric} signature of arity~2.
 Therefore, the signature grid $\Omega'$ with underlying graph $G'$ has the same Holant value as the original signature grid $\Omega$.
\end{proof}

There are two scenarios that lead to Lemma~\ref{lem:domain_pairing:strong}.
The proof of the first is immediate.

\begin{corollary} \label{cor:domain_pairing}
 Let $a, b \in \mathbb{C}$.
 If $a b x y \ne 0$ and $x^4 \ne y^4$,
 then for any set $\mathcal{F}$ of complex-weighted symmetric signatures containing $[x,0,y]$ and $[a,0,\dotsc,0,b]$ of arity at least~3,
 $\PlHolant(\mathcal{F} \union \widehat{\EQ})$ is $\SHARPP$-hard.
\end{corollary}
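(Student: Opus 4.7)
The plan is to reduce to Lemma~\ref{lem:domain_pairing:strong} by realizing, as a planar $(\mathcal{F} \cup \widehat{\EQ})$-gate, a ternary symmetric signature of the form $[x',0,y',0]$ that satisfies the hypotheses of that lemma. Since $\widehat{\EQ}$ contains $\widehat{=_3} = [1,0,1,0]$, the natural gadget to try is the one obtained by 2-stretching each of the three external edges of $[1,0,1,0]$ with a copy of $[x,0,y]$. This is manifestly planar, since we are only attaching a degree-2 vertex to each dangling edge of a single vertex.

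To compute the resulting signature, I would view the binary $[x,0,y]$ as the diagonal matrix $D = \operatorname{diag}(x,y)$. Inserting a copy of $[x,0,y]$ on an edge of a signature simply scales by $x$ if that edge carries $0$ and by $y$ if it carries $1$. Applying this scaling on all three edges of $[1,0,1,0]$ and reading off the values at each Hamming weight gives
\[
  [\,x^3,\; 0,\; x y^2,\; 0\,],
\]
where the middle entries remain $0$ because $[1,0,1,0]$ vanishes at Hamming weights $1$ and $3$, and the Hamming-weight-$2$ entries are all equal to $x y^2$ (one factor of $x$ from the $0$-edge, two factors of $y$ from the $1$-edges).

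Next I would invoke Lemma~\ref{lem:domain_pairing:strong} with this realized signature in the role of $[x,0,y,0]$, together with the given generalized equality $[a,0,\dotsc,0,b]$. Setting $x' = x^3$ and $y' = x y^2$, the required hypothesis $a b y' \neq 0$ becomes $a b x y^2 \neq 0$, which is immediate from $a b x y \neq 0$; the hypothesis $(x')^2 \neq (y')^2$ becomes $x^6 \neq x^2 y^4$, and dividing by $x^2 \neq 0$ this is precisely $x^4 \neq y^4$, the other hypothesis of the corollary. Lemma~\ref{lem:domain_pairing:strong} then yields $\SHARPP$-hardness of $\PlHolant(\mathcal{F} \cup \widehat{\EQ})$.

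There is essentially no obstacle to this plan: both the gadget analysis and the hypothesis verification are short algebraic checks. The only point worth highlighting is that the strengthening from $x^2 \neq y^2$ in Lemma~\ref{lem:domain_pairing:strong} to $x^4 \neq y^4$ in the corollary is exactly what is needed for the cubed factor $x^6 \neq x^2 y^4$ to reduce to a clean condition on $x$ and $y$.
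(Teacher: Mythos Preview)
Your proposal is correct and is essentially identical to the paper's own proof: the paper also attaches three copies of $[x,0,y]$ to the edges of $[1,0,1,0]\in\widehat{\EQ}$ to obtain $x[x^2,0,y^2,0]=[x^3,0,xy^2,0]$ and then invokes Lemma~\ref{lem:domain_pairing:strong}. The only cosmetic difference is that the paper factors out the nonzero scalar $x$ before checking the hypotheses, whereas you work with $(x',y')=(x^3,xy^2)$ directly; the verification is the same either way.
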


\begin{proof}
 Connect three copies of $[x,0,y]$ to $[1,0,1,0]$, with one on each edge, to get $x [x^2,0,y^2,0]$ and apply Lemma~\ref{lem:domain_pairing:strong}.
\end{proof}

The second scenario that leads to Lemma~\ref{lem:domain_pairing:strong} is Lemma~\ref{lem:domain_pairing:weak}.
The proof of Lemma~\ref{lem:domain_pairing:weak} applies Corollary~\ref{cor:domain_pairing} after interpolating a unary signature in one of two ways.
The next lemma considers one of those ways.

\begin{lemma} \label{lem:unary_interpolation:EQ-hat}
 Suppose $x \in \mathbb{C}$ and let $f = [1,x,1]$.
 If $x \not\in \{0, \pm 1\}$ and $M_f$ has infinite order modulo a scalar,
 then for any set $\mathcal{F}$ of complex-weighted symmetric signatures containing $f$ and for any $a, b \in \mathbb{C}$, we have
 \[\PlHolant(\mathcal{F} \union \{[a,b]\} \union \widehat{\EQ}) \le_T \PlHolant(\mathcal{F} \union \widehat{\EQ}).\]
\end{lemma}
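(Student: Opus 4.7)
The plan is to apply the recursive unary construction and its interpolation guarantee from Lemma~\ref{lem:unary_recursive_construction} directly, using $f$ itself as the recursive gadget and a single copy of $[1,0] \in \widehat{\EQ}$ as the starter gadget. This keeps the construction trivially planar, since it is just a linear chain of copies of $f$ capped off at one end by $[1,0]$, with one dangling edge at the other end.

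Concretely, I would set the recursive gadget to be a single vertex labeled $f = [1,x,1]$, so its signature matrix is
\[
M \;=\; M_f \;=\; \begin{bmatrix} 1 & x \\ x & 1 \end{bmatrix},
\]
and set the starter to be $s = \transpose{[1,0]}$, which is available because $[1,0] \in \widehat{\EQ}$. Then I need to check the three hypotheses of Lemma~\ref{lem:unary_recursive_construction}. First, $\det(M) = 1 - x^2 \ne 0$ since $x \ne \pm 1$. Second, $M s = \transpose{[1, x]}$, so
\[
\det\bigl([\,s \; M s\,]\bigr) \;=\; \det\begin{bmatrix} 1 & 1 \\ 0 & x \end{bmatrix} \;=\; x \;\ne\; 0
\]
since $x \ne 0$. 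Third, $M_f$ has infinite order modulo a scalar by hypothesis. Thus Lemma~\ref{lem:unary_recursive_construction} yields $\PlHolant(\mathcal{F} \cup \widehat{\EQ} \cup \{[a,b]\}) \le_T \PlHolant(\mathcal{F} \cup \widehat{\EQ})$ for every $a, b \in \mathbb{C}$, which is the desired conclusion.

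There is essentially no obstacle here once the infrastructure of Section~\ref{sec:interpolation} is in place; the hypotheses of the lemma are precisely tuned so that the three conditions of Lemma~\ref{lem:unary_recursive_construction} hold for this particular choice of $M$ and $s$. The condition $x \ne \pm 1$ ensures nonsingularity of $M$, the condition $x \ne 0$ ensures that the starter $[1,0]$ is not already a column eigenvector of $M$, and the infinite-order assumption on $M_f$ supplies the third condition directly. The only design decision is to use $[1,0]$ as the starter, which is natural since $\widehat{\EQ}$ is assumed to be present and this is its unique unary signature.
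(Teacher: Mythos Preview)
Your proposal is correct and matches the paper's proof essentially verbatim: the paper also uses the recursive unary construction $(M_f, s)$ with $s = \transpose{[1,0]}$, checks $\det(M_f) = 1 - x^2 \ne 0$ and $\det([s\ M_f s]) = x \ne 0$, invokes the infinite-order hypothesis, and concludes via Lemma~\ref{lem:unary_recursive_construction}. Your write-up is somewhat more explicit about why $[1,0]$ is available and why the construction is planar, but the argument is identical.
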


\begin{proof}
 Consider the recursive unary construction $(M_f,s)$, where $s = \transpose{[1,0]}$.
 The determinant of $M_f$ is $1 - x^2 \ne 0$.
 The determinant of $[s\ M_f s]$ is $x \ne 0$.
 By assumption, $M_f$ has infinite order modulo a scalar.
 Therefore, we can interpolate any unary signature by Lemma~\ref{lem:unary_recursive_construction}.
\end{proof}

\begin{lemma} \label{lem:domain_pairing:weak}
 Let $a, b \in \mathbb{C}$.
 If $a b \ne 0$ and $a^4 \ne b^4$,
 then for any set $\mathcal{F}$ of complex-weighted symmetric signatures containing $f = [a,0,\dotsc,0,b]$ of arity at least~3,
 $\PlHolant(\mathcal{F} \union \widehat{\EQ})$ is $\SHARPP$-hard.
\end{lemma}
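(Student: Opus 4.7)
The plan is to apply Corollary~\ref{cor:domain_pairing}, which requires a binary $[x,0,y]$ with $xy \ne 0$ and $x^4 \ne y^4$ (the hypothesized $f$ itself supplies the required generalized equality of arity at least $3$). I first reduce arity by self-looping: each self-loop of $f = [a,0,\dotsc,0,b]$ with $(=_2) = [1,0,1] \in \widehat{\EQ}$ preserves the generalized-equality form while dropping arity by two. Starting from arity $n$, this reaches either $[a,0,b]$ (when $n$ is even) or $f' = [a,0,0,b]$ of arity $3$ (when $n$ is odd), and one further self-loop in the odd case produces the unary $[a,b]$. When $n$ is even, Corollary~\ref{cor:domain_pairing} applies immediately with $[x,0,y] = [a,0,b]$, so henceforth I assume $n$ is odd and have $f'$ together with $[a,b]$ at my disposal.

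For odd $n$, the plan is to interpolate the unary $[1,1]$ and then attach it to one edge of $f'$, yielding $[a,0,b]$ to which Corollary~\ref{cor:domain_pairing} then applies. I interpolate $[1,1]$ via one of two planar recursive constructions. Construction~1 invokes Lemma~\ref{lem:unary_recursive_construction} with recursive-gadget matrix $M_1 = \operatorname{diag}(a^2,b^2)$, obtained by connecting two copies of $f'$ by a single edge (giving $[a^2,0,0,0,b^2]$) and then self-looping with $(=_2)$, and with starter $s = [a,b]$. The three hypotheses of Lemma~\ref{lem:unary_recursive_construction} reduce to $ab \ne 0$, $a^2 \ne b^2$ (both granted by $a^4 \ne b^4$), and the crucial condition that $a^2/b^2$ not be a root of unity. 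Construction~2 invokes Lemma~\ref{lem:unary_interpolation:EQ-hat} with the binary $[a,b,a] = a \cdot [1, b/a, 1]$, obtained by attaching $[a,b]$ to one edge of $\widehat{=_3} = [1,0,1,0] \in \widehat{\EQ}$; here $b/a \notin \{0, \pm 1\}$ follows from $a^4 \ne b^4$, and the crucial condition becomes that $(a-b)/(a+b)$ not be a root of unity.

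The main obstacle is that either construction may fail on its own, but I claim their failure sets are disjoint under the standing hypothesis $a^4 \ne b^4$. If Construction~1 fails, then $a^2/b^2$ is a root of unity and in particular has modulus~$1$, so $|a| = |b|$. If Construction~2 fails, then $(a-b)/(a+b)$ is a root of unity and thus has modulus~$1$; a short computation shows $|(a-b)/(a+b)|=1$ is equivalent to $\operatorname{Re}(a\bar b) = 0$, i.e.\ $a/b$ being purely imaginary. Both failures together force $a/b = \pm i$ and hence $a^4 = b^4$, contradicting hypothesis. Therefore at least one construction succeeds and delivers $[1,1]$; attaching $[1,1]$ to one edge of $f'$ yields $[a,0,b]$, and Corollary~\ref{cor:domain_pairing} completes the proof.
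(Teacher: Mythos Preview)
Your proof is correct and follows essentially the same approach as the paper's. Both proofs handle the even-arity case by self-looping to $[a,0,b]$ and invoking Corollary~\ref{cor:domain_pairing}; for odd arity, both reduce to $[a,0,0,b]$ and $[a,b]$, then interpolate $[1,1]$ using the same two recursive constructions---one with matrix $\operatorname{diag}(a^2,b^2)$ (the paper obtains this as $[1,0,x^2]$ after normalizing $x=b/a$) and one with $[1,b/a,1]$ via Lemma~\ref{lem:unary_interpolation:EQ-hat}. The only difference is presentational: the paper normalizes and does an explicit case split on whether $\Re(b/a)=0$, whereas you argue by showing the two failure sets are disjoint; these are logically equivalent and your version is arguably cleaner.
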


\begin{proof}
 Since $a \ne 0$, we normalize $f$ to $[1,0,\dotsc,0,x]$, where $x \ne 0$ and $x^4 \ne 1$.
 If the arity of $f$ is even, then after some number of self-loops, we have $[1,0,x]$ and are done by Corollary~\ref{cor:domain_pairing}.
 Otherwise, the arity of $f$ is odd.
 After some number of self-loops, we have $g = [1,0,0,x]$.
 If we had the signature $[1,1]$, then we could connect this to $g$ to get $[1,0,x]$ and be done by Corollary~\ref{cor:domain_pairing}.
 We now show how to interpolate $[1,1]$ in one of two ways.
 
 Suppose $\Re(x)$, the real part of $x$, is not $0$.
 One more self-loop on $g$ gives $[1,x]$ and connecting this to $[1,0,1,0]$ gives $h = [1,x,1]$.
 The eigenvalues of $M_h$ are $\lambda_\pm = 1 \pm x$.
 Since $\Re(x) \ne 0$ iff $|\frac{\lambda_+}{\lambda_-}| \ne 1$, the ratio of the eigenvalues is not a root of unity, so $M_h$ has infinite order modulo a scalar.
 Therefore, we can interpolate $[1,1]$ by Lemma~\ref{lem:unary_interpolation:EQ-hat}.
 
 Otherwise, $\Re(x) = 0$ but $x$ is not a root of unity since $x \ne \pm i$.
 Connecting $[1,x]$ to $g$ gives $h = [1,0,x^2]$.
 Consider the recursive unary construction $(M_h,s)$, where $s = \transpose{[1,x]}$.
 The determinant of $M_h$ is $x^2 \ne 0$, so its eigenvalues are nonzero.
 Also, the determinant of $[s\ M_h s]$ is $x (x^2 - 1) \ne 0$.
 The ratio of the eigenvalues of $M_h$ is $x^2$, which is not a root of unity since $x$ is not a root of unity.
 Therefore $M_h$ has infinite order modulo a scalar and we can interpolate $[1,1]$ by Lemma~\ref{lem:unary_recursive_construction}.
\end{proof}

\section{Mixing of Tractable Signatures} \label{sec:mixing}

In this section, we determine which tractable signatures combine to give $\SHARPP$-hardness.
To help understand the various cases considered in the lemmas,
there is a Venn diagram of the signatures in $\mathscr{A}$, $\widehat{\mathscr{P}}$,
and $\mathscr{M}$ in Figure~\ref{fig:venn_diagram} of Appendix~\ref{sec:venn_diagram}.

The first two lemmas consider the case when one of the signatures has arity one.

\begin{lemma} \label{lem:mixing:unaries:AP}
 Suppose $f \in \mathscr{A} - \widehat{\mathscr{P}}$.
 If $a b \ne 0$ and $a^4 \ne b^4$,
 then for any set $\mathcal{F}$ of complex-weighted symmetric signatures containing $f$ and $[a,b]$,
 $\PlHolant(\mathcal{F} \union \widehat{\EQ})$ is $\SHARPP$-hard.
\end{lemma}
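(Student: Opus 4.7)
The plan is to perform a case analysis on $f \in \mathscr{A} \setminus \widehat{\mathscr{P}}$. Using the explicit list of symmetric signatures in $\mathscr{A}$ given in Section~\ref{subsec:CSP-Tractable} together with the observation that the symmetric signatures in $\widehat{\mathscr{P}}$ are precisely the degenerate signatures, $[1,0,-1]$, and the Hadamard transforms $\widehat{[a,0,\ldots,0,b]} = [a+b,a-b,a+b,\ldots]$ of generalized equalities, any such $f$ is, up to a nonzero scalar, one of: a generalized equality $[1,0,\ldots,0,c]$ of arity at least $3$ with $c \in \{\pm 1,\pm i\}$; a signature $[1,0,\ldots,0,\pm i]$ of arity $\geq 2$; or a signature from one of the four period-$4$ families of $\mathscr{F}_3$.

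In each case the goal is to construct, using $f$, $[a,b]$, and signatures from $\widehat{\EQ}$, a generalized equality $[a',0,\ldots,0,b']$ of arity at least $3$ with $a' b' \neq 0$ and $(a')^4 \neq (b')^4$, and then invoke Lemma~\ref{lem:domain_pairing:weak}. For $f = [1,0,\ldots,0,c]$ of arity $k \geq 4$, attaching $[a,b]$ to one edge yields $[a,0,\ldots,0,bc]$ of arity $k-1 \geq 3$; since $c^4 = 1$, we have $(bc)^4 = b^4 \neq a^4$. For $k = 3$, I would first connect two copies of $f$ by a single shared edge to produce $[1,0,0,0,c^2]$ of arity $4$ before attaching $[a,b]$. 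For the arity-$2$ period-$4$ signatures $[1,1,-1]$ and $[1,-1,-1]$, the key identities $f(c,y) = (-1)^{cy}$ and $f(c,y) = (-1)^{c \vee y}$ respectively let a short Fourier computation show that placing $f$ on every external edge of $[1,0,1,0,1] \in \widehat{\EQ}$ produces a nonzero multiple of $(=_4) = [1,0,0,0,1]$; attaching $[a,b]$ then gives the arity-$3$ generalized equality $[a,0,0,b]$, and Lemma~\ref{lem:domain_pairing:weak} applies.

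The remaining cases---the arity-$2$ binaries $[1,0,\pm i]$ and the higher-arity period-$4$ families---I plan to handle by analogous multi-vertex gadgets followed by further attachments of $[a,b]$, reducing each instance to an arity-$2$ signature $[f_0',f_1',f_2']$ for which Theorem~\ref{thm:Pl-Graph_Homomorphism} gives $\SHARPP$-hardness whenever $ab \neq 0$ and $a^4 \neq b^4$. For example, placing $f = [1,0,\pm i]$ on every edge of $[1,0,1,0] \in \widehat{\EQ}$ yields the period-$4$ signature $[1,0,-1,0]$, and attaching $[a,b]$ gives the binary $[a,-b,-a]$ whose conditions $f_0 f_2 = f_1^2$, $f_1 = 0$, $f_0 f_2 = -f_1^2 \wedge f_0 = -f_2$, and $f_0 = f_2$ all fail under the hypotheses. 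The main obstacle will be verifying this incompatibility uniformly across all subfamilies and handling ``exceptional'' values of $(a,b)$: a fixed gadget can yield a tractable instance for isolated ratios $a/b$, so for these I plan to provide an alternative construction, for example by first composing $[a,b]$ with $f$ to obtain a new unary whose ratio avoids the exceptional set, or by building a different multi-vertex gadget from $f$ and $\widehat{\EQ}$ that produces a generalized equality of arity at least $3$ directly.
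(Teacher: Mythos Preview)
Your case analysis and overall strategy are correct, and the specific gadgets you describe do work; in particular your Fourier observation that conjugating $[1,0,1,0,1]$ by $[1,\pm 1,-1]$ on each edge produces a multiple of $=_4$ is a nice touch, and your construction of $[1,0,-1,0]$ from $[1,0,\pm i]$ followed by attaching $[a,b]$ to get $[a,-b,-a]$ is exactly what the paper does for the $\mathscr{F}_3$, $r=0$ family.  Your worry about ``exceptional'' values of $a/b$ is unfounded for the constructions you actually propose: in each case the resulting binary or generalized equality is $\SHARPP$-hard for \emph{every} $(a,b)$ with $ab\neq 0$ and $a^4\neq b^4$, so no secondary gadget is needed.

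The paper's proof follows the same case split but is more uniform in its mechanism.  Rather than sometimes building $=_4$ and invoking Lemma~\ref{lem:domain_pairing:weak} and sometimes building a binary, the paper repeatedly uses one trick: connect $[a,b]$ to $[1,0,1,0]\in\widehat{\EQ}$ to obtain the symmetric binary $[a,b,a]$, then sandwich an arity-$2$ piece of $f$ between two copies of $[a,b,a]$ (or attach $[a,b]$ directly to an arity-$3$ piece of $f$) to produce a binary $g$ whose entries are explicit quadratics in $a,b$.  One then checks in each case that $g$ avoids all four tractable conditions of Theorem~\ref{thm:Pl-Graph_Homomorphism} precisely because $ab\neq 0$ and $a^4\neq b^4$.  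The only case using Lemma~\ref{lem:domain_pairing:weak} is the generalized equality $[1,0,\dotsc,0,x]$, handled exactly as you do (with the arity-$3$ subcase dispatched via Corollary~\ref{cor:domain_pairing} rather than by first doubling to arity~$4$).  The paper's approach buys a cleaner write-up with no case-by-case verification left implicit; your approach buys a conceptually pleasant reduction of the $\mathscr{F}_3$, $r=1,3$ cases to the generalized-equality case via the Hadamard identity $M_f^{\otimes 4}\,\widehat{=_4}\propto {=_4}$.
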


\begin{proof}
 Up to a nonzero scalar, the possibilities for $f$ are
 \begin{itemize}
  \item $[1, 0, \pm i]$;
  \item $[1, 0, \ldots, 0, x]$ of arity at least~3 with $x^4 = 1$;
  \item $[1, \pm 1, -1, \mp 1, 1, \pm 1, -1, \mp 1, \dotsc]$ of arity at least~2;
  \item $[1,  0, -1,  0, 1,  0, -1,  0, \dotsc, 0 \text{ or } 1 \text{ or } (-1)]$ of arity at least~3;
  \item $[0,  1,  0, -1, 0,  1,  0, -1, \dotsc, 0 \text{ or } 1 \text{ or } (-1)]$ of arity at least~3.
 \end{itemize}
 We handle these cases below.
 
 \begin{enumerate}
  \item Suppose $f = [1, 0, \pm i]$.
  Connecting $[a,b]$ to $[1,0,1,0]$ gives $[a,b,a]$ and connecting two copies of $[1, 0, \pm i]$ to $[a,b,a]$, one on each edge, gives $g = [a, \pm i b, -a]$.
  Since $a^4 \ne b^4$, $\plholant{g}{\widehat{\EQ}}$ is $\SHARPP$-hard by Theorem~\ref{thm:Pl-Graph_Homomorphism},
  so $\PlHolant(\mathcal{F} \union \widehat{\EQ})$ is also $\SHARPP$-hard.
  
  \item Suppose $f = [1,0,\dotsc,0,x]$ of arity at least~3 with $x^4 = 1$.
  Connecting $[a,b]$ to $f$ gives $g = [a, 0, \dotsc, 0, b x]$ of arity at least~2.
  Note that $(b x)^4 = b^4 \ne a^4$.
  If the arity of $g$ is exactly~2, then $\PlHolant(\{f,g\} \union \widehat{\EQ})$ is $\SHARPP$-hard by Corollary~\ref{cor:domain_pairing},
  so $\PlHolant(\mathcal{F} \union \widehat{\EQ})$ is also $\SHARPP$-hard.
  Otherwise, the arity of $g$ is at least~3 and $\PlHolant(\{g\} \union \widehat{\EQ})$ is $\SHARPP$-hard by Lemma~\ref{lem:domain_pairing:weak},
  so $\PlHolant(\mathcal{F} \union \widehat{\EQ})$ is also $\SHARPP$-hard.
  
  \item Suppose $f = [1, \pm 1, -1, \dotsc]$ of arity at least~2.
  Connecting some number of $[1,0]$ gives $[1, \pm 1, -1]$ of arity exactly~2.
  Connecting $[a,b]$ to $[1,0,1,0]$ gives $[a,b,a]$ and connecting two copies of $[a,b,a]$ to $[1, \pm 1, -1]$,
  one on each edge, gives $g = [a^2 \pm 2 a b - b^2, \pm (a^2 + b^2), -a^2 \pm 2 a b + b^2]$.
  This is easily verified by
  \[
   \begin{bmatrix} a &     b \\     b &  a \end{bmatrix}
   \begin{bmatrix} 1 & \pm 1 \\ \pm 1 & -1 \end{bmatrix}
   \begin{bmatrix} a &     b \\     b &  a \end{bmatrix}
   =
   \begin{bmatrix} a^2 \pm 2 a b - b^2 & \pm (a^2 + b^2) \\ \pm (a^2 + b^2) & -a^2 \pm 2 a b + b^2 \end{bmatrix}.
  \]
  Since $a^4 \ne b^4$, $\plholant{g}{\widehat{\EQ}}$ is $\SHARPP$-hard by Theorem~\ref{thm:Pl-Graph_Homomorphism},
  so $\PlHolant(\mathcal{F} \union \widehat{\EQ})$ is also $\SHARPP$-hard.
  
  \item Suppose $f = [1,0,-1,0,\dotsc]$ of arity at least~3.
  Connecting some number of $[1,0]$ gives $g = [1,0,-1,0]$ of arity exactly~3.
  Connecting $[a,b]$ to $g$ gives $h = [a,-b,-a]$.
  Since $a^4 \ne b^4$, $\plholant{h}{\widehat{\EQ}}$ is $\SHARPP$-hard by Theorem~\ref{thm:Pl-Graph_Homomorphism},
  so $\PlHolant(\mathcal{F} \union \widehat{\EQ})$ is also $\SHARPP$-hard.
  
  \item The argument for $f = [0,1,0,-1,\dotsc]$ is similar to the previous case.
  \qedhere
 \end{enumerate}
\end{proof}

\begin{lemma} \label{lem:mixing:unaries:MA}
 Suppose $f \in \mathscr{M} - \mathscr{A}$.
 If $a b \ne 0$,
 then for any set $\mathcal{F}$ of complex-weighted symmetric signatures containing $f$ and $[a,b]$,
 $\PlHolant(\mathcal{F} \union \widehat{\EQ})$ is $\SHARPP$-hard.
\end{lemma}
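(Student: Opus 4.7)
The plan is a case analysis driven by the explicit classification of symmetric matchgate signatures. Up to a nonzero scalar, $f$ takes one of the four forms listed immediately before Theorem~\ref{thm:matchgate_char:b2}, parameterized by $\alpha, \beta \in \mathbb{C}$, and a comparison with the explicit list of $\mathscr{A}$ signatures in Section~\ref{subsec:CSP-Tractable} shows that the assumption $f \notin \mathscr{A}$ forces the ratio $r = \beta/\alpha$ to avoid $\{0, \pm 1, \pm i\}$ when $f$ has arity $2$ and $\{0, \pm 1\}$ when $f$ has higher arity. The goal is then to combine $f$ with $[a,b]$ and signatures from $\widehat{\EQ}$ to realize a binary symmetric signature that is $\SHARPP$-hard via Theorem~\ref{thm:Pl-Graph_Homomorphism} (or, in one stubborn subcase, to apply Corollary~\ref{cor:arity4:nonsingular_compressed_hard:symmetric} or Lemma~\ref{lem:arity5:hard_sig}).

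The first step is an arity reduction by self-loops: a direct computation shows that on any of the four matchgate forms, a self-loop preserves the form and scales by $\alpha + \beta$, lowering the arity by $2$. Assuming $\alpha + \beta \neq 0$, iterating reduces $f$ to one of $[1,0,r]$, $[1,0,r,0]$, $[0,1,0,r]$, or $[0,1,0,r,0]$; the degenerate case $\alpha + \beta = 0$ is handled by first sandwiching $f$ with $[1,0,1] \in \widehat{\EQ}$ (so that each edge is effectively transformed by a diagonal matrix), shifting the entry pattern before self-looping.

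The second step attacks each reduced form. For $f = [1,0,r]$, first realize $[a,b,a]$ by attaching $[a,b]$ to two edges of $[1,0,1,0] \in \widehat{\EQ}$, then sandwich it between two copies of $f$ to obtain $[a, br, r^2 a]$, which is $\SHARPP$-hard by Theorem~\ref{thm:Pl-Graph_Homomorphism} unless $a^2 = b^2$; the subcase $a = \pm b$ yields to a secondary construction that first realizes the unary $[1, \pm r]$ by attaching $[a,b]$ to $f$, promotes it to $[1, \pm r, 1]$ via $[1,0,1,0]$, and sandwiches it with $f$ to get $[1, \pm r^2, r^2]$, which is hard since $r^2 \notin \{0, \pm 1\}$. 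For $f = [1,0,r,0]$, attaching $[a,b]$ gives $[a, br, ar]$, hard unless $a^2 = b^2 r$; in that subcase, double-attaching $[a,b]$ to $f$ realizes the unary $[b,a]$, and then attaching $[b,a]$ to $f$ yields $[b, ar, br]$, which is hard because $b^2 = a^2 r$ would force $a^4 = b^4$ and hence $r \in \{\pm 1\}$, a contradiction. The form $[0,1,0,r]$ is dual to $[1,0,r,0]$, and for $[0,1,0,r,0]$, double-attachment of $[a,b]$ gives $[2ab, a^2 + b^2 r, 2abr]$, hard unless $r = \pm (a/b)^2$; the edge cases are dispatched by realizing auxiliary unaries $[b,a]$ or $[b,-a]$ via triple-attachment and forming the analogous binary signature.

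The main obstacle is the remaining subcase of $f = [0,1,0,r,0]$ in which $r = \pm i$ and $a^4 = -b^4$: here $f$ is itself $\mathscr{A}$-transformable by a diagonal $T = \mathrm{diag}(s,t)$ with $(t/s)^2 = \mp i$, and several natural binary constructions coincidentally land in the tractable locus. The resolution exploits the incompatibility between this $T$ and the transformation behavior of $\widehat{\EQ}$: under $T$, the signature $[1,0,1,0] \in \widehat{\EQ}$ becomes $[s^3, 0, s t^2, 0]$, which lies outside $\mathscr{A}$ because $t^2/s^2 \notin \{\pm 1\}$. To exhibit this obstruction concretely, I would combine the realized unaries $[b,a]$, $[a,-b]$, $[-b,a]$ (obtained from triple-attachments of $[a,b]$ or $[a,-b]$ to $f$) with $[1,0,1,0]$ and $f$ to build a ternary or arity-$5$ signature whose compressed signature matrix is nonsingular, invoking Corollary~\ref{cor:arity4:nonsingular_compressed_hard:symmetric} or Lemma~\ref{lem:arity5:hard_sig}. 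Verifying that at least one such construction succeeds for both choices of sign of $r$ and both square roots of $r$ is the principal calculational hurdle.
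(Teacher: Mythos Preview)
Your case analysis has a genuine gap in the classification step. You claim that $f \notin \mathscr{A}$ forces $r = \beta/\alpha$ to avoid $\{0,\pm 1\}$ in arity $\ge 3$, but this is false when $\alpha = 0$ or $\beta = 0$. In forms~2,~3, and~4 of the matchgate list, setting one of $\alpha,\beta$ to zero produces the (non-degenerate) perfect matching signatures $[0,1,0,\dotsc,0]$ and their reversals $[0,\dotsc,0,1,0]$ of arity $\ge 3$; these lie in $\mathscr{M}-\mathscr{A}$ and are entirely absent from your reduced list $\{[1,0,r],[1,0,r,0],[0,1,0,r],[0,1,0,r,0]\}$ with $r\ne 0$. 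The paper treats these as separate cases (its cases~4 and~5), attaching $k-2$ copies of $[a,b]$ to $[0,1,0,\dotsc,0]$ of arity $k$ to obtain $a^{k-3}[(k-2)b,a,0]$, which is hard by Theorem~\ref{thm:Pl-Graph_Homomorphism}.

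Beyond this omission, your arity reduction is needlessly complicated. The paper simply attaches $[1,0]\in\widehat{\EQ}$ to strip entries one at a time, landing directly on $[1,0,r,0]$ or $[0,1,0,r]$ of arity exactly~3; your self-loop approach instead leaves you with the arity-4 form $[0,1,0,r,0]$ and the ``main obstacle'' at $r=\pm i$, $a^4=-b^4$, whose resolution you leave vague. With $[1,0]$ this case never arises. Two smaller points: the subcase $\alpha+\beta=0$ means $r=-1$, which you already (correctly) excluded since then $f\in\mathscr{F}_3\subset\mathscr{A}$, so there is nothing to handle; and your proposed fix of ``sandwiching with $[1,0,1]$'' is the identity transformation and does nothing. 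Where your reduced cases do overlap with the paper's (e.g.\ $[1,0,r]$ and $[1,0,r,0]$), your constructions are correct and essentially equivalent to the paper's.
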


\begin{proof}
 Up to a nonzero scalar, the possibilities for $f$ are
 \begin{itemize}
  \item $[1, 0, r]$ with $r \ne 0$ and $r^4 \ne 1$;
  \item $[1, 0, r, 0, r^2, 0, \dotsc]$ of arity at least~3 with $r \ne 0$ and $r^2 \ne 1$;
  \item $[0, 1, 0, r, 0, r^2, \dotsc]$ of arity at least~3 with $r \ne 0$ and $r^2 \ne 1$;
  \item $[0, 1, 0, \dotsc, 0]$ of arity at least~3;
  \item $[0, \dotsc, 0, 1, 0]$ of arity at least~3.
 \end{itemize}
 We handle these cases below.
 
 \begin{enumerate}
  \item Suppose $f = [1,0,r]$ with $r^4 \ne 1$ and $r \ne 0$.
  Connecting $[a,b]$ to $[1,0,1,0]$ gives $[a,b,a]$ and connecting two copies of $[1,0,r]$ to $[a,b,a]$, one on each edge, gives $g = [a, b r, a r^2]$.
  If $a^2 \ne b^2$, then $\plholant{g}{\widehat{\EQ}}$ is $\SHARPP$-hard by Theorem~\ref{thm:Pl-Graph_Homomorphism},
  so $\PlHolant(\mathcal{F} \union \widehat{\EQ})$ is also $\SHARPP$-hard.
  
  Otherwise, $a^2 = b^2$ and we begin by connecting $[a,b]$ to $[1,0,r]$ to get $[a, b r]$.
  Then by the same construction, we have $g = [a, b r^2, a r^2]$ and
  $\plholant{g}{\widehat{\EQ}}$ is $\SHARPP$-hard by Theorem~\ref{thm:Pl-Graph_Homomorphism},
  so $\PlHolant(\mathcal{F} \union \widehat{\EQ})$ is also $\SHARPP$-hard.
  
  \item Suppose $f = [1,0,r,0,\dotsc]$ of arity at least~3 with $r^2 \ne 1$ and $r \ne 0$.
  Connecting some number of $[1,0]$ gives $g = [1,0,r,0]$ of arity exactly~3.
  Connecting $[a,b]$ to $g$ gives $h = [a, b r, a]$.
  If $a^2 \ne b^2 r$, then $\plholant{h}{\widehat{\EQ}}$ is $\SHARPP$-hard by Theorem~\ref{thm:Pl-Graph_Homomorphism},
  so $\PlHolant(\mathcal{F} \union \widehat{\EQ})$ is also $\SHARPP$-hard.
  
  Otherwise, $a^2 = b^2 r$ and we begin by connecting $[1,0]$ and $[a,b]$ to $[1,0,r,0]$ to get $[a, b r]$.
  Then by the same construction, we have $g = [a, b r^2, a r]$ and
  $\plholant{g}{\widehat{\EQ}}$ is $\SHARPP$-hard by Theorem~\ref{thm:Pl-Graph_Homomorphism},
  so $\PlHolant(\mathcal{F} \union \widehat{\EQ})$ is also $\SHARPP$-hard.
  
  \item The argument for $f = [0,1,0,r,\dotsc]$ is similar to the previous case.
  
  \item Suppose $f = [0,1,0,\dotsc,0]$ of arity $k \ge 3$.
  Connecting $k-2$ copies of $[a,b]$ to $f$ gives $g = a^{k-3}[(k-2) b,a,0]$.
  Since $a b \ne 0$, $\plholant{g}{\widehat{\EQ}}$ is $\SHARPP$-hard by Theorem~\ref{thm:Pl-Graph_Homomorphism},
  so $\PlHolant(\mathcal{F} \union \widehat{\EQ})$ is also $\SHARPP$-hard.
  
  \item The argument for $f = [0,\dotsc,0,1,0]$ is similar to the previous case.
  \qedhere
 \end{enumerate}
\end{proof}

Now we consider the general case of two signatures from two different tractable sets.
The three tractable sets give rise to three pairs of tractable sets to consider, each of which is covered in one of the next three lemmas.

\begin{lemma} \label{lem:mixing:AP}
 If $f \in \mathscr{A} - \widehat{\mathscr{P}}$ and $g \in \widehat{\mathscr{P}} - \mathscr{A}$,
 then for any set $\mathcal{F}$ of complex-weighted symmetric signatures containing $f$ and $g$,
 $\PlHolant(\mathcal{F} \union \widehat{\EQ})$ is $\SHARPP$-hard.
\end{lemma}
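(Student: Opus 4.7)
The plan is to reduce to Lemma~\ref{lem:mixing:unaries:AP} by extracting an appropriate unary from $g$. The first step is to classify the symmetric signatures in $\widehat{\mathscr{P}}$. Since the symmetric signatures in $\mathscr{P}$ are (as noted in the preliminaries) exactly the degenerate ones, the binary disequality $[0,1,0]$, and the generalized equalities $[a,0,\dotsc,0,b]$, applying the Hadamard transformation shows that the symmetric signatures in $\widehat{\mathscr{P}}$ are the degenerate ones, $[1,0,-1]$ (from $[0,1,0]$), and the alternating signatures $[x,y,x,y,\dotsc]$ arising from $[a,0,\dotsc,0,b]$ via $x = a+b$, $y = a-b$ (non-degenerate precisely when $x \ne \pm y$).

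Next I classify which of these lie outside $\mathscr{A}$. Inspecting the explicit list of non-degenerate symmetric signatures in $\mathscr{A}$, I note that $[1,0,-1]$ lies in $\mathscr{F}_1$ (item~1 with arity~2), and an alternating signature $[x,y,x,y,\dotsc]$ of arity at least~2 lies in $\mathscr{F}_2$ iff $y/x \in \{0, \infty, \pm i\}$, i.e., iff $xy = 0$ or $x^4 = y^4$. Furthermore, a unary $[a,b]$ lies in $\mathscr{A}$ iff $ab = 0$ or $a^4 = b^4$, and a degenerate signature $[a,b]^{\otimes n}$ lies in $\mathscr{A}$ iff $[a,b]$ does. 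Thus any $g \in \widehat{\mathscr{P}} - \mathscr{A}$ must fall into one of three cases: \emph{(i)}~$g = [a,b]$ is a unary with $ab \ne 0$ and $a^4 \ne b^4$; \emph{(ii)}~$g = [a,b]^{\otimes n}$ is degenerate with $ab \ne 0$ and $a^4 \ne b^4$; or \emph{(iii)}~$g = [x,y,x,y,\dotsc]$ is a non-degenerate alternating signature of arity $n \geq 2$ with $xy \ne 0$ and $x^4 \ne y^4$.

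In case~(i), Lemma~\ref{lem:mixing:unaries:AP} applies immediately. In case~(ii), replace each vertex labeled by $g$ with $n$ pendant vertices each labeled by the unary $[a,b]$; this preserves planarity (the $n$ pendants can be drawn in a small neighborhood of the original vertex) and the Holant value, reducing to case~(i). In case~(iii), attach $[1,0] \in \widehat{\EQ}$ to all but one of the $n$ edges of $g$; since $g$ is symmetric and its first two entries are $x$ and $y$, the resulting gadget has signature equal to the unary $[x,y]$, and by assumption $xy \ne 0$ and $x^4 \ne y^4$, so we again reduce to case~(i) and invoke Lemma~\ref{lem:mixing:unaries:AP}. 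No step is a serious obstacle; the real content is the classification of $\widehat{\mathscr{P}} - \mathscr{A}$, after which the gadget reductions are routine.
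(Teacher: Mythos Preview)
Your proof is correct and follows essentially the same approach as the paper: classify the symmetric signatures in $\widehat{\mathscr{P}} - \mathscr{A}$, extract a unary $[a,b]$ with $ab \neq 0$ and $a^4 \neq b^4$ from $g$ by connecting copies of $[1,0] \in \widehat{\EQ}$, and invoke Lemma~\ref{lem:mixing:unaries:AP}. Your treatment is in fact slightly more careful than the paper's, since you explicitly handle the degenerate case~(ii), whereas the paper's statement ``the only possibility for $g$ is $[a,b,a,b,\dotsc]$'' tacitly omits degenerate $[c,d]^{\otimes n}$ with $n \geq 2$ and $c^4 \neq d^4$ (though the paper's subsequent step of connecting $[1,0]$'s would still recover the needed unary in that case).
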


\begin{proof}
 The only possibility for $g$ is $[a,b,a,b,\dotsc]$, where $a b \ne 0$ and $a^4 \ne b^4$.
 Connecting some number of $[1,0]$ to $g$ gives $[a,b]$ and we are done by Lemma~\ref{lem:mixing:unaries:AP}.
\end{proof}

\begin{lemma} \label{lem:mixing:AM}
 If $f \in \mathscr{A} - \mathscr{M}$ and $g \in \mathscr{M} - \mathscr{A}$,
 then for any set $\mathcal{F}$ of complex-weighted symmetric signatures containing $f$ and $g$,
 $\PlHolant(\mathcal{F} \union \widehat{\EQ})$ is $\SHARPP$-hard.
\end{lemma}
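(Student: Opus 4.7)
The plan is to reduce to Lemma~\ref{lem:mixing:unaries:MA} via case analysis on $f$. Using the explicit enumeration of $\mathscr{F}_1 \cup \mathscr{F}_2 \cup \mathscr{F}_3$ in Section~\ref{subsec:CSP-Tractable} together with the matchgate parity condition, the non-degenerate symmetric signatures in $\mathscr{A} - \mathscr{M}$, up to a nonzero scalar, partition into two families: those with no zero entries (arising from $\mathscr{F}_2$ and $\mathscr{F}_3$ with $r \in \{1,3\}$, all of which violate the matchgate parity condition) and the generalized equalities $[1,0,\dotsc,0,x]$ of arity at least~$3$ with $x \in \{\pm 1, \pm i\}$ (the surviving $\mathscr{F}_1$ signatures, since every arity-$2$ signature $[1,0,x]$ already lies in $\mathscr{M}$).

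If $f$ has no zero entries, pin $\arity(f) - 1$ variables of $f$ using $[1,0] \in \widehat{\EQ}$ to obtain the unary $[f_0, f_1]$. In each of the four cases $f_0 f_1 \neq 0$, so Lemma~\ref{lem:mixing:unaries:MA} applies directly with $g$ and $[f_0, f_1]$, giving hardness.

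Otherwise $f = [1, 0, \dotsc, 0, x]$ of arity at least~$3$, and the strategy is to realize a witness signature that invokes Corollary~\ref{cor:domain_pairing} or Lemma~\ref{lem:domain_pairing:strong}, with $f$ playing the role of $[a,0,\dotsc,0,b]$. I case-split on $g$ using the five forms for non-degenerate symmetric signatures in $\mathscr{M} - \mathscr{A}$ enumerated in Lemma~\ref{lem:mixing:unaries:MA}. If $g = [1,0,r]$ with $r^4 \neq 1$, Corollary~\ref{cor:domain_pairing} applies to $f$ and $g$ directly. If $g$ has the form $[1,0,r,0,r^2,\dotsc]$ or $[0,1,0,r,0,r^2,\dotsc]$ with $r^2 \neq 1$, pin $g$ with $[1,0]$ down to arity~$3$; in the first case this is already $[1,0,r,0]$, and in the second I realize $\ne_2 = [0,1,0]$ by one further pin and use it to reverse $[0,1,0,r]$ into $[r,0,1,0]$. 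In both situations Lemma~\ref{lem:domain_pairing:strong} applies. Finally, if $g$ is a perfect matching signature $[0,1,0,\dotsc,0]$ or its reversal, I realize $\ne_2$ and then the unary $[0,1] = \ne_2 \cdot [1,0]$ by appropriate combinations of self-loops and pins, use $\ne_2$ to reverse $g$ if necessary, and apply iterated $[0,1]$-pinning to reduce $g$ to the arity-$3$ reversed perfect matching $[0,0,1,0]$. This has the form $[x',0,y',0]$ with $(x',y') = (0,1)$, and because Lemma~\ref{lem:domain_pairing:strong} tolerates $x' = 0$, it applies with $f$.

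The main subtlety is the perfect matching subcase. Since $\widehat{\EQ} \subseteq \mathscr{M}$, no gadget built from $g$ and $\widehat{\EQ}$ alone can escape $\mathscr{M}$ and produce, for instance, a binary $[1,0,r]$ with $r^4 \neq 1$; so $f$ must enter the reduction, and it enters through the domain-pairing lemma rather than by supplying a new non-matchgate binary. Observing that Lemma~\ref{lem:domain_pairing:strong} permits $x' = 0$, and hence that $[0,0,1,0]$ already matches the required shape, keeps the reduction clean. The remaining work is careful bookkeeping over the parity of $\arity(g)$ to choose appropriately between $[1,0]$-pinning, $[0,1]$-pinning, and self-looping when extracting $[0,0,1,0]$ from $g$.
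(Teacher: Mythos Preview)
Your proof is correct and follows the same overall structure as the paper's: first dispose of the case where $f$ has no zero entry by pinning to a unary and invoking Lemma~\ref{lem:mixing:unaries:MA}, then assume $f=[x,0,\dotsc,0,y]$ of arity at least~3 and case-split on the five forms of $g$. Your handling of the first two $g$-cases is identical to the paper's.

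The remaining three subcases differ only in implementation. For $g=[0,1,0,r,\dotsc]$ the paper obtains $[0,1]$ and then applies a holographic transformation by $\left[\begin{smallmatrix}0&1\\1&0\end{smallmatrix}\right]$ to land back in the previous case, whereas you reverse $[0,1,0,r]$ directly by attaching $[0,1,0]$ on each edge; these are two realizations of the same bit-flip. For the perfect matching signatures the paper takes a different route: it connects two copies of $g$ along $k-1$ edges (the gadget of Figure~\ref{fig:gadget:double}) to obtain the binary $[k-1,0,1]$, which for $k\ge 3$ satisfies $(k-1)^4\neq 1$ and so feeds directly into Corollary~\ref{cor:domain_pairing}. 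Your alternative—manufacturing $[0,0,1,0]$ and exploiting that Lemma~\ref{lem:domain_pairing:strong} allows $x'=0$—is equally valid and arguably more uniform with the preceding subcases, at the cost of slightly more bookkeeping to obtain $[0,1]$ and perform the reversals. The paper's gadget avoids that bookkeeping in a single step.
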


\begin{proof}
 If $f$ does not contain a~0 entry, then after connecting some number of $[1,0]$ to $f$, we have a unary signature $[a,b]$ with $a b \ne 0$.
 Then we are done by Lemma~\ref{lem:mixing:unaries:MA}.
 
 Otherwise, $f$ contains a~0 entry.
 Then $f = [x, 0, \dotsc, 0, y]$ of arity at least~3 with $x y \ne 0$ (and $x^4 = y^4$).
 Up to a nonzero scalar, the possibilities for $g$ are
 \pagebreak
 \begin{itemize}
  \item $[1, 0, r]$ with $r \ne 0$ and $r^4 \ne 1$;
  \item $[1, 0, r, 0, r^2, 0, \dotsc]$ of arity at least~3 with $r \ne 0$ and $r^2 \ne 1$;
  \item $[0, 1, 0, r, 0, r^2, \dotsc]$ of arity at least~3 with $r \ne 0$ and $r^2 \ne 1$;
  \item $[0, 1, 0, 0, \dotsc, 0]$ of arity at least~3;
  \item $[0, \dotsc, 0, 0, 1, 0]$ of arity at least~3.
 \end{itemize}
 We handle these cases below.
 
 \begin{enumerate}
  \item Suppose $g = [1,0,r]$ with $r \ne 0$ and $r^4 \ne 1$.
  Then we are done by Corollary~\ref{cor:domain_pairing}.
  
  \item Suppose $g = [1,0,r,0,\dotsc]$ of arity at least~3 with $r \ne 0$ and $r^2 \ne 1$.
  After connecting some number of $[1,0]$ to $g$, we have $h = [1,0,r,0]$ of arity exactly~3.
  Then $\PlHolant(\{f,h\} \union \widehat{\EQ})$ is $\SHARPP$-hard by Lemma~\ref{lem:domain_pairing:strong},
  so $\PlHolant(\mathcal{F} \union \widehat{\EQ})$ is also $\SHARPP$-hard.
  
  \item Suppose $g = [0,1,0,r,\dotsc]$ of arity at least~3 with $r \ne 0$ and $r^2 \ne 1$.
  After connecting some number of $[1,0]$ to $g$, we have $h = [0,1,0,r]$ of arity exactly~3.
  Connecting two more copies of $[1,0]$ to $h$ gives $[0,1]$.
  Then we apply a holographic transformation by $T = \left[\begin{smallmatrix} 0 & 1 \\ 1 & 0 \end{smallmatrix}\right]$,
  so $f$ is transformed to $\hat{f} = [y, 0, \dotsc, 0, x]$ and $h$ is transformed to $\hat{h} = [r,0,1,0]$.
  Every even arity signature in $\widehat{\EQ}$ remains unchanged after a holographic transformation by $T$.
  By attaching $[0,1] T = [1,0]$ to every even arity signature in $T \widehat{\EQ}$, we obtain all of the odd arity signatures in $\widehat{\EQ}$ again.
  Then $\PlHolant(\{\hat{f}, \hat{h}\} \union \widehat{\EQ})$ is $\SHARPP$-hard by Lemma~\ref{lem:domain_pairing:strong},
  so $\PlHolant(\mathcal{F} \union \widehat{\EQ})$ is also $\SHARPP$-hard.
  
  \item Suppose $g = [0,1,0,\dotsc,0]$ of arity $k \ge 3$.
  The gadget in Figure~\ref{fig:gadget:double} with $g$ assigned to both vertices has signature $h = [k-1,0,1]$.
  Then $\PlHolant(\{f,h\} \union \widehat{\EQ})$ is $\SHARPP$-hard by Corollary~\ref{cor:domain_pairing},
  so $\PlHolant(\mathcal{F} \union \widehat{\EQ})$ is also $\SHARPP$-hard.
  
  \item The argument for $g = [0,\dotsc,0,1,0]$ is similar to the previous case.
  \qedhere
 \end{enumerate}
\end{proof}

\begin{figure}[t]
 \centering
 \begin{tikzpicture}[scale=\scale,transform shape,node distance=\nodeDist,semithick]
  \node[external] (0)              {};
  \node[internal] (1) [right of=0] {};
  \node[external] (2) [right of=1] {};
  \node[external] (3) [right of=2] {};
  \node[internal] (4) [right of=3] {};
  \node[external] (5) [right of=4] {};
  \path (0) edge                          node[near end]   (e1) {}               (1)
        (1) edge[out= 45, in= 135]        node             (e2) {}               (4)
            edge[out= 15, in= 165]                                               (4)
            edge[out=-10, in=-170, white] node[black]           {\Huge $\vdots$} (4)
            edge[out=-45, in=-135]        node             (e3) {}               (4)
        (4) edge                          node[near start] (e4) {}               (5);
  \begin{pgfonlayer}{background}
   \node[draw=\borderColor,thick,rounded corners,fit = (e1) (e2) (e3) (e4)] {};
  \end{pgfonlayer}
 \end{tikzpicture}
 \caption{The vertices are assigned $g = [0,1,0,\dotsc,0]$.}
 \label{fig:gadget:double}
\end{figure}
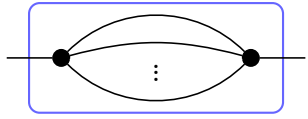

\begin{lemma} \label{lem:mixing:MP}
 Suppose $f \in \mathscr{M} - \widehat{\mathscr{P}}$ and $g \in \widehat{\mathscr{P}} - \mathscr{M}$ such that $\{f, g\} \not\subseteq \mathscr{A}$.
 Then for any set $\mathcal{F}$ of complex-weighted symmetric signatures containing $f$ and $g$,
 $\PlHolant(\mathcal{F} \union \widehat{\EQ})$ is $\SHARPP$-hard.
\end{lemma}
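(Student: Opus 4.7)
My plan is to invoke Lemmas~\ref{lem:mixing:unaries:MA} and~\ref{lem:mixing:unaries:AP} by first pinning $g$ down to a unary signature using $[1,0] \in \widehat{\EQ}$, then case-splitting on which of $f$ or $g$ lies outside $\mathscr{A}$.

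First I would characterize the symmetric signatures in $\widehat{\mathscr{P}} - \mathscr{M}$ up to nonzero scalar. The non-degenerate ones arise as Hadamard transforms of generalized equalities $[a,0,\ldots,0,b]$, giving $[a+b, a-b, a+b, a-b, \ldots]$; after normalization this is $[1, c, 1, c, \ldots]$ of some arity $n \ge 2$, and to fall outside $\mathscr{M}$ while staying non-degenerate we need $c \ne 0$ and $c \ne \pm 1$. The degenerate ones are $g = [x, y]^{\otimes n}$ with $xy \ne 0$ (if $xy = 0$ then $g \in \mathscr{M}$ by the parity condition). In both shapes, attaching $[1,0]$ to a dangling edge of $g$ reduces its arity by one (dropping the last entry of the symmetric representation), so after $n-1$ such pins I obtain a unary: $x^{n-1}[x,y]$ in the degenerate case or $[1,c]$ in the non-degenerate case, both having both coordinates nonzero.

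In Case~1, where $f \notin \mathscr{A}$, we have $f \in \mathscr{M} - \mathscr{A}$ and Lemma~\ref{lem:mixing:unaries:MA} applies directly to $f$ together with the pinned unary. In Case~2, where $g \notin \mathscr{A}$ (so $f \in \mathscr{A} \cap \mathscr{M} - \widehat{\mathscr{P}} \subseteq \mathscr{A} - \widehat{\mathscr{P}}$), I want to apply Lemma~\ref{lem:mixing:unaries:AP}, which additionally requires the unary $[a,b]$ to satisfy $a^4 \ne b^4$. For non-degenerate $g$, inspecting family $\mathscr{F}_2$ shows $[1, c, 1, c, \ldots] \in \mathscr{A}$ only for $c \in \{0, \pm i\}$, so combined with $c \ne 0, \pm 1$ we get $c^4 \ne 1$. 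For degenerate $g = [x,y]^{\otimes n}$, I claim $g \in \mathscr{A}$ iff $[x,y] \in \mathscr{A}$, so $g \notin \mathscr{A}$ together with $xy \ne 0$ forces $y/x \notin \{1, -1, i, -i\}$, giving $x^4 \ne y^4$ as needed.

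The main technical step I anticipate is establishing the tensor-power equivalence for $\mathscr{A}$ in the degenerate subcase of Case~2. The forward direction is straightforward: whenever $y/x = i^t$ for some $t \in \{0,1,2,3\}$, the signature $[x, y]^{\otimes n}(x_1, \ldots, x_n) = x^n \, i^{t \sum_i x_i}$ fits the affine form with $A = 0$ and the exponent realized as $tn$ copies of the $x_i$-terms in the 0/1 representation. The reverse direction follows because the support of $[x,y]^{\otimes n}$ is all of $\{0,1\}^n$, so any representation in the affine form forces $A = 0$, and then matching $x^n (y/x)^{\sum x_i}$ against $\lambda\, i^{\sum_j \langle \alpha_j, x \rangle}$ on the inputs $e_i$ pins $y/x$ down to a fourth root of unity. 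Once this equivalence is in hand, both cases reduce cleanly to the two earlier unary-mixing lemmas.
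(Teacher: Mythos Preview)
Your proposal is correct and follows essentially the same route as the paper: pin $g$ down to a unary with both entries nonzero, then split on whether $f\notin\mathscr{A}$ (invoking Lemma~\ref{lem:mixing:unaries:MA}) or $g\notin\mathscr{A}$ (invoking the $\mathscr{A}$--$\widehat{\mathscr{P}}$ mixing result). The only cosmetic difference is that in Case~2 the paper appeals directly to Lemma~\ref{lem:mixing:AP}, which already packages the characterization of $g\in\widehat{\mathscr{P}}-\mathscr{A}$ and the reduction to Lemma~\ref{lem:mixing:unaries:AP}; you instead unwind that lemma and re-derive the $a^4\ne b^4$ condition yourself. Your explicit treatment of degenerate $g$ (and the tensor-power equivalence for $\mathscr{A}$) is more thorough than the paper, which simply writes ``the only possibility for $g$ is $[a,b,a,b,\dotsc]$ with $ab\ne 0$''---implicitly relying on the convention, made explicit only later in the proof of Theorem~\ref{thm:PlCSP:Hadamard}, that degenerate signatures may be replaced by their unary factors.
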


\begin{proof}
 The only possibility for $g$ is $[a,b,a,b,\dotsc]$, where $a b \ne 0$.
 Connecting some number of $[1,0]$ to $g$ gives $h = [a,b]$.
 If $f \not\in \mathscr{A}$, then $\PlHolant(\{f,h\} \union \widehat{\EQ})$ is $\SHARPP$-hard by Lemma~\ref{lem:mixing:unaries:MA},
 so $\PlHolant(\mathcal{F} \union \widehat{\EQ})$ is also $\SHARPP$-hard.
 
 Otherwise, $f \in \mathscr{A}$, so $g \not\in \mathscr{A}$.
 Then $\PlHolant(\{g,h\} \union \widehat{\EQ})$ is $\SHARPP$-hard by Lemma~\ref{lem:mixing:AP},
 so $\PlHolant(\mathcal{F} \union \widehat{\EQ})$ is also $\SHARPP$-hard.
\end{proof}

We summarize this section with the following theorem, which says that the tractable signature sets cannot mix.
More formally, signatures from different tractable sets, when put together, lead to $\SHARPP$-hardness.

\begin{theorem}[Mixing] \label{thm:mixing}
 Let $\mathcal{F}$ be any set of symmetric, complex-valued signatures in Boolean variables.
 If $\mathcal{F} \subseteq \mathscr{A} \union \widehat{\mathscr{P}} \union \mathscr{M}$,
 then $\PlHolant(\mathcal{F} \union \widehat{\EQ})$ is $\SHARPP$-hard
 unless $\mathcal{F} \subseteq \mathscr{A}$, $\mathcal{F} \subseteq \widehat{\mathscr{P}}$, or $\mathcal{F} \subseteq \mathscr{M}$,
 in which case $\PlHolant^c(\mathcal{F} \union \widehat{\EQ})$ is tractable.
\end{theorem}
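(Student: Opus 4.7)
The tractable direction is immediate from Theorem~\ref{thm:PlCSP_tractable:Hadamard}, which yields tractability of $\PlHolant^c(\mathcal{F} \union \widehat{\EQ})$ as soon as $\mathcal{F}$ is contained in any single one of $\mathscr{A}$, $\widehat{\mathscr{P}}$, or $\mathscr{M}$. For the $\SHARPP$-hardness direction my plan is a finite case analysis that funnels every configuration into one of Lemma~\ref{lem:mixing:AP}, \ref{lem:mixing:AM}, or \ref{lem:mixing:MP}.

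Assume toward the contrapositive that $\mathcal{F}$ is contained in none of $\mathscr{A}$, $\widehat{\mathscr{P}}$, $\mathscr{M}$, and pick witnesses $x \in \mathcal{F} \setminus \mathscr{A}$, $y \in \mathcal{F} \setminus \widehat{\mathscr{P}}$, $z \in \mathcal{F} \setminus \mathscr{M}$. Because $\mathcal{F} \subseteq \mathscr{A} \union \widehat{\mathscr{P}} \union \mathscr{M}$, each witness is forced into the union of the two classes it is not excluded from, i.e.\ $x \in \widehat{\mathscr{P}} \union \mathscr{M}$, $y \in \mathscr{A} \union \mathscr{M}$, and $z \in \mathscr{A} \union \widehat{\mathscr{P}}$. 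I would branch first on the location of $y$, then when necessary refine on $x$ and $z$, exhibiting in each branch a pair from $\{x, y, z\}$ whose class-differences match the hypothesis of one of the three mixing lemmas.

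Concretely: if $y \in \mathscr{A}$ and $x \in \widehat{\mathscr{P}}$, then $(y, x) \in (\mathscr{A} - \widehat{\mathscr{P}}) \times (\widehat{\mathscr{P}} - \mathscr{A})$ invokes Lemma~\ref{lem:mixing:AP}; if $y \in \mathscr{A}$ and $x \in \mathscr{M}$, then either $y \notin \mathscr{M}$ and $(y, x)$ invokes Lemma~\ref{lem:mixing:AM}, or $y \in \mathscr{A} \cap \mathscr{M}$ and I would refine on $z$ to obtain either an $\mathscr{A} - \mathscr{M}$ partner for $x$ (Lemma~\ref{lem:mixing:AM}) or a $\widehat{\mathscr{P}} - \mathscr{M}$ partner for $x$ (Lemma~\ref{lem:mixing:MP}). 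The remaining branch $y \in \mathscr{M} \setminus \mathscr{A}$ is handled symmetrically by splitting $z \in \mathscr{A}$ versus $z \in \widehat{\mathscr{P}} \setminus \mathscr{A}$ and pairing $z$ with $y$ using Lemma~\ref{lem:mixing:AM} or Lemma~\ref{lem:mixing:MP} respectively.

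The only subtle bookkeeping point, which I expect to be the main obstacle, is verifying the extra clause $\{f, g\} \not\subseteq \mathscr{A}$ of Lemma~\ref{lem:mixing:MP}: in every branch where I invoke that lemma one of the two chosen signatures was deliberately picked outside $\mathscr{A}$ (either $x$, or $y$ when $y \in \mathscr{M} \setminus \mathscr{A}$), so the clause is automatic. Once the enumeration is confirmed exhaustive, the theorem follows directly from the three mixing lemmas.
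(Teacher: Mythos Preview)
Your proposal is correct and follows essentially the same strategy as the paper: a finite case analysis that feeds every configuration into one of Lemmas~\ref{lem:mixing:AP}, \ref{lem:mixing:AM}, or~\ref{lem:mixing:MP}. The paper organizes the split a little differently---it first asks whether $\mathcal{F}$ meets $\mathscr{A} - \widehat{\mathscr{P}} - \mathscr{M}$, and then explicitly invokes the structural fact that $(\widehat{\mathscr{P}} \cap \mathscr{M}) - \mathscr{A}$ is empty to finish the $\mathcal{F} \subseteq \widehat{\mathscr{P}} \cup \mathscr{M}$ branch---whereas your three-witness enumeration uses that same inclusion only implicitly (it is what guarantees, in your ``$x \in \mathscr{M}$'' sub-branch after the ``$x \in \widehat{\mathscr{P}}$'' case has been dispatched, that $x \in \mathscr{M} - \widehat{\mathscr{P}}$ so Lemma~\ref{lem:mixing:MP} applies). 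Making that one point explicit would close the only bookkeeping gap.
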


\begin{proof}
 If $\mathcal{F}$ is a subset of $\mathscr{A}$, $\widehat{\mathscr{P}}$, or $\mathscr{M}$, then the tractability is given in Theorem~\ref{thm:PlCSP_tractable:Hadamard}.
 Otherwise $\mathcal{F}$ is not a subset of $\mathscr{A}$, $\widehat{\mathscr{P}}$, or $\mathscr{M}$.
 Then $\mathcal{F}$ contains a signature $g \in (\widehat{\mathscr{P}} \union \mathscr{M}) - \mathscr{A}$ since $\mathcal{F} \not\subseteq \mathscr{A}$.
 Suppose $\mathcal{F}$ contains a signature $f \in \mathscr{A} - \widehat{\mathscr{P}} - \mathscr{M}$.
 If $g \in \widehat{\mathscr{P}} - \mathscr{A}$, then $\PlHolant(\mathcal{F} \union \widehat{\EQ})$ is $\SHARPP$-hard by Lemma~\ref{lem:mixing:AP}.
 Otherwise, $g \in \mathscr{M} - \mathscr{A}$ and $\PlHolant(\mathcal{F} \union \widehat{\EQ})$ is $\SHARPP$-hard by Lemma~\ref{lem:mixing:AM}.
 
 Now assume that $\mathcal{F} \subseteq \widehat{\mathscr{P}} \union \mathscr{M}$.
 Since $(\widehat{\mathscr{P}} \intersect \mathscr{M}) - \mathscr{A}$ is empty (see Figure~\ref{fig:venn_diagram} in Appendix~\ref{sec:venn_diagram}),
 either $g \in \widehat{\mathscr{P}} - \mathscr{M} - \mathscr{A}$ or $g \in \mathscr{M} - \widehat{\mathscr{P}} - \mathscr{A}$
 because $\mathcal{F}$ is not a subset of either $\mathscr{M}$ or $\widehat{\mathscr{P}}$.
 If $g \in \widehat{\mathscr{P}} - \mathscr{M} - \mathscr{A}$,
 then there exists a signature $f \in \mathscr{M} - \widehat{\mathscr{P}}$ since $\mathcal{F} \not\subseteq \widehat{\mathscr{P}}$.
 In which case, $\PlHolant(\mathcal{F} \union \widehat{\EQ})$ is $\SHARPP$-hard by Lemma~\ref{lem:mixing:MP}.
 Otherwise, $g \in \mathscr{M} - \widehat{\mathscr{P}} - \mathscr{A}$
 and there exists a signature $f \in \widehat{\mathscr{P}} - \mathscr{M}$ since $\mathcal{F} \not\subseteq \mathscr{M}$.
 In which case, $\PlHolant(\mathcal{F} \union \widehat{\EQ})$ is $\SHARPP$-hard by Lemma~\ref{lem:mixing:MP}.
\end{proof}

\section{Pinning for Planar Graphs} \label{sec:pinning}

The idea of ``pinning'' is a common reduction technique between counting problems.
For the \#CSP framework, pinning fixes some variables to specific values of the domain by means of the constant functions~\cite{BD07, DGJ09, BDGJR09, HL12}.
In particular, for counting graph homomorphisms, pinning is used when the input graph is connected and the target graph is disconnected.
In this case, pinning a vertex of the input graph to a vertex of the target graph
forces all the vertices of the input graph to map to the same connected component of the target graph~\cite{DG00, BG05, GGJT10, Thu10, CCL13}.
For the Boolean domain, the constant~0 and constant~1 functions are the signatures $[1,0]$ and $[0,1]$ respectively.

From these works, the most relevant pinning lemma for the $\PlCSP$ framework is by Dyer, Goldberg, and Jerrum in~\cite{DGJ09},
where they show how to pin in the $\CSP$ framework.
However, the proof of this pinning lemma is highly nonplanar.
Cai, Lu, and Xia~\cite{CLX10} overcame this difficultly in the proof of their dichotomy theorem for the real-weighted $\PlCSP$ framework by
first undergoing a holographic transformation by the Hadamard matrix $H = \left[\begin{smallmatrix} 1 & 1 \\ 1 & -1 \end{smallmatrix}\right]$
and then pinning in this Hadamard basis.\footnote{The pinning in~\cite{CLX10}, which is accomplished in Section~IV,
is not summarized in a single statement but is implied by the combination of all the results in that section.}
We stress that this holographic transformation is necessary.
Indeed, if one were able to pin in the standard basis of the $\PlCSP$ framework,
then $\P = \SHARPP$ would follow since $\PlCSP(\widehat{\mathscr{M}})$ is tractable
but $\PlCSP(\widehat{\mathscr{M}} \union \{[1,0], [0,1]\})$ is $\SHARPP$-hard by our main dichotomy in Theorem~\ref{thm:PlCSP}
(or, more specifically, by Lemma~\ref{lem:mixing:unaries:MA}).

Since $\PlCSP(\mathcal{F})$ is Turing equivalent to $\PlHolant(\mathcal{F} \union \EQ)$,
the expression of $\PlCSP(\mathcal{F})$ in the Hadamard basis is $\PlHolant(H \mathcal{F} \union \widehat{\EQ})$.
Then we already have $[1,0] \in \widehat{\EQ}$, so pinning in the Hadamard basis of $\PlCSP(\mathcal{F})$ amounts to obtaining the missing signature $[0,1]$.

\subsection{The Road to Pinning}

We begin the road to pinning with a lemma that assumes the presence of $[0,0,1] = [0,1]^{\otimes 2}$, which is the tensor product of two copies of $[0,1]$.
In our pursuit to realize $[0,1]$, this may be as close as we can get, such as when every signature has even arity.
Another roadblock to realizing $[0,1]$ is when every signature has even parity.
Recall that a signature has even parity if its support is on entries of even Hamming weight.
By a simple parity argument, gadgets constructed with signatures of even parity can only realize signatures of even parity.
However, if every signature has even parity and $[0,0,1]$ is present, then we can already prove a dichotomy.

\begin{lemma} \label{lem:dichotomy:even_parity}
 Suppose $\mathcal{F}$ is a set of symmetric signatures with complex weights containing $[0,0,1]$.
 If every signature in $\mathcal{F}$ has even parity,
 then either $\PlHolant(\mathcal{F} \union \widehat{\EQ})$ is $\SHARPP$-hard
 or $\mathcal{F}$ is a subset of $\mathscr{A}$, $\widehat{\mathscr{P}}$, or $\mathscr{M}$,
 in which case $\PlHolant^c(\mathcal{F} \union \widehat{\EQ})$ is tractable.
\end{lemma}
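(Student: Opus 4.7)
The plan has two main steps: first invoke the mixing theorem to dispatch the case $\mathcal{F} \subseteq \mathscr{A} \union \widehat{\mathscr{P}} \union \mathscr{M}$, then handle the residual case by reducing to the arity-four dichotomy. If $\mathcal{F} \subseteq \mathscr{A} \union \widehat{\mathscr{P}} \union \mathscr{M}$, then Theorem~\ref{thm:mixing} gives either tractability of $\PlHolant^c(\mathcal{F} \union \widehat{\EQ})$ (when $\mathcal{F}$ lies in a single tractable class, invoking Theorem~\ref{thm:PlCSP_tractable:Hadamard}) or $\SHARPP$-hardness (when the signatures mix across classes). The remaining case is that some $f \in \mathcal{F}$ lies outside $\mathscr{A} \union \widehat{\mathscr{P}} \union \mathscr{M}$, and the goal is to derive $\SHARPP$-hardness of $\PlHolant(\mathcal{F} \union \widehat{\EQ})$ using $f$, $[0,0,1]$, and $\widehat{\EQ}$.

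Since $f$ is symmetric with even parity, $f = [f_0, 0, f_2, 0, f_4, 0, \dotsc]$. Every symmetric even-parity signature of arity at most $3$ is a matchgate (the arity-two form $[a, 0, b]$ matches matchgate form~1 via $\alpha^2 = a$ and $\beta^2 = b$, and the arity-three form $[a, 0, b, 0]$ matches matchgate form~2 via $\alpha = a$ and $\beta = b$), so $f$ has arity at least $4$. For arity exactly $4$, Theorem~\ref{thm:arity4:singleton} yields $\SHARPP$-hardness of $\PlHolant(f)$ directly, hence of $\PlHolant(\mathcal{F} \union \widehat{\EQ})$. For higher arity I would reduce to arity $4$ using three parity-preserving operations available from $\widehat{\EQ}$ and $[0,0,1]$: self-loops via $[1,0,1] \in \widehat{\EQ}$, which send $[f_0, 0, f_2, 0, \dotsc]$ to $[f_0 + f_2, 0, f_2 + f_4, 0, \dotsc]$; self-loops via $[0,0,1]$, which pin the two contracted inputs to $1$ and yield $[f_2, 0, f_4, 0, \dotsc]$; and single-edge pinning to $0$ via $[1,0] \in \widehat{\EQ}$.

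After reducing to an arity-four descendant $[a, 0, b, 0, c]$, I would split on $b$. If $b \neq 0$, the triples $(a, 0, b)$, $(0, b, 0)$, $(b, 0, c)$ are linearly independent, so Corollary~\ref{cor:arity4:nonsingular_compressed_hard:symmetric} gives $\SHARPP$-hardness. If $b = 0$ and the descendant is non-tractable, then $ac \neq 0$ and $a^4 \neq c^4$ must hold, and Lemma~\ref{lem:domain_pairing:weak} closes this case.

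The main obstacle I anticipate is guaranteeing that some sequence of reductions actually produces a non-tractable arity-four descendant. To handle this I would verify closure of each of $\mathscr{A}$, $\widehat{\mathscr{P}}$, and $\mathscr{M}$ under the three reduction operations on symmetric even-parity signatures, relying on the matchgate identities for $\mathscr{M}$, the affine coefficient structure for $\mathscr{A}$, and the explicit classification of symmetric $\widehat{\mathscr{P}}$-signatures together with the fact that $[1,0]$, $[1,0,1]$, and $[0,0,1]$ all lie in $\widehat{\mathscr{P}}$. Arguing contrapositively, if every arity-four descendant of $f$ lay in one fixed tractable class $X$, the closure structure of $X$ would force $f$ itself into $X$, contradicting the choice of $f$. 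In the alternative scenario where the realized descendants span more than one tractable class, Theorem~\ref{thm:mixing} applies directly to the realized set and yields $\SHARPP$-hardness. The remaining case --- some descendant already outside all three tractable classes --- is closed by the three hardness tools cited above.
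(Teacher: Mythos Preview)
Your proposal has two gaps, one minor and one that breaks the argument.

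\textbf{Minor.} Invoking Theorem~\ref{thm:arity4:singleton} for an arity-$4$ signature $f\notin\mathscr{A}\cup\widehat{\mathscr{P}}\cup\mathscr{M}$ does not give hardness: the tractable side of that theorem is ``$\mathscr{A}$-, $\mathscr{P}$-, or $\mathscr{M}$-\emph{transformable}, or vanishing'', which is strictly larger. For instance $f=[1,0,0,0,2]$ lies in $\mathscr{P}$ (a generalized equality), hence is $\mathscr{P}$-transformable with $T=I$, so $\PlHolant(f)$ is tractable --- yet $f\notin\mathscr{A}\cup\widehat{\mathscr{P}}\cup\mathscr{M}$. This is easy to patch: apply your own $b$-split (Corollary~\ref{cor:arity4:nonsingular_compressed_hard:symmetric} when $b\neq 0$ and $ac\neq b^2$; Lemma~\ref{lem:domain_pairing:weak} when $b=0$) to the arity-$4$ case as well.

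\textbf{Serious.} Your ``contrapositive'' step is actually the \emph{converse} of closure, and the converse is false. Closure says ``$f\in X \Rightarrow$ every descendant $\in X$''; its contrapositive is ``some descendant $\notin X \Rightarrow f\notin X$''. What you need is ``every arity-$4$ descendant $\in X \Rightarrow f\in X$'', and this fails. Take $f=[1,0,0,0,-1,0]$ of arity~$5$: it has even parity and one checks $f\notin\mathscr{A}\cup\widehat{\mathscr{P}}\cup\mathscr{M}$. From arity~$5$ your three operations drop the arity by $1$ (pin with $[1,0]$) or by $2$ (self-loop or $[0,0,1]$), so the \emph{only} arity-$4$ descendant is obtained by pinning once with $[1,0]$, yielding $[1,0,0,0,-1]\in\mathscr{A}$. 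All arity-$4$ descendants lie in $\mathscr{A}$, there is nothing to mix, yet $f\notin\mathscr{A}$.

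The paper does not reduce everything to arity~$4$. After filtering with Corollary~\ref{cor:arity4:nonsingular_compressed_hard} on all windows $[f_{k-2},0,f_k,0,f_{k+2}]$, each surviving $f$ is either a geometric progression on its even-weight entries (hence in $\mathscr{M}$) or of the form $[x,0,\dots,0,y]$ or $[x,0,\dots,0,y,0]$. The second of these shapes is exactly where your reduction stalls, and the paper dispatches it at arity~$5$ via Lemma~\ref{lem:arity5:hard_sig} on $[x,0,0,0,y,0]$ rather than descending to arity~$4$. The first shape $[x,0,\dots,0,y]$ is then split by $x^4\stackrel{?}{=}y^4$ into $\mathscr{A}$ versus Lemma~\ref{lem:domain_pairing:weak}, after which Theorem~\ref{thm:mixing} finishes.
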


\begin{proof}
 The tractability is given in Theorem~\ref{thm:PlCSP_tractable:Hadamard}.
 If every non-degenerate signature in $\mathcal{F}$ is of arity at most~3,
 then $\mathcal{F} \subseteq \mathscr{M}$ since all signatures 
 in $\mathcal{F}$ satisfy the (even) parity condition.
 
 Otherwise $\mathcal{F}$ contains some non-degenerate signature of arity at least~4.
 For every signature $f \in \mathcal{F}$ with $f = [f_0, f_1, \dotsc, f_m]$ and $m \ge 4$, using $[0,0,1]$ and $[1,0]$,
 we can obtain all subsignatures of the form $[f_{k-2},0,f_k,0,f_{k+2}]$ for any even $k$ such that $2 \le k \le m-2$.
 If any subsignature $g$ of this form satisfies $f_{k-2} f_{k+2} \ne f_k^2$ and $f_k \ne 0$,
 then $\PlHolant(g)$ is $\SHARPP$-hard by Corollary~\ref{cor:arity4:nonsingular_compressed_hard},
 so $\PlHolant(\mathcal{F} \union \widehat{\EQ})$ is also $\SHARPP$-hard.

 Otherwise all subsignatures of signatures in $\mathcal{F}$ of the above form satisfy $f_{k-2} f_{k+2} = f_k^2$ or $f_k = 0$.
 There are two types of signatures with this property.
 In the first type, the signature entries of even Hamming weight form a geometric progression.
 More specifically, the signatures of the first type have the form
 \[
  [\alpha^n, 0, \alpha^{n-1} \beta, 0, \dotsc, 0, \alpha \beta^{n-1}, 0, \beta^n]
  \qquad \text{or} \qquad
  [\alpha^n, 0, \alpha^{n-1} \beta, 0, \dotsc, 0, \alpha \beta^{n-1}, 0, \beta^n, 0]
 \]
 for some $\alpha, \beta \in \mathbb{C}$, which are in $\mathscr{M}$.
 In the second type, the signatures have arity at least~4 or~5 and 
 are of the form $[x,0,\dotsc,0,y]$ or $[x,0,\dotsc,0,y,0]$ respectively,
 with $x y \neq 0$ and an odd number of 0's between $x$ and $y$ (since they have even parity).
 If all of the signatures in $\mathcal{F}$ are of the first type,
 then $\mathcal{F} \subseteq \mathscr{M}$.
 
 Otherwise $\mathcal{F}$ contains a signature $f$ of the second type.
 Suppose $f = [x,0,\dotsc,0,y,0]$ of arity at least~$5$ with $x y \ne 0$.
 After some number of self-loops, we have $g = [x,0,0,0,y,0]$ of arity exactly~$5$.
 Then $\PlHolant(g)$ is $\SHARPP$-hard by Lemma~\ref{lem:arity5:hard_sig},
 so $\PlHolant(\mathcal{F} \union \widehat{\EQ})$ is also $\SHARPP$-hard.
 
 Otherwise $f = [x,0,\dotsc,0,y]$ of arity at least~$4$ with $x y \ne 0$.
 If $x^4 \ne y^4$, then $\PlHolant(\mathcal{F} \union \widehat{\EQ})$ is $\SHARPP$-hard by Lemma~\ref{lem:domain_pairing:weak}.
 
 Otherwise $x^4 = y^4$.
 This puts every signature of the second type in $\mathscr{A}$.
 Therefore $\mathcal{F} \subseteq \mathscr{A} \union \mathscr{M}$ and we are done by Theorem~\ref{thm:mixing}.
\end{proof}

The conclusion of every result in the rest of this section states that we are able to pin (under various assumptions on $\mathcal{F}$).
Formally speaking,
we repeatedly prove that $\PlHolant^c(\mathcal{F} \union \widehat{\EQ})$ is $\SHARPP$-hard (or in $\P$)
if and only if $\PlHolant(\mathcal{F} \union \widehat{\EQ})$ is $\SHARPP$-hard (or in $\P$).
The difference between these two counting problems is the presence of $[0,1]$ in $\PlHolant^c(\mathcal{F} \union \widehat{\EQ})$.
We always prove this statement in one of three ways:
\begin{enumerate}
 \item either we show that $\PlHolant^c(\mathcal{F} \union \widehat{\EQ})$ is tractable (so $\PlHolant(\mathcal{F} \union \widehat{\EQ})$ is as well);
 \item or we show that $\PlHolant(\mathcal{F} \union \widehat{\EQ})$ is $\SHARPP$-hard (so $\PlHolant^c(\mathcal{F} \union \widehat{\EQ})$ is as well);
 \item or we show how to reduce $\PlHolant^c(\mathcal{F} \union \widehat{\EQ})$ to $\PlHolant(\mathcal{F} \union \widehat{\EQ})$ by realizing $[0,1]$ using signatures in $\mathcal{F} \union \widehat{\EQ}$.
\end{enumerate}

\begin{lemma} \label{lem:pinning:001}
 Let $\mathcal{F}$ be any set of complex-weighted symmetric signatures containing $[0,0,1]$.
 Then $\PlHolant^c(\mathcal{F} \union \widehat{\EQ})$ is $\SHARPP$-hard (or in $\P$)
 iff $\PlHolant(\mathcal{F} \union \widehat{\EQ})$ is $\SHARPP$-hard (or in $\P$).
\end{lemma}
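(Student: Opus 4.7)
My plan is to handle one direction trivially and to split the substantive direction on the parity structure of $\mathcal{F}$. Since $[1,0]\in\widehat{\EQ}$ already, the problems $\PlHolant^c(\mathcal{F}\union\widehat{\EQ})$ and $\PlHolant(\mathcal{F}\union\widehat{\EQ})$ differ only in the availability of $[0,1]$. Therefore $\PlHolant^c$ is automatically at least as hard as $\PlHolant$, and tractability of $\PlHolant^c$ implies tractability of $\PlHolant$. The real content is the reverse direction: I must either realize $[0,1]$ as a planar $\mathcal{F}\union\widehat{\EQ}$-gate, or, when no such gadget can exist, argue the dichotomy directly.

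\textbf{Case A (some odd-parity entry).} Suppose some $f=[f_0,\ldots,f_n]\in\mathcal{F}$ has $f_w\ne 0$ for an odd $w$. I will build a planar gadget whose signature is a nonzero multiple of $[0,1]$. Start with a single copy of $f$; pin $n-w$ of its edges to $0$ by attaching the unary $[1,0]\in\widehat{\EQ}$, and pin the remaining $w$ edges to $1$ by attaching a copy of $[0,0,1]\in\mathcal{F}$ to each. Each $[0,0,1]$ both forces the incident $f$-edge to $1$ and leaves a fresh dangling edge also forced to $1$. The resulting planar gadget has arity $w$ with signature $f_w\cdot[0,1]^{\otimes w}$. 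Now pair off $(w-1)/2$ pairs among the $w$ dangling edges using $=_2=[1,0,1]\in\widehat{\EQ}$; since the dangling edges are independent strands emerging from degenerate signatures, an adjacent non-crossing pairing is trivial to arrange. Each self-loop on a $[0,1]^{\otimes 2}=[0,0,1]$ contributes a scalar factor of $1$ and reduces the arity by two, so the final gadget signature is $f_w\cdot[0,1]$. This realizes $[0,1]$ and gives $\PlHolant^c(\mathcal{F}\union\widehat{\EQ})\leq_T\PlHolant(\mathcal{F}\union\widehat{\EQ})$.

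\textbf{Case B (all even parity).} Suppose instead that every $f\in\mathcal{F}$ has even parity. Because $\widehat{\EQ}$ also consists entirely of even-parity signatures and parity is preserved under gadget construction, no planar $\mathcal{F}\union\widehat{\EQ}$-gate can realize the odd-parity signature $[0,1]$. Instead I invoke Lemma~\ref{lem:dichotomy:even_parity}, whose hypotheses are exactly those of this case (the hypothesis $[0,0,1]\in\mathcal{F}$ is part of our setup). That lemma gives the dichotomy: either $\PlHolant(\mathcal{F}\union\widehat{\EQ})$ is $\SHARPP$-hard---in which case $\PlHolant^c(\mathcal{F}\union\widehat{\EQ})$ is $\SHARPP$-hard by the easy direction---or $\mathcal{F}$ is contained in one of $\mathscr{A}$, $\widehat{\mathscr{P}}$, $\mathscr{M}$ and $\PlHolant^c(\mathcal{F}\union\widehat{\EQ})$ is tractable, which implies $\PlHolant(\mathcal{F}\union\widehat{\EQ})$ is tractable. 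In either sub-case the two problems coincide in complexity.

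The main obstacle is not the gadget of Case A, which is a routine pinning-plus-self-loop argument, but rather Case B: $[0,1]$ genuinely cannot be realized, so the equivalence must be deduced without ever constructing it. This is precisely why Lemma~\ref{lem:dichotomy:even_parity} was proved in the preceding step---it classifies the even-parity regime directly, so the pinning lemma reduces to recognizing which of the two parity regimes we are in and quoting the relevant tool.
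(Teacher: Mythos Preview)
Your proof is correct and follows essentially the same approach as the paper: split on whether some $f\in\mathcal{F}$ has a nonzero odd-weight entry, build $[0,1]$ by a pinning gadget in that case, and invoke Lemma~\ref{lem:dichotomy:even_parity} otherwise. The only cosmetic difference is that the paper uses $[0,0,1]$ with both edges attached to $f$ (shifting the signature) to produce a unary $[f_{k-1},f_k]$ with $f_k\ne 0$ and then connects that unary to $[0,0,1]$, whereas you attach $[0,0,1]$ with one dangling edge and then self-loop the resulting degenerate signature down to arity one; both constructions are equally valid and planar.
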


\begin{proof}
 If we had a unary signature $[a,b]$ where $b \ne 0$, then connecting $[a,b]$ to $[0,0,1]$ gives the signature $[0,b]$,
 which is $[0,1]$ after normalizing.
 Thus, in order to reduce $\PlHolant^c(\mathcal{F} \union \widehat{\EQ})$ to $\PlHolant(\mathcal{F} \union \widehat{\EQ})$ by constructing $[0,1]$,
 it suffices to construct a unary signature $[a,b]$ with $b \ne 0$.
 
 For every signature $f \in \mathcal{F}$ with $f = [f_0, f_1, \dotsc, f_m]$, using $[0,0,1]$ and $[1,0]$,
 we can obtain all subsignatures of the form $[f_{k-1},f_k]$ for any odd $k$ such that $1 \le k \le m$.
 If any subsignature satisfies $f_{k} \ne 0$, then we can construct $[0,1]$.
 
 Otherwise all signatures in $\mathcal{F}$ have even parity and we are done by Lemma~\ref{lem:dichotomy:even_parity}.
\end{proof}

\begin{figure}[t]
 \centering
 \begin{tikzpicture}[scale=\scale,transform shape,node distance=\nodeDist,semithick]
  \node[external]  (0)                    {};
  \node[internal]  (1) [right       of=0] {};
  \node[triangle]  (2) [above right of=1] {};
  \node[triangle]  (3) [below right of=1] {};
  \node[internal]  (4) [above right of=3] {};
  \node[external]  (5) [right       of=4] {};
  \path (0) edge node[near end]   (e1) {} (1)
        (1) edge                          (2)
            edge                          (3)
        (4) edge                          (2)
            edge                          (3)
            edge node[near start] (e2) {} (5);
  \begin{pgfonlayer}{background}
   \node[draw=\borderColor,thick,rounded corners,fit = (2) (3) (e1) (e2)] {};
  \end{pgfonlayer}
 \end{tikzpicture}
 \caption{The circles are assigned $[1,0,1,0]$ and the triangles are assigned $[1,0,x]$.}
 \label{fig:gadget:binary_interpolation}
\end{figure}
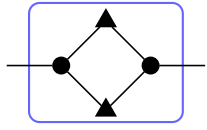

There are two scenarios that lead to Lemma~\ref{lem:pinning:001},
which are the focus of the next two lemmas.

\begin{lemma} \label{lem:pinning:binary_interpolation}
 For $x \in \mathbb{C}$, let $\mathcal{F}$ be any set of complex-weighted symmetric signatures containing $[1,0,x]$ such that $x \not\in \{0, \pm 1\}$.
 Then $\PlHolant^c(\mathcal{F} \union \widehat{\EQ})$ is $\SHARPP$-hard (or in $\P$)
 iff $\PlHolant(\mathcal{F} \union \widehat{\EQ})$ is $\SHARPP$-hard (or in $\P$).
\end{lemma}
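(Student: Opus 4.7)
The plan is to realize the signature $[0,0,1]$ and then invoke Lemma~\ref{lem:pinning:001} to conclude. First I evaluate the gadget in Figure~\ref{fig:gadget:binary_interpolation}: with the arity-$3$ circles assigned $[1,0,1,0] \in \widehat{\EQ}$ and the arity-$2$ triangles assigned $[1,0,x]$, a direct calculation (using that $[1,0,x]$ transmits an edge value $u$ through with weight $x^u$) shows the binary signature realized is $h = [1+x^2,\, 0,\, 2x]$. If $x = \pm i$, then $1+x^2 = 0$ and $h = [0, 0, \pm 2i]$ is already a nonzero scalar multiple of $[0, 0, 1]$, so this case is immediate.

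Otherwise $1+x^2 \ne 0$ and $h$ is a nonzero scalar multiple of $[1, 0, r]$, where $r = 2x/(1+x^2) \ne 0$. The next step is standard polynomial interpolation. Given an instance $\Omega$ of $\PlHolant^c(\mathcal{F} \union \widehat{\EQ})$ containing $n$ copies of $[0,0,1]$, I substitute each copy with a realizable binary signature $[1, 0, c_k]$, which decomposes as $[1,0,0] + c_k\,[0,0,1]$, to form $\Omega_k$. Stratifying the Holant of $\Omega_k$ by the number $j$ of the $n$ substituted signatures whose two edges are both assigned $1$ yields the Vandermonde relation $\PlHolant_{\Omega_k} = \sum_{j=0}^{n} c_k^j H_j$, where $H_n = \PlHolant_\Omega$. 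This system determines $H_n$ as soon as I exhibit $n+1$ distinct values $c_0, c_1, \dotsc, c_n$.

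I generate ratios through two complementary constructions. Chaining $j$ copies of $[1,0,x]$ in series realizes $[1, 0, x^j]$, giving ratio $x^j$; if $x$ is not a root of unity these are all distinct, and I am done. For root-of-unity $x$, I iterate the gadget itself: replacing each $[1,0,x]$ in Figure~\ref{fig:gadget:binary_interpolation} with a previously realized $[1,0,c]$ produces a binary signature with new ratio $\phi(c) = 2c/(1+c^2)$, so iteration gives ratios $\phi^k(x)$ for $k = 0, 1, 2, \dotsc$. The key algebraic identity is that $\phi$ is conjugate to squaring via the M\"obius map $\psi(y) = (1-y)/(1+y)$, namely $\psi(\phi(c)) = \psi(c)^2$, so $\phi^k(x) = \psi^{-1}\bigl(\psi(x)^{2^k}\bigr)$. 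These values are therefore all distinct precisely when $\psi(x)$ is not a root of unity.

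The main obstacle is the root-of-unity regime, and I dispatch it by the simple observation that the only roots of unity on the imaginary axis are $\pm i$. So any root of unity $x \notin \{\pm 1, \pm i\}$ is not purely imaginary, giving $|1-x| \ne |1+x|$ and hence $|\psi(x)| \ne 1$; in particular $\psi(x)$ is not a root of unity, and the iterated gadget supplies infinitely many distinct ratios. Combined with the chain construction for non-root-of-unity $x$ and the direct realization for $x = \pm i$, interpolation succeeds for every $x \notin \{0, \pm 1\}$, realizing $[0,0,1]$ and hence, by Lemma~\ref{lem:pinning:001}, reducing $\PlHolant^c(\mathcal{F} \union \widehat{\EQ})$ to $\PlHolant(\mathcal{F} \union \widehat{\EQ})$.
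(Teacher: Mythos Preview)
Your proof is correct and follows the same overall plan as the paper: realize $[0,0,1]$ by interpolation from signatures of the form $[1,0,c]$, then invoke Lemma~\ref{lem:pinning:001}. The gadget computation, the $x=\pm i$ case, and the chain construction $[1,0,x^j]$ all match the paper exactly.

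The one genuine difference is in how you handle the hard case. The paper splits on $|x|$: when $|x|\neq 1$ the chain already gives infinitely many distinct ratios, and when $|x|=1$ (with $x\neq\pm 1,\pm i$) a \emph{single} application of the gadget yields $[1,0,y]$ with $|y|=|2x|/|1+x^2|>1$, reducing to the first case. You instead split on whether $x$ is a root of unity, and in the root-of-unity case you \emph{iterate} the gadget, exploiting the nice identity $\psi\circ\phi=(\cdot)^2\circ\psi$ with $\psi(y)=(1-y)/(1+y)$ to see that the iterates $\phi^k(x)$ are distinct whenever $|\psi(x)|\neq 1$. Your observation that a root of unity $x\notin\{\pm 1,\pm i\}$ has $\operatorname{Re}(x)\neq 0$, hence $|\psi(x)|\neq 1$, is exactly the same inequality the paper uses (just phrased through $\psi$). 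So the two arguments are really the same modulus computation packaged differently: the paper buys a one-shot reduction, while your M\"obius-conjugacy viewpoint is more structural but needs the iteration (and the implicit check, which does go through, that the orbit never hits $\pm i$ or $-1$).
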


\begin{proof}
 There are two cases.
 In either case, we realize $[0,0,1]$ and finish by applying Lemma~\ref{lem:pinning:001}.

 First we claim that the conclusion holds provided $|x| \ne 0,1$.
 Combining $k$ copies of $[1,0,x]$ gives $[1,0,x^k]$.
 Since $|x| \not\in \{0,1\}$, $x$ is neither zero nor a root of unity,
 so we can use polynomial interpolation to realize $[a,0,b]$ for any $a, b \in \mathbb{C}$, including $[0,0,1]$.
 
 Otherwise $|x| = 1$.
 The gadget in Figure~\ref{fig:gadget:binary_interpolation} has signature $[f_0, f_1, f_2] = [1 + x^2, 0, 2x]$.
 If $x = \pm i$, then we have $[0, 0, \pm 2 i]$, which is $[0,0,1]$ after normalizing.
 
 Otherwise $x \neq \pm i$, so $f_0 \neq 0$.
 Since $x \neq 0$, $f_2 \neq 0$.
 Since $x \neq \pm 1$, $|f_0| < 2$.
 However, $|f_2| = 2$.
 Therefore, after normalizing, the signature $[1,0,y]$ with $y = \frac{2x}{1 + x^2}$ has $|y| > 1$,
 so it can interpolate $[0,0,1]$ by our initial claim since $|y| \not\in \{0,1\}$.
\end{proof}

\begin{lemma} \label{lem:pinning:f0=0}
 Let $\mathcal{F}$ be any set of complex-weighted symmetric signatures containing a signature $[f_0, f_1, \dotsc, f_n]$
 that is not identically~0 but has $f_0 = 0$.
 Then $\PlHolant^c(\mathcal{F} \union \widehat{\EQ})$ is $\SHARPP$-hard (or in $\P$)
 iff $\PlHolant(\mathcal{F} \union \widehat{\EQ})$ is $\SHARPP$-hard (or in $\P$).
\end{lemma}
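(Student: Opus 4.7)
The plan is to realize the unary signature $[0,1]$, or at least to reduce to a situation where Lemma~\ref{lem:pinning:001} applies, using only signatures in $\mathcal{F} \union \widehat{\EQ}$. Since $[1,0] \in \widehat{\EQ}$, this suffices to reduce $\PlHolant^c(\mathcal{F} \union \widehat{\EQ})$ to $\PlHolant(\mathcal{F} \union \widehat{\EQ})$.

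Let $k$ be the smallest index with $f_k \neq 0$. Because $f_0 = 0$ and $f$ is not identically zero, we have $1 \leq k \leq n$. The first step is to attach $n - k$ copies of $[1,0] \in \widehat{\EQ}$ to $n - k$ of the edges of $f$. Since $f$ is symmetric, this planar gadget yields the symmetric signature $[f_0, f_1, \ldots, f_k] = [0, \ldots, 0, f_k]$ of arity $k$, which after normalizing is $g = [0, \ldots, 0, 1]$ of arity $k$.

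Now split on $k$. If $k = 1$, then $g = [0,1]$ and we are done. If $k = 2$, then $g = [0,0,1]$, so Lemma~\ref{lem:pinning:001} directly finishes the reduction. Otherwise $k \geq 3$, and we iteratively apply self-loops using $[1,0,1] \in \widehat{\EQ}$. The self-loop of a symmetric signature $[h_0, h_1, \ldots, h_m]$ is $[h_0 + h_2, h_1 + h_3, \ldots, h_{m-2} + h_m]$, so one self-loop on $[0, \ldots, 0, 1]$ of arity $m \geq 3$ yields $[0, \ldots, 0, 1]$ of arity $m - 2$. Iterating, if $k$ is odd we reach $[0,1]$ after $(k-1)/2$ self-loops; if $k$ is even we reach $[0,0,1]$ after $(k-2)/2$ self-loops and then invoke Lemma~\ref{lem:pinning:001}. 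All these operations (pinning via $[1,0]$ and self-loops via $[1,0,1]$) manifestly preserve planarity.

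There is no real technical obstacle here. The key observation is simply that the trailing-$1$ structure $[0, \ldots, 0, 1]$ is preserved by both pinning from the right by $[1,0]$ and by self-loops with $[1,0,1]$, so a parity case split on $k$ determines whether we land on $[0,1]$ directly or first on $[0,0,1]$ and chain through Lemma~\ref{lem:pinning:001}.
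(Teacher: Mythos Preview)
Your proof is correct and follows essentially the same approach as the paper: use copies of $[1,0]\in\widehat{\EQ}$ to pin and self-loops to shrink arity, landing on $[0,1]$ or $[0,0,1]$ and invoking Lemma~\ref{lem:pinning:001} in the latter case. The only cosmetic difference is the order of operations---you pin first to isolate $[0,\dotsc,0,1]$ and then self-loop, whereas the paper self-loops on $f$ first and then pins---but the idea is identical.
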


\begin{proof}
 If $f_1 \ne 0$, then we connect $n-1$ copies of $[1,0]$ to $f$ to get $[0,f_1]$, which is $[0,1]$ after normalizing.
 If $f_1 = 0$, then $n \ge 2$.
 If $f_2 \ne 0$, then we connect $n-2$ copies of $[1,0]$ to $f$ to get $[0,0,f_2]$, which is $[0,0,1]$ after normalizing.
 Then we are done by Lemma~\ref{lem:pinning:001}.
 If $f_1 = f_2 = 0$, then $n \ge 3$.
 With some number of self-loops, we get a signature with exactly one or two initial zeros, which is one of the above scenarios.
\end{proof}

As a significant step toward pinning for any signature set $\mathcal{F}$, we show how to pin given any binary signature.
Some cases resist pinning and are excluded.

\begin{lemma} \label{lem:pinning:binary}
 Let $\mathcal{F}$ be any set of complex-weighted symmetric signatures containing $f = [f_0, f_1, f_2]$.
 Then $\PlHolant^c(\mathcal{F} \union \widehat{\EQ})$ is $\SHARPP$-hard (or in $\P$)
 iff $\PlHolant(\mathcal{F} \union \widehat{\EQ})$ is $\SHARPP$-hard (or in $\P$)
 unless $f \in \{[0,0,0], [1,0,-1], [1,r,r^2], [1,b,1]\}$, up to a nonzero scalar, for any $b, r \in \mathbb{C}$.
\end{lemma}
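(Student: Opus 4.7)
The plan is to case-split on the shape of $f = [f_0, f_1, f_2]$ and, in each surviving subcase, either invoke an earlier pinning lemma, construct $[0,1]$ by an explicit small planar gadget, or — most importantly — observe that $\PlHolant(\mathcal{F}\cup\widehat{\EQ})$ is itself already $\SHARPP$-hard by Theorem~\ref{thm:Pl-Graph_Homomorphism}, which makes the reduction trivial because $\PlHolant^c \equiv_T \PlHolant$ is automatic in any hard case. I would first dispose of $f_0 = 0$: if $f \ne [0,0,0]$ then some entry is nonzero, and Lemma~\ref{lem:pinning:f0=0} applies. So after normalizing I may assume $f = [1, f_1, f_2]$.

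If $f_1 = 0$, the excluded values $f_2 \in \{0, 1, -1\}$ match the templates $[1,r,r^2]$ with $r=0$, $[1,b,1]$ with $b=0$, and $[1,0,-1]$ respectively; for any other $f_2$ the signature is $[1,0,f_2]$ with $f_2 \notin \{0,\pm 1\}$, and Lemma~\ref{lem:pinning:binary_interpolation} directly gives pinning. If $f_1 \ne 0$, the degenerate case $f_2 = f_1^2$ is the template $[1,r,r^2]$ with $r = f_1$ (excluded), and $f_2 = 1$ is $[1,b,1]$ (excluded). So I only need to pin when $f_1 \ne 0$ and $f_2 \notin \{1, f_1^2\}$.

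Within that last setting I further split on whether $f_2 = -1$. If $f_2 = -1$ (so $f_1 \ne \pm i$, since otherwise $f$ is degenerate and excluded), I build $[0,1]$ by a direct gadget: take one copy of $[1,0,1,0]\in\widehat{\EQ}$ and attach two of its three edges to the two edges of $f$, leaving a single dangling edge. This is planar (two vertices joined by two parallel edges, plus one dangling edge), and the resulting unary evaluates to $[1+f_2,\, 2f_1] = [0,\, 2f_1]$, which is a nonzero multiple of $[0,1]$ since $f_1 \ne 0$. Composing with $[1,0]\in\widehat{\EQ}$ as needed gives the pinning reduction $\PlHolant^c(\mathcal{F}\cup\widehat{\EQ}) \le_T \PlHolant(\mathcal{F}\cup\widehat{\EQ})$.

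Finally, if $f_2 \notin \{1, f_1^2, -1\}$ and $f_1 \ne 0$, I verify that none of the four tractability conditions of Theorem~\ref{thm:Pl-Graph_Homomorphism} hold for $\plholant{f}{\widehat{\EQ}}$: condition~(\ref{case:Pl-GH:degenerate}) fails because $f_2 \ne f_1^2$; condition~(\ref{case:Pl-GH:M}) fails because $f_1 \ne 0$; condition~(\ref{case:Pl-GH:A}) requires $f_0 = -f_2$, i.e.\ $f_2 = -1$; condition~(\ref{case:Pl-GH:P-hat}) requires $f_2 = 1$. Hence $\plholant{f}{\widehat{\EQ}}$, and therefore $\PlHolant(\mathcal{F}\cup\widehat{\EQ})$, is $\SHARPP$-hard, so $\PlHolant^c(\mathcal{F}\cup\widehat{\EQ})$ is $\SHARPP$-hard as well and the stated equivalence holds vacuously. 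The main obstacle is psychological: the natural instinct is to attempt interpolation with $(M_f, [1,0]^{\mathsf T})$ in the last subcase and fight with the infinite family of failure cases where $M_f$ has finite order modulo a scalar. The key realization is that precisely the bad interpolation cases (other than the eigenvalue-ratio $-1$ case $f_2 = -1$, handled explicitly) are already covered by Theorem~\ref{thm:Pl-Graph_Homomorphism}, so no interpolation is required at all.
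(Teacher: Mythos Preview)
Your proof is correct and follows essentially the same approach as the paper: handle $f_0=0$ via Lemma~\ref{lem:pinning:f0=0}, handle $[1,0,x]$ with $x\notin\{0,\pm1\}$ via Lemma~\ref{lem:pinning:binary_interpolation}, build $[0,1]$ from $[1,f_1,-1]$ by contracting into $[1,0,1,0]$, and dispose of everything else by Theorem~\ref{thm:Pl-Graph_Homomorphism}. The only cosmetic difference is ordering: the paper applies Theorem~\ref{thm:Pl-Graph_Homomorphism} first and then treats the five surviving tractable shapes, whereas you case-split on $f_1$ and $f_2$ upfront (and in particular use the $[1,0,1,0]$ gadget for all $[1,f_1,-1]$ with $f_1\ne 0$, not just $f_1=\pm1$, which is harmless).
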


\begin{proof}
 If $f_0 = 0$ and either $f_1 \ne 0$ or $f_2 \ne 0$,
 then we are done by Lemma~\ref{lem:pinning:f0=0}.
 Otherwise, $f = [0,0,0]$ or $f_0 \ne 0$,
 in which case we normalize $f_0$ to~$1$.
 If $\plholant{f}{\widehat{\EQ}}$ is $\SHARPP$-hard by Theorem~\ref{thm:Pl-Graph_Homomorphism},
 then $\PlHolant(\mathcal{F} \union \widehat{\EQ})$ is also $\SHARPP$-hard.
 Otherwise, $f$ is one of the tractable cases, which implies that
 \[f \in \{[0,0,0], [1,r,r^2], [1,0,x], [1, \pm 1, -1], [1,b,1]\}.\]
 If $f = [1, \pm 1, -1]$, then we connect $f$ to $[1,0,1,0]$ to get $[0, \pm 2]$, which is $[0,1]$ after normalizing.
 If $f = [1, 0, x]$, then we are done by Lemma~\ref{lem:pinning:binary_interpolation} unless $x \in \{0, \pm 1\}$.
 The remaining cases are all excluded by assumption, so we are done.
\end{proof}

\subsection{Pinning in the Hadamard Basis}

Before we show how to pin in the Hadamard basis, we handle two simple cases.

\begin{lemma} \label{lem:pinning:1i}
 For any set $\mathcal{F}$ of complex-weighted symmetric signatures containing $[1, \pm i]$, we have
 $\PlHolant^c(\mathcal{F} \union \widehat{\EQ}) \le_T \PlHolant(\mathcal{F} \union \widehat{\EQ})$.
\end{lemma}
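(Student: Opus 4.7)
The plan is to realize $[0,1]$ as the signature of a small planar $\mathcal{F} \union \widehat{\EQ}$-gate, since having $[0,1]$ available converts $\PlHolant^c(\mathcal{F} \union \widehat{\EQ})$ into a problem reducible to $\PlHolant(\mathcal{F} \union \widehat{\EQ})$ (the signature $[1,0]$ is already in $\widehat{\EQ}$, so realizing $[0,1]$ is all that is needed for pinning).

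The key ingredient available for free from $\widehat{\EQ}$ is the ternary signature $[1,0,1,0]$. I would first attach a copy of $[1,\pm i]$ to one of its three dangling edges. Fixing the attached edge to $0$ yields the binary subsignature $[1,0,1]$ of $[1,0,1,0]$, while fixing it to $1$ yields $[0,1,0]$. Summing these with the weights $1$ and $\pm i$ gives the binary signature
\[
1 \cdot [1,0,1] + (\pm i) \cdot [0,1,0] = [1,\pm i,1].
\]
Notice this is exactly a signature of the excluded form $[1,b,1]$ in Lemma~\ref{lem:pinning:binary}, so we cannot appeal to that lemma; we must continue the construction by hand.

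Next, I would attach a second copy of $[1,\pm i]$ to one edge of the binary signature $[1,\pm i,1]$ just built. Again summing over the shared edge:
\[
1 \cdot [1,\pm i] + (\pm i) \cdot [\pm i,1] = [1 + (\pm i)^2,\ \pm i + (\pm i)] = [0,\pm 2i].
\]
Up to the nonzero scalar $\pm 2i$, this is precisely $[0,1]$. The whole construction is a planar $\mathcal{F} \union \widehat{\EQ}$-gate (a ternary vertex with two unary attachments collapses to the desired unary), so $[0,1]$ is realizable in the planar setting.

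The result follows: any instance of $\PlHolant^c(\mathcal{F} \union \widehat{\EQ})$ can be converted to an instance of $\PlHolant(\mathcal{F} \union \widehat{\EQ})$ by replacing each occurrence of $[0,1]$ by the above gadget (and dividing out the global $(\pm 2i)$ factor per use). There is no real obstacle here; the only mild subtlety is that the intermediate signature $[1,\pm i,1]$ lies in the problematic class of Lemma~\ref{lem:pinning:binary}, which is why one extra application of $[1,\pm i]$ is needed to collapse it to $[0,1]$.
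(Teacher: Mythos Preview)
Your argument is correct and is essentially identical to the paper's proof: the paper simply says ``connect two copies of $[1,\pm i]$ to $[1,0,1,0]$ to get $[0,\pm 2i]$, which is $[0,1]$ after normalizing,'' and you have carried out that same computation in two steps. The detour through Lemma~\ref{lem:pinning:binary} is unnecessary commentary, since you never actually invoke it; the direct gadget already gives $[0,1]$.
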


\begin{proof}
 Connect two copies of $[1, \pm i]$ to $[1,0,1,0]$ to get $[0, \pm 2 i]$, which is $[0,1]$ after normalizing.
\end{proof}

The next lemma considers the signature $[1,b,1,b^{-1}]$,
which we also encounter in Theorem~\ref{thm:dichotomy:single_sig}, the single signature dichotomy.

\begin{lemma} \label{lem:pinning:1b1bInverse}
 Let $b \in \mathbb{C}$.
 If $b \not\in \{0, \pm 1\}$,
 then for any set $\mathcal{F}$ of complex-weighted symmetric signatures containing $f = [1,b,1,b^{-1}]$,
 $\PlHolant(\mathcal{F} \union \widehat{\EQ})$ is $\SHARPP$-hard.
\end{lemma}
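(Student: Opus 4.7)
The plan is to establish $\SHARPP$-hardness of $\PlHolant(\{f\} \union \widehat{\EQ})$ by constructing, from $f$ and signatures in $\widehat{\EQ}$, a symmetric binary signature that fails every tractability condition of Theorem~\ref{thm:Pl-Graph_Homomorphism}. The first gadget I would try links two copies of $f$ along two of their edges in a planar fashion, yielding the symmetric binary
\[
 g_1 = [\,2+2b^2,\ 3b+b^{-1},\ 2+b^2+b^{-2}\,].
\]
A direct check of the four conditions of Theorem~\ref{thm:Pl-Graph_Homomorphism} shows $\plholant{g_1}{\widehat{\EQ}}$ is $\SHARPP$-hard except when $b^2 \in \{1,\,-1,\,-1/2,\,-1/3\}$; excluding $b^2 = 1$ by hypothesis, only three exceptional values of $b^2$ remain.

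For the exceptional cases I would deploy an alternative gadget built in the same spirit. Connecting $\widehat{=}_4 = [1,0,1,0,1] \in \widehat{\EQ}$ to $f$ along three edges produces the unary $u = [4,\ 3b+b^{-1}]$; attaching $u$ to a single edge of $f$ yields the binary
\[
 g_2 = [\,5+3b^2,\ 7b+b^{-1},\ 7+b^{-2}\,],
\]
which fails all four tractability conditions at $b^2 \in \{-1,\,-1/2\}$. Together $g_1$ and $g_2$ cover every $b$ with $b^2 \not\in \{1,\,-1/3\}$.

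The main obstacle is the remaining value $b^2 = -1/3$, at which the cancellation $3b + b^{-1} = 0$ forces both $g_1$ and $g_2$ to collapse into tractable binaries. To close this case I would appeal directly to Theorem~\ref{thm:arity3:singleton}, verifying that the specific ternary $f = [1,\,\pm i/\sqrt{3},\,1,\,\mp i\sqrt{3}]$ is non-degenerate, not vanishing (its weight-$2$ entry is $1$, whereas a vanishing $[1,\,\pm i]^{\otimes 3}$ would have weight-$2$ entry $-1$), and in none of the $\mathscr{A}$-, $\mathscr{P}$-, or $\mathscr{M}$-transformable classes. This transformability check is the hard step; it can be done by parameterising a putative transformation $T$ under the constraint $[1,0,1]T^{\otimes 2}$ lying in the target tractable set and ruling out simultaneous membership of $(T^{-1})^{\otimes 3}f$ in that set. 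Alternatively, one can exploit the asymmetric binary gadget formed by connecting $\widehat{=}_3 = [1,0,1,0]$ to $f$ along two edges, whose $2 \times 2$ signature matrix has eigenvalue ratio $2+\sqrt{3}$, not a root of unity, so an interpolation argument extending Lemma~\ref{lem:unary_recursive_construction} yields hardness.
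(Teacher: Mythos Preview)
Your approach is substantially more complicated than the paper's, and the final exceptional case is not fully closed.

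The paper's proof is a uniform two-step construction: connect two copies of $[1,0]\in\widehat{\EQ}$ to $f$ to obtain the unary $[1,b]$, then connect $[1,b]$ back to $f$ to obtain the binary $g=[1+b^2,\,2b,\,2]$. One checks directly that for every $b\notin\{0,\pm 1\}$ this $g$ violates all four tractable conditions of Theorem~\ref{thm:Pl-Graph_Homomorphism}: the degeneracy condition gives $b^2=1$, the condition $f_1=0$ gives $b=0$, the condition $f_0=f_2$ gives $b^2=1$, and the affine condition forces $1+b^2=-2$ together with $2(1+b^2)=-4b^2$, which is inconsistent. No case split is needed.

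Your route via $g_1$ and $g_2$ is correct where you claim it is, but it manufactures exceptional values that the simpler gadget avoids. At $b^2=-1/3$ you then have a real gap. First, your vanishing argument is not valid as stated: the vanishing ternaries are not only the degenerate $[1,\pm i]^{\otimes 3}$, so comparing a single entry does not rule them out. Second, in your interpolation alternative the eigenvalue ratio $2+\sqrt{3}$ is computed correctly and Lemma~\ref{lem:unary_recursive_construction} does apply (with starter $[1,0]$), but that only yields arbitrary unaries; you still owe a final step producing a hard binary. For instance, once $[1,b^{-1}]$ is available, connecting it to $f$ gives $[2,\,b+b^{-1},\,1+b^{-2}]=[2,-2b,-2]$ at $b^2=-1/3$, which is $\SHARPP$-hard by Theorem~\ref{thm:Pl-Graph_Homomorphism}. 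Without such a step the proposal is incomplete. The moral is that choosing the unary $[1,b]$ (readily available from $f$ itself) instead of the self-convolution of $f$ collapses all of your exceptional analysis.
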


\begin{proof}
 Connect two copies of $[1,0]$ to $f$ to get $[1, b]$.
 Connecting this back to $f$ gives $g = [1 + b^2, 2 b, 2]$.
 Then $\plholant{g}{\widehat{\EQ}}$ is $\SHARPP$-hard by Theorem~\ref{thm:Pl-Graph_Homomorphism},
 so $\PlHolant(\mathcal{F} \union \widehat{\EQ})$ is also $\SHARPP$-hard.
\end{proof}

Now we are ready to prove our pinning result.

\begin{theorem}[Pinning] \label{thm:pinning}
 Let $\mathcal{F}$ be any set of complex-weighted symmetric signatures.
 Then $\PlHolant^c(\mathcal{F} \union \widehat{\EQ})$ is $\SHARPP$-hard (or in $\P$)
 iff $\PlHolant(\mathcal{F} \union \widehat{\EQ})$ is $\SHARPP$-hard (or in $\P$).
\end{theorem}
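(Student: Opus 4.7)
One direction is immediate: $\PlHolant(\mathcal{F}\cup\widehat{\EQ})$ is a special case of $\PlHolant^c(\mathcal{F}\cup\widehat{\EQ})$ by inclusion, so hardness (resp.\ tractability) transfers in one direction for free. For the converse, my plan is to realize $[0,1]$ from $\mathcal{F}\cup\widehat{\EQ}$ in a planar way, or else give a direct verdict of $\SHARPP$-hardness or tractability.

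The argument proceeds by routing each $\mathcal{F}$ into one of the pinning lemmas already established. If some $f\in\mathcal{F}$ has $f_0=0$ but is not identically zero, finish by Lemma~\ref{lem:pinning:f0=0}. Otherwise, normalize each $f$ so that $f_0=1$ and try to produce $[0,0,1]$, or $[1,0,x]$ with $x\notin\{0,\pm 1\}$, or $[1,\pm i]$; any of these allows Lemmas~\ref{lem:pinning:001}, \ref{lem:pinning:binary_interpolation}, or~\ref{lem:pinning:1i} to close the case. Also, whenever a binary signature in $\mathcal{F}$, or one obtained from some $f\in\mathcal{F}$ by pinning all but two edges with $[1,0]\in\widehat{\EQ}$, falls outside the exceptional list $\{[0,0,0],[1,0,-1],[1,r,r^2],[1,b,1]\}$ of Lemma~\ref{lem:pinning:binary}, that lemma finishes the job.

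In the remaining case, every three-entry window $[f_k,f_{k+1},f_{k+2}]$ of every $f\in\mathcal{F}$ must land in the above exceptional list, which forces $f$ into a small number of rigid forms: a geometric progression $[1,r,r^2,\dotsc]$, a vector $[1,0,1,0,\dotsc]$ already in $\widehat{\EQ}$, or $[1,b,1,b^{-1},1,b,\dotsc]$. Geometric progressions are degenerate $[1,r]^{\otimes n}$, so if $r\neq 0$ their unary component together with any $[0,0,1]$ realized along the way produces $[0,1]$. For the form $[1,b,1,b^{-1},\dotsc]$ of arity at least $3$, extracting the subsignature $[1,b,1,b^{-1}]$ by pinning invokes Lemma~\ref{lem:pinning:1b1bInverse}, which gives $\SHARPP$-hardness of $\PlHolant(\mathcal{F}\cup\widehat{\EQ})$ unless $b\in\{0,\pm 1\}$; in those exceptional sub-cases $f$ is either degenerate or lies in $\mathscr{A}\cup\mathscr{M}$.

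The main obstacle is the residual scenario where only binary signatures of the form $[1,b,1]$ with $b\neq 0$ are exposed and none of the shortcuts above apply directly. Here the plan is to interpolate a unary $[a,b']$ with $b'\neq 0$ via the recursive unary construction of Section~\ref{sec:interpolation}: when $M_{[1,b,1]}$ has infinite order modulo a scalar, Lemma~\ref{lem:unary_interpolation:EQ-hat} supplies any such unary; when it has finite order, the ratio $(1+b)/(1-b)$ is a root of unity, and the anti-gadget $M^{-1}$ can still be used to fabricate a unary with nonzero second entry, exactly as foreshadowed at the end of Section~\ref{sec:interpolation}. Composing the resulting unary with an even-arity signature of $\widehat{\EQ}$ then yields $[0,1]$. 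Finally, if all of these escape routes fail simultaneously, $\mathcal{F}$ must already satisfy $\mathcal{F}\subseteq\mathscr{A}\cup\widehat{\mathscr{P}}\cup\mathscr{M}$, in which case Theorem~\ref{thm:mixing} delivers $\SHARPP$-hardness of $\PlHolant(\mathcal{F}\cup\widehat{\EQ})$ or Theorem~\ref{thm:PlCSP_tractable:Hadamard} delivers tractability of $\PlHolant^c(\mathcal{F}\cup\widehat{\EQ})$, and the equivalence follows in both directions.
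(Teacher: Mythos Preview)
Your overall plan mirrors the paper's, but there is a genuine gap in the central step. You assert that in the residual case ``every three-entry window $[f_k,f_{k+1},f_{k+2}]$ of every $f\in\mathcal{F}$ must land in the above exceptional list.'' But you cannot access an arbitrary window: the only pinning unary you have in $\widehat{\EQ}$ is $[1,0]$, which gives you only \emph{prefixes} $[f_0,f_1,\dotsc,f_m]$, never a shifted window $[f_k,f_{k+1},f_{k+2}]$ with $k\ge 1$. Obtaining such shifted windows would require $[0,1]$, which is precisely what you are trying to construct. The paper never makes this window claim; instead it fixes the binary prefix $[f_0,f_1,f_2]$ and, for each exceptional form, asks whether the corresponding pattern persists throughout $f$. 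When the pattern breaks at some index, the paper reaches that index using only $[1,0]$ (to take prefixes), self-loops, and unaries like $[1,r]$ or $[1,-b]$ manufactured from $f$ itself, and then does a nontrivial construction (e.g.\ Case~4 builds $g=[1,r,r^2,x]$ and then $[1,r^{-1}]$; Case~5 uses the anti-gadget $[1,-b,1]$ together with $[1,b,1,x]$ and splits on $y=\frac{1-bx}{1-b^2}\in\{0,\pm1\}$). Your proposal skips all of this.

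Even granting your window claim, the list of ``rigid forms'' is wrong. If every window lies in $\{[0,0,0],[1,0,-1],[1,r,r^2],[1,b,1]\}$ up to scalar, the resulting signatures include $[1,0,-1,0,1,0,\dotsc]$ and $[1,0,\dotsc,0,x]$ with $x^4=1$, neither of which you list; and a signature whose windows are all of $[1,b,1]$-type has the form $[1,b,1,b,\dotsc]$, not $[1,b,1,b^{-1},\dotsc]$. Your subsequent sentence about geometric progressions (``their unary component together with any $[0,0,1]$ realized along the way'') also presupposes a $[0,0,1]$ that has not been produced. Finally, your treatment of the finite-order case for $M_{[1,b,1]}$ is too vague: the anti-gadget $M^{-1}$ has signature $[1,-b,1]$, and turning that into a unary with nonzero second entry requires the specific interactions with the higher-arity $f$ that the paper's Case~5 works out. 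You should replace the window argument with a prefix-only analysis along the lines of the paper's five cases.
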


This theorem does not exclude the possibility that either framework can express a problem of intermediate complexity.
It merely says that if one framework does not contain a problem of intermediate complexity,
then neither does the other.
Our goal is to prove a dichotomy for $\PlHolant(\mathcal{F} \union \widehat{\EQ})$.
By Theorem~\ref{thm:pinning}, this is equivalent to proving a dichotomy for $\PlHolant^c(\mathcal{F} \union \widehat{\EQ})$.

\begin{proof}[Proof of Theorem~\ref{thm:pinning}.]
 For simplicity, we normalize the first nonzero entry of every signature in $\mathcal{F}$ to~$1$.
 If $\mathcal{F}$ contains the degenerate signature $[0,1]^{\otimes n}$ for some $n \ge 1$,
 then we take self-loops on this signature until we have either $[0,1]$ or $[0,0,1]$ (depending on the parity of $n$).
 If we have $[0,1]$, we are done.
 Otherwise, we have $[0,0,1]$ and are done by Lemma~\ref{lem:pinning:001}.
 
 Now assume that any degenerate signature in $\mathcal{F}$ is not of the form $[0,1]^{\otimes n}$.
 Then we can replace these degenerate signatures in $\mathcal{F}$ by their unary versions using $[1,0]$.
 This does not change the complexity of the problem.
 If $\mathcal{F}$ contains only unary signatures,
 then $\mathcal{F} \subseteq \widehat{\mathscr{P}}$ and $\PlHolant^c(\mathcal{F} \union \widehat{\EQ})$ is tractable by Theorem~\ref{thm:PlCSP_tractable:Hadamard}.
 
 Otherwise $\mathcal{F}$ contains a signature $f$ of arity at least two.
 We connect some number of $[1,0]$ to $f$ until we obtain a signature with arity exactly two.
 We call the resulting signature the binary prefix of $f$.
 If this binary prefix is not one of the exceptional forms in Lemma~\ref{lem:pinning:binary}, then we are done, so assume that it is one of the exceptional forms.
 
 Now we perform case analysis according to the exceptional forms in Lemma~\ref{lem:pinning:binary}.
 There are five cases below because we consider $[1,r,r^2]$ as $[1,0,0]$ and $[1,r,r^2]$ with $r \neq 0$ as separate cases.
 In each case, we either show that the conclusion of the theorem holds or that $f \in \mathscr{A} \union \widehat{\mathscr{P}} \union \mathscr{M}$.
 After the case analysis, we then handle all of these tractable $f$ together.
 
 \begin{enumerate}
  \item Suppose the binary prefix of $f$ is $[0,0,0]$.
  If $f$ is not identically~0, then we are done by Lemma~\ref{lem:pinning:f0=0}.
  
  Thus, in this case, we may assume $f = [0,0,\dotsc,0]$ is identically~0.
  
  \item Suppose the binary prefix of $f$ is $[1,0,-1]$.
  If $f$ is not of the form
  \begin{align}
   [1,0,-1,0,1,0,-1,0,\dotsc, 0 \text{ or } 1 \text{ or } (-1)], \label{equ:10-1_pattern}
  \end{align}
  then after one self-loop, we have a signature of arity at least one with~0 as its first entry but is not identically~0,
  so we are done by Lemma~\ref{lem:pinning:f0=0}.
  
  Thus, in this case, we may assume $f$ has the form given in~(\ref{equ:10-1_pattern}).
  
  \item Suppose the binary prefix of $f$ is $[1,0,0]$.
  If $f$ is not of the form $[1,0,\dotsc,0]$,
  then after connecting some number of $[1,0]$, we have $[1,0,\dotsc,0,x]$ of arity at least~3, where $x \ne 0$.
  If $x^4 \ne 1$, then $\PlHolant(\{f\} \union \widehat{\EQ})$ is $\SHARPP$-hard by Lemma~\ref{lem:domain_pairing:weak},
  so $\PlHolant(\mathcal{F} \union \widehat{\EQ})$ is also $\SHARPP$-hard.
  
  Otherwise, $x^4 = 1$.
  Suppose that $x$ is not the last entry in $f$.
  Then connecting one fewer $[1,0]$ than before, we have $g = [1,0,\dotsc,0,x,y]$ and there are two cases to consider.
  If the index of $x$ in $g$ is odd, then after some number of self-loops, we have $h = [1,0,0,x,y]$.
  The determinant of the compressed signature matrix of $h$ is $-2 x^2 \ne 0$.
  Thus, $\Holant(h)$ is $\SHARPP$-hard by Corollary~\ref{cor:arity4:nonsingular_compressed_hard},
  so $\PlHolant(\mathcal{F} \union \widehat{\EQ})$ is also $\SHARPP$-hard.
  
  Otherwise, the index of $x$ in $g$ is even.
  After some number of self-loops, we have $h = [1,0,0,0,x,y]$.
  Then by Lemma~\ref{lem:arity5:hard_sig}, $\Holant(h)$ is $\SHARPP$-hard,
  so $\PlHolant(\mathcal{F} \union \widehat{\EQ})$ is also $\SHARPP$-hard.
  
  Thus, in this case, we may assume either $f = [1,0,\dotsc,0]$ or $f = [1, 0, \dotsc, 0, x]$ with $x^4 = 1$.
  
  \item Suppose the binary prefix of $f$ is $[1,r,r^2]$, where $r \ne 0$.
  If $f$ is not of the form $[1,r,\dotsc,r^n]$,
  then after connecting some number of $[1,0]$, we have $[1,r,\dotsc,r^m,y]$, where $y \ne r^{m+1}$ and $m \ge 2$.
  Using $[1,0]$, we can get $[1,r]$.
  If $r = \pm i$, then we are done by Lemma~\ref{lem:pinning:1i}, so assume that $r \ne \pm i$.
  Then we can attach $[1,r]$ back to the initial signature some number of times to get $g = [1,r,r^2,x]$ after normalizing, where $x \ne r^3$.
  We connect $[1,r]$ once more to get $h = [1 + r^2, r (1 + r^2), r^2 + r x]$.
  If $h$ does not have one of the exceptional forms in Lemma~\ref{lem:pinning:binary}, then we are done, so assume that it does.
  
  Since the second entry of $h$ is not~0 and $x \ne r^3$, the only possibility is that $h$ has the form $[1,b,1]$ up to a scalar.
  This gives $x = r^{-1}$.
  Note that $r \ne \pm 1$ since $x \ne r^3$.
  A self-loop on $g = [1,r,r^2,r^{-1}]$ gives $[1 + r^2, r + r^{-1}]$, which is $[1,r^{-1}]$ after normalizing.
  Connecting this back to $g$ gives $h = [2, 2 r, r^2 + r^{-2}]$.
  We assume that $h$ has one of the exceptional forms in Lemma~\ref{lem:pinning:binary} since we are done otherwise.
  If $h$ has the form $[1,r,r^2]$ up to a scalar, then $r^4 = 1$, a contradiction, so it must have the form $[1,b,1]$ up to a scalar.
  But then $r^2 = 1$, which is also a contradiction.
  
  Thus, in this case, we may assume $f = [1,r,\dotsc,r^n]$.
  
  \item Suppose the binary prefix of $f$ is $[1,b,1]$.
  If $b = \pm 1$, then this binary prefix is degenerate and was considered in the previous case, so assume that $b \ne \pm 1$.
  If $f$ is not of the form $[1,b,1,b,\dotsc]$,
  then suppose that the index of the first entry in $f$ to break the pattern is even.
  Then after connecting some number of $[1,0]$, we have $[1,b,1,\dotsc,b,y]$, where $y \ne 1$.
  Then after some number of self-loops and normalizing, we have $g = [1,b,1,b,x]$, where $x \ne 1$.
  The determinant of its compressed signature matrix is $(b^2 - 1) (1 - x) \ne 0$.
  Thus, $\Holant(g)$ is $\SHARPP$-hard by Corollary~\ref{cor:arity4:nonsingular_compressed_hard},
  so $\PlHolant(\mathcal{F} \union \widehat{\EQ})$ is also $\SHARPP$-hard.
  
  Otherwise, the index of the first entry in $f$ to break the pattern is odd.
  Then after connecting some number of $[1,0]$, we have $[1,b,1,\dotsc,1,y]$, where $y \ne b$.
  Then after some number of self-loops and normalizing, we have $[1,b,1,x]$, where $x \ne b$.
  We do a self-loop to get $g = [2, b + x]$.
  If $b = 0$, then connecting $g$ to $[1,0,1,x]$ gives $h = [2, x, 2 + x^2]$.
  We assume that $h$ has one of the exceptional forms in Lemma~\ref{lem:pinning:binary} since we are done otherwise.
  Because $x \ne 0$, the only possibility is that $h$ has the form $[1,r,r^2]$ up to a scalar.
  Then we get $x^2 = -4$, so $g = [2,x] = 2 [1, \pm i]$ and we are done by Lemma~\ref{lem:pinning:1i}.
  We use the signature $g$ again below.
 
  Otherwise, $b \ne 0$.
  Using $[1,0]$, we can get $h = [1,b,1]$.
  If the signature matrix $M_h$ of $h$ has finite order modulo a scalar,
  then $M_h^\ell = \beta I_2$ for some positive integer $\ell$ and some nonzero complex value $\beta$.
  Thus after normalizing, we can construct the anti-gadget $[1,-b,1]$ by connecting $\ell - 1$ copies of $h$ together.
  Connecting $[1,0]$ to $[1,-b,1]$ gives $[1,-b]$ and connecting this to $[1,b,1,x]$ gives $[1 - b^2, 0, 1 - b x]$.
  If $\frac{1 - b x}{1 - b^2} \not\in \{0, \pm 1\}$, then we are done by Lemma~\ref{lem:pinning:binary_interpolation}.
  
  Otherwise, $y = \frac{1 - b x}{1 - b^2} \in \{0, \pm 1\}$.
  For $y = 0$, we get $x = b^{-1}$ and are done by Lemma~\ref{lem:pinning:1b1bInverse} since $b \not\in \{0, \pm 1\}$.
  For $y = 1$, we get $b = x$, a contradiction.
  For $y = -1$, we get $2 - b^2 - b x = 0$.
  Then connecting $[1,-b,1]$ to $g = [2, b + x]$ gives $[2 - b^2 - b x, x - b] = [0, x - b]$, which is $[0,1]$ after normalizing.
  
  Otherwise, $M_h$ has infinite order modulo a scalar.
  Then we can interpolate $[0,1]$ by Lemma~\ref{lem:unary_interpolation:EQ-hat} since $b \not\in \{0, \pm 1\}$.
  
  Thus, in this case, we may assume $f = [1,b,1,b,\dotsc]$.
 \end{enumerate}
 
 At this point, every signature in $\mathcal{F}$ (including the unary signatures) must be of one of the following forms:
 \begin{itemize}
  \item $[0, \dotsc, 0]$, which is in $\mathscr{A} \intersect \widehat{\mathscr{P}} \intersect \mathscr{M}$;
  \item $[1, 0, -1, 0, 1, 0, -1, 0, \dotsc, 0 \text{ or } 1 \text{ or } (-1)]$, which is in $\mathscr{A} \intersect \mathscr{M}$;
  \item $[1, 0, \dotsc, 0, x]$, where $x^4 = 1$, which is in $\mathscr{A}$;
  \item $[1, b, 1, b, \dotsc, 1 \text{ or } b]$, which is in $\widehat{\mathscr{P}}$.
 \end{itemize}
 In particular, every possible unary signature either fits into the first case or the last case.
 Therefore $\mathcal{F} \subseteq \mathscr{A} \union \widehat{\mathscr{P}} \union \mathscr{M}$ and we are done by Theorem~\ref{thm:mixing}.
\end{proof}

\section{Main Dichotomy} \label{sec:dichotomy}

In this section, we prove our main dichotomy theorem.
We begin with a dichotomy for a single signature.

\begin{theorem} \label{thm:dichotomy:single_sig}
 If $f$ is a non-degenerate symmetric signature of arity at least~2 with complex weights in Boolean variables,
 then $\PlHolant(\{f\} \union \widehat{\EQ})$ is $\SHARPP$-hard unless $f \in \mathscr{A} \union \widehat{\mathscr{P}} \union \mathscr{M}$,
 in which case the problem is in $\P$.
\end{theorem}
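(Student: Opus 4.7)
The tractability direction follows directly from Theorem~\ref{thm:PlCSP_tractable:Hadamard}: every nonzero signature in $\widehat{\EQ}$ lies in each of $\mathscr{A}$, $\widehat{\mathscr{P}}$, and $\mathscr{M}$ (the equalities $[1,0,1,0,\dotsc]$ satisfy the parity and geometric-progression conditions defining $\mathscr{M}$, fit into $\mathscr{F}_2$ with $r=0$ inside $\mathscr{A}$, and are the Hadamard transforms of the generalized equalities in $\mathscr{P}$), so if $f$ lies in one of these three tractable classes then $\{f\} \union \widehat{\EQ}$ does as well.

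For the hardness direction, the plan is to first invoke Theorem~\ref{thm:pinning} to reduce the problem to $\PlHolant^c(\{f\} \union \widehat{\EQ})$, granting free access to $[1,0]$ and $[0,1]$. Using these constants, attach $i$ copies of $[1,0]$ and $n-i-2$ copies of $[0,1]$ to $f = [f_0, f_1, \dotsc, f_n]$ to realize every binary window $g_i = [f_i, f_{i+1}, f_{i+2}]$ for $0 \le i \le n-2$. Since $\PlHolant(\{g_i\} \union \widehat{\EQ}) \equiv_T \plholant{g_i}{\widehat{\EQ}}$ by edge subdivision, Theorem~\ref{thm:Pl-Graph_Homomorphism} says a single such window already forces $\SHARPP$-hardness unless its entries satisfy one of four local patterns: degeneracy ($f_i f_{i+2} = f_{i+1}^2$), zero middle ($f_{i+1}=0$), the $\mathscr{A}$-pattern ($f_i f_{i+2} = -f_{i+1}^2$ and $f_i = -f_{i+2}$), or the $\widehat{\mathscr{P}}$-pattern ($f_i = f_{i+2}$). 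If any window violates all four, we are done immediately.

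Otherwise, every $g_i$ satisfies at least one of the four patterns, and the remaining work is a case analysis on the global shape of $f$. For small arities $n \in \{2,3,4\}$, appeal to Theorems~\ref{thm:Pl-Graph_Homomorphism}, \ref{thm:arity3:singleton}, and \ref{thm:arity4:singleton} respectively, combined with Theorem~\ref{thm:mixing} to upgrade merely $\mathscr{A}$-, $\mathscr{P}$-, or $\mathscr{M}$-transformable exceptions to either genuine membership in $\mathscr{A} \union \widehat{\mathscr{P}} \union \mathscr{M}$ or hardness. For $n \ge 5$, propagate the local conditions along the chain $g_0, g_1, \dotsc, g_{n-2}$: a uniform degeneracy pattern contradicts non-degeneracy of $f$; a uniform zero-middle pattern forces $f = [a,0,\dotsc,0,b]$ and Lemma~\ref{lem:domain_pairing:weak} then gives either $a^4 = b^4$ (placing $f \in \mathscr{A}$) or hardness; the uniform $f_i = f_{i+2}$ pattern forces $f = [a,b,a,b,\dotsc] \in \widehat{\mathscr{P}}$; geometric-progression-with-parity patterns land $f$ in $\mathscr{M}$; and mixed patterns across different windows either match one of these global shapes or expose an arity-4 sub-signature with nonsingular compressed matrix (Corollary~\ref{cor:arity4:nonsingular_compressed_hard:symmetric}) or the forbidden arity-5 pattern of Lemma~\ref{lem:arity5:hard_sig}.

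The main obstacle is precisely this mixed-pattern analysis, where consecutive windows satisfy different conditions from Theorem~\ref{thm:Pl-Graph_Homomorphism} and the implied constraints must be stitched together along the length of $f$. The bookkeeping closely parallels the binary-prefix case analysis carried out in the proof of Theorem~\ref{thm:pinning}; the difficulty lies in enumerating the handful of global shapes that survive all the local constraints simultaneously and verifying that each is either captured by $\mathscr{A} \union \widehat{\mathscr{P}} \union \mathscr{M}$ or yields $\SHARPP$-hardness through one of the hardness tools developed earlier in the paper.
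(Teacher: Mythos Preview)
Your binary-window idea and the appeal to Theorem~\ref{thm:Pl-Graph_Homomorphism} on each window are exactly what the paper does, and the overall architecture (pin first via Theorem~\ref{thm:pinning}, then harvest binary subsignatures with $[1,0]$ and $[0,1]$) is the same. The divergence is in how the residual case analysis is organized, and your small-arity shortcut has a genuine gap.

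\textbf{The small-arity shortcut does not work as stated.} Theorems~\ref{thm:arity3:singleton} and~\ref{thm:arity4:singleton} classify $\PlHolant(f)$, not $\PlHolant(\{f\}\cup\widehat{\EQ})$. Their tractable side is ``$\mathscr{A}$-/$\mathscr{P}$-/$\mathscr{M}$-transformable or vanishing'', which is strictly larger than $\mathscr{A}\cup\widehat{\mathscr{P}}\cup\mathscr{M}$. For instance, a non-degenerate vanishing arity-$3$ signature such as $[0,i,-2,-3i]$ makes $\PlHolant(f)$ trivially tractable yet lies in none of $\mathscr{A}$, $\widehat{\mathscr{P}}$, $\mathscr{M}$; you cannot then invoke Theorem~\ref{thm:mixing}, because that theorem \emph{assumes} $\mathcal{F}\subseteq\mathscr{A}\cup\widehat{\mathscr{P}}\cup\mathscr{M}$. (In fact the window $[0,i,-2]$ already gives hardness via Theorem~\ref{thm:Pl-Graph_Homomorphism}, so your own window analysis would catch it; the point is that the detour through the arity-$3$/$4$ Pl-Holant dichotomies does not.) Drop this shortcut and run the window analysis uniformly for all arities $\ge 2$.

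\textbf{The paper's organization of the residual analysis.} Rather than carrying the ``zero middle'' pattern alongside the other three in a single mixed-window chase, the paper splits on the global zero structure of $f$ up front: (1) $f$ satisfies the parity condition; (2) $f$ has a zero entry but violates parity; (3) $f$ has no zero entry. Case~(1) is dispatched by Lemma~\ref{lem:dichotomy:even_parity} (after a $\left[\begin{smallmatrix}0&1\\1&0\end{smallmatrix}\right]$ flip if the parity is odd). Case~(2) isolates a subsignature $[a,0,\dotsc,0,b]$ of odd arity and then either applies Lemma~\ref{lem:domain_pairing:weak}, or finds an adjacent nonzero giving a hard binary $[c,a,0]$ or $[0,b,d]$, or finds an adjacent zero giving the arity-$4$/arity-$5$ patterns of Corollary~\ref{cor:arity4:nonsingular_compressed_hard} and Lemma~\ref{lem:arity5:hard_sig}. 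Only in case~(3) does one run your three-pattern window chase (degenerate / $\mathscr{A}$ / $\widehat{\mathscr{P}}$), and there the ``two adjacent windows in different patterns'' argument reduces to two concrete ternary shapes, $[1,\varepsilon,-1,\varepsilon]$ and $[1,b,1,b^{-1}]$, handled by a direct gadget and by Lemma~\ref{lem:pinning:1b1bInverse} respectively. Your plan is not wrong, but folding the zero pattern into the window chase makes the bookkeeping substantially heavier; the paper's three-way split is what keeps it finite.
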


\begin{proof}
 When $f \in \mathscr{A} \union \widehat{\mathscr{P}} \union \mathscr{M}$, the problem is tractable by Theorem~\ref{thm:PlCSP_tractable:Hadamard}.
 When $f \not\in \mathscr{A} \union \widehat{\mathscr{P}} \union \mathscr{M}$,
 we prove that $\PlHolant^c(\{f\} \union \widehat{\EQ})$ is $\SHARPP$-hard,
 which is sufficient because of pinning (Theorem~\ref{thm:pinning}).
 Using $[1,0]$ and $[0,1]$, we can obtain any subsignature of $f$.
 
 Notice that once we have $[0,1]$ and $\widehat{\EQ}$, we can realize every signature in $T \widehat{\EQ}$,
 where $T = \left[\begin{smallmatrix} 0 & 1 \\ 1 & 0 \end{smallmatrix}\right]$.
 In fact, every even arity signature in $\widehat{\EQ}$ is also in $T \widehat{\EQ}$, and 
 we obtain all the odd arity signatures in $T \widehat{\EQ}$ by attaching $[0,1]$ to all the even arity signatures in $\widehat{\EQ}$.
 Therefore, a holographic transformation by $T$ does not change the complexity of the problem.
 Furthermore, $\mathscr{A} \union \widehat{\mathscr{P}} \union \mathscr{M}$ is closed under $T$.
 We use these facts later.
 
 The possibilities for $f$ can be divided into three cases:
 \begin{itemize}
  \item $f$ satisfies the parity condition;
  \item $f$ does not satisfy the parity condition but does contain a~$0$ entry;
  \item $f$ does not contain a~$0$ entry.
 \end{itemize}
 We handle these cases below.

 \begin{enumerate}
  \item Suppose that $f$ satisfies the parity condition.
  If $f$ has even parity, then we are done by Lemma~\ref{lem:dichotomy:even_parity}.
 
  Otherwise, $f$ has odd parity.
  If $f$ has odd arity, then under a holographic transformation by $T = \left[\begin{smallmatrix} 0 & 1 \\ 1 & 0 \end{smallmatrix}\right]$,
  $f$ is transformed to $\hat{f}$, which has even parity.
  Then either $\PlHolant^c(\{\hat{f}\} \union \widehat{\EQ})$ is $\SHARPP$-hard by Lemma~\ref{lem:dichotomy:even_parity}
  (and thus $\PlHolant^c(\{f\} \union \widehat{\EQ})$ is also $\SHARPP$-hard),
  or $\hat{f} \in \mathscr{A} \union \widehat{\mathscr{P}} \union \mathscr{M}$ (and thus $f \in \mathscr{A} \union \widehat{\mathscr{P}} \union \mathscr{M}$).
 
  Otherwise, the arity of $f$ is even.
  Connect $[0,1]$ to $f$ to get a signature $g$ with even parity and odd arity.
  Then either $\PlHolant^c(\{g\} \union \widehat{\EQ})$ is $\SHARPP$-hard by Lemma~\ref{lem:dichotomy:even_parity}
  (and thus $\PlHolant^c(\{f\} \union \widehat{\EQ})$ is also $\SHARPP$-hard),
  or $g \in \mathscr{A} \union \widehat{\mathscr{P}} \union \mathscr{M}$.
  In the latter case,
  it must be that $g \in \mathscr{M}$ since non-degenerate generalized equality signatures cannot have both even parity and odd arity.
  (See Figure~\ref{fig:venn_diagram} at the end of the Appendix,
  which contains a Venn diagram of the signatures in $\mathscr{A} \union \widehat{\mathscr{P}} \union \mathscr{M}$, up to constant factors.)
  In particular, the even parity entries of $g$ form a geometric progression.
  Therefore $f \in \mathscr{M}$ since $f$ has odd parity and the same geometric progression among its odd parity entries.
  
  \item Suppose that $f$ contains a~0 entry but does not satisfy the parity condition.
  Since $f$ does not satisfy the parity condition, there must be at least two nonzero entries separated by an even number of~$0$ entries.
  Thus, $f$ contains a subsignature $g = [a,0,\dotsc,0,b]$ of arity $n = 2 k + 1 \ge 1$, where $a b \ne 0$.
  
  If $k = 0$, then $n = 1$ and we can shift either to the right or to the left
  and find the~0 entry in $f$ and obtain a binary subsignature $h$ of the form $[c,d,0]$ or $[0,c,d]$,
  where $c d \ne 0$.
  Then $\plholant{h}{\widehat{\EQ}}$ is $\SHARPP$-hard by Theorem~\ref{thm:Pl-Graph_Homomorphism},
  so $\PlHolant(\{f\} \union \widehat{\EQ})$ is also $\SHARPP$-hard.
  
  Otherwise $k \ge 1$, so $n \ge 3$.
  If $a^4 \ne b^4$, then $\PlHolant(\{g\} \union \widehat{\EQ})$ is $\SHARPP$-hard by Lemma~\ref{lem:domain_pairing:weak},
  so $\PlHolant(\{f\} \union \widehat{\EQ})$ is also $\SHARPP$-hard.
  
  Otherwise, $a^4 = b^4$, so $g \in \mathscr{A}$.
  If $f = g$, then we are done, so assume that $f \ne g$, which implies that there is another entry just before $a$ or just after $b$.
  If this entry is nonzero, then $f$ has a subsignature $h$ of the form $[c,a,0]$ or $[0,b,d]$, where $c d \ne 0$.
  Then $\plholant{h}{\widehat{\EQ}}$ is $\SHARPP$-hard by Theorem~\ref{thm:Pl-Graph_Homomorphism},
  so $\PlHolant(\{f\} \union \widehat{\EQ})$ is also $\SHARPP$-hard.
  
  Otherwise, this entry is~$0$ and $f$ has a subsignature $h$ of the form $[0,a,0,\dotsc,0,b]$ or $[a,0,\dotsc,0,b,0]$ of arity at least~4.
  If the arity of $h$ is even, then after some number of self-loops,
  we have a signature $h'$ of the form $[0,a,0,0,b]$ or $[a,0,0,b,0]$ of arity exactly~4.
  Then $\PlHolant(h')$ is $\SHARPP$-hard by Corollary~\ref{cor:arity4:nonsingular_compressed_hard} since $a b \ne 0$,
  so $\PlHolant(\{f\} \union \widehat{\EQ})$ is also $\SHARPP$-hard.
  
  Otherwise, the arity of $h$ is odd.
  After some number of self-loops,
  we have a signature $h'$ of the form $[0,a,0,0,0,b]$ or $[a,0,0,0,b,0]$ of arity exactly~5.
  Then $\PlHolant(h')$ is $\SHARPP$-hard by Lemma~\ref{lem:arity5:hard_sig} since $a b \ne 0$,
  so $\PlHolant(\{f\} \union \widehat{\EQ})$ is also $\SHARPP$-hard.
  
  \item Suppose $f$ contains no~0 entry.
  If $f$ has a binary subsignature $g$ such that $\plholant{g}{\widehat{\EQ}}$ is $\SHARPP$-hard by Theorem~\ref{thm:Pl-Graph_Homomorphism},
  then $\PlHolant(\{f\} \union \widehat{\EQ})$ is also $\SHARPP$-hard.
  
  Otherwise every binary subsignature $[a,b,c]$ of $f$ satisfies the conditions of some tractable case in Theorem~\ref{thm:Pl-Graph_Homomorphism}.
  The three possible tractable cases are
  degenerate with condition $a c = b^2$ (case~\ref{case:Pl-GH:degenerate}),
  affine $\mathscr{A}$ with condition $a c = -b^2 \wedge a = -c$ (case~\ref{case:Pl-GH:A}),
  and a Hadamard-transformed product type $\widehat{\mathscr{P}}$ with condition $a = c$ (case~\ref{case:Pl-GH:P-hat}).
  If every binary subsignature $[a,b,c]$ of $f$ satisfies $a c = b^2$,
  then $f$ is degenerate, a contradiction.
  If every binary subsignature $[a,b,c]$ of $f$ satisfies $a c = -b^2 \wedge a = -c$,
  then $f = [1, \pm 1, -1, \mp 1, 1, \pm 1, -1, \mp 1, \dotsc] \in \mathscr{A}$ (up to a scalar) and we are done.
  If every binary subsignature $[a,b,c]$ of $f$ satisfies $a = c$,
  then $f \in \widehat{\mathscr{P}}$ and we are done.
  
  Otherwise, there exists two binary subsignatures of $f$ that do not satisfy the same tractable case in Theorem~\ref{thm:Pl-Graph_Homomorphism}.
  More specifically, $f$ has arity at least~3 and there exists a ternary subsignature $g = [a,b,c,d]$
  such that $h = [a,b,c]$ and $h' = [b,c,d]$ exclusively satisfy the conditions of different tractable cases in Theorem~\ref{thm:Pl-Graph_Homomorphism}.
  By symmetry in the statement of the tractable conditions in Theorem~\ref{thm:Pl-Graph_Homomorphism},
  under a holographic transformation by $\left[\begin{smallmatrix} 0 & 1 \\ 1 & 0 \end{smallmatrix}\right]$,
  we can we can switch the order of $h$ and $h'$.
  Suppose $f$ contains a binary subsignature that satisfies the condition of the affine case.
  Let $h$ be that subsignature.
  Then for either case of $h'$, we have $g = [1, \varepsilon, -1, \varepsilon]$ after normalizing, where $\varepsilon^2 = 1$.
  Connecting two copies of $[0,1]$ to $g$ gives $[-1, \varepsilon]$.
  Connecting this back to $g$ gives $g' = [0, -2 \varepsilon, 2]$.
  Then $\plholant{g'}{\widehat{\EQ}}$ is $\SHARPP$-hard by Theorem~\ref{thm:Pl-Graph_Homomorphism},
  so $\PlHolant(\{f\} \union \widehat{\EQ})$ is also $\SHARPP$-hard.
 
  Otherwise, we may assume that $h$ satisfies the product type condition (but not the degenerate condition) and $h'$ satisfies the degenerate condition.
  Then $g = [1, b, 1, b^{-1}]$ after normalizing, where $b^2 \ne 1$.
  Then $\plholant{g}{\widehat{\EQ}}$ is $\SHARPP$-hard by Lemma~\ref{lem:pinning:1b1bInverse},
  so $\PlHolant(\{f\} \union \widehat{\EQ})$ is also $\SHARPP$-hard.
  \qedhere
 \end{enumerate}
\end{proof}

Now we are ready to prove our main dichotomy theorem.

\begin{theorem} \label{thm:PlCSP:Hadamard}
 Let $\mathcal{F}$ be any set of symmetric, complex-valued signatures in Boolean variables.
 Then $\PlHolant(\mathcal{F} \union \widehat{\EQ})$ is $\SHARPP$-hard unless
 $\mathcal{F} \subseteq \mathscr{A}$, $\mathcal{F} \subseteq \widehat{\mathscr{P}}$, or $\mathcal{F} \subseteq \mathscr{M}$,
 in which case the problem is in $\P$.
\end{theorem}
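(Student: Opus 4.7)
The plan is to combine the three pillars already developed in the paper: the single-signature dichotomy (Theorem~\ref{thm:dichotomy:single_sig}), the mixing theorem (Theorem~\ref{thm:mixing}), and pinning in the Hadamard basis (Theorem~\ref{thm:pinning}). The tractability side is immediate from Theorem~\ref{thm:PlCSP_tractable:Hadamard}, so the real content is hardness when $\mathcal{F}$ fails to be contained in any single one of $\mathscr{A}$, $\widehat{\mathscr{P}}$, or $\mathscr{M}$.

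First I would split on whether $\mathcal{F}$ lives inside the union $\mathscr{A} \cup \widehat{\mathscr{P}} \cup \mathscr{M}$. In the case where some $f \in \mathcal{F}$ satisfies $f \notin \mathscr{A} \cup \widehat{\mathscr{P}} \cup \mathscr{M}$, I would first reduce to the non-degenerate case (replacing any degenerate $f$ of arity $\ge 2$ by its unary building block costs nothing since tensor powers in $\widehat{\EQ}$ are available via $[1,0]$ attachment, and a unary signature is automatically in $\widehat{\mathscr{P}}$). Then arity is at least~2, and $\PlHolant(\{f\} \cup \widehat{\EQ})$ is already $\SHARPP$-hard by Theorem~\ref{thm:dichotomy:single_sig}, whence $\PlHolant(\mathcal{F} \cup \widehat{\EQ})$ is $\SHARPP$-hard a fortiori.

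In the remaining case $\mathcal{F} \subseteq \mathscr{A} \cup \widehat{\mathscr{P}} \cup \mathscr{M}$, the mixing theorem (Theorem~\ref{thm:mixing}) applies directly: since $\mathcal{F}$ is by assumption not contained in any single one of the three tractable sets, it witnesses signatures from two different tractable sets, and Theorem~\ref{thm:mixing} gives $\SHARPP$-hardness of $\PlHolant(\mathcal{F} \cup \widehat{\EQ})$. Pinning (Theorem~\ref{thm:pinning}) is the bridge that lets the single-signature dichotomy and mixing theorem (both of which may have been proved in the $\PlHolant^c$ setting with free access to $[1,0]$ and $[0,1]$) be invoked for $\PlHolant(\mathcal{F} \cup \widehat{\EQ})$ without loss, since $[1,0] \in \widehat{\EQ}$ and $[0,1]$ can be realized or shown to collapse the dichotomy into the tractable regime by Theorem~\ref{thm:pinning}.

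There is essentially no obstacle at this level: all the heavy lifting has been front-loaded into the three named theorems. The only minor care needed is in handling degenerate signatures at the outset (so that Theorem~\ref{thm:dichotomy:single_sig}, stated for non-degenerate signatures of arity at least~2, can be applied), and in noting that every unary signature lies in $\widehat{\mathscr{P}}$ so that the degenerate reduction does not move $\mathcal{F}$ out of the putative tractable region. With these conventions the argument is a two-line case analysis: either a single bad signature already proves hardness, or the combination of two tractable families does.
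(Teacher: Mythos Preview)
Your proposal is correct and follows essentially the same approach as the paper: invoke Theorem~\ref{thm:PlCSP_tractable:Hadamard} for tractability, reduce degenerate signatures to their unary versions, then split on whether $\mathcal{F} \subseteq \mathscr{A} \cup \widehat{\mathscr{P}} \cup \mathscr{M}$, applying Theorem~\ref{thm:dichotomy:single_sig} in one case and Theorem~\ref{thm:mixing} in the other, with Theorem~\ref{thm:pinning} justifying work in the $\PlHolant^c$ setting. One small clarification: both Theorem~\ref{thm:dichotomy:single_sig} and Theorem~\ref{thm:mixing} are in fact already stated for $\PlHolant(\cdot \cup \widehat{\EQ})$ rather than $\PlHolant^c$, so the role of pinning here is mainly to license the degenerate-to-unary replacement (in particular for signatures of the form $[0,1]^{\otimes n}$, where attaching $[1,0]$ alone would not suffice).
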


\begin{proof}
 The tractability is given in Theorem~\ref{thm:PlCSP_tractable:Hadamard}.
 When $\mathcal{F}$ is not a subset of $\mathscr{A}$, $\widehat{\mathscr{P}}$, or $\mathscr{M}$,
 we prove that $\PlHolant^c(\mathcal{F} \union \widehat{\EQ})$ is $\SHARPP$-hard,
 which is sufficient because of pinning (Theorem~\ref{thm:pinning}).
 
 For any degenerate signature $f \in \mathcal{F}$, we connect some number of $[1,0]$ to $f$ to get its corresponding unary signature.
 We replace $f$ by this unary signature, which does not change the complexity.
 Thus, assume that the only degenerate signatures in $\mathcal{F}$ are unary signatures.
 
 If $\mathcal{F} \not\subseteq \mathscr{A} \union \widehat{\mathscr{P}} \union \mathscr{M}$,
 then the problem is $\SHARPP$-hard by Theorem~\ref{thm:dichotomy:single_sig}.
 Otherwise, $\mathcal{F} \subseteq \mathscr{A} \union \widehat{\mathscr{P}} \union \mathscr{M}$ and we are done by Theorem~\ref{thm:mixing}.
\end{proof}

We also have the corresponding theorem for the $\PlCSP$ framework in the standard basis,
which is equivalent to Theorem~\ref{thm:PlCSP:words}.

\begin{theorem} \label{thm:PlCSP}
 Let $\mathcal{F}$ be any set of symmetric, complex-valued signatures in Boolean variables.
 Then $\PlCSP(\mathcal{F})$ is $\SHARPP$-hard unless
 $\mathcal{F} \subseteq \mathscr{A}$, $\mathcal{F} \subseteq \mathscr{P}$, or $\mathcal{F} \subseteq \widehat{\mathscr{M}}$,
 in which case the problem is in $\P$.
\end{theorem}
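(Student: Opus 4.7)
The plan is to derive Theorem~\ref{thm:PlCSP} as an essentially immediate corollary of Theorem~\ref{thm:PlCSP:Hadamard} by means of a Hadamard holographic transformation. First I would invoke the standard equivalence $\PlCSP(\mathcal{F}) \equiv_T \PlHolant(\mathcal{F} \cup \EQ)$ recorded in Subsection~\ref{subsec:CSP-Tractable}. Then I would perform the holographic transformation defined by the Hadamard matrix $H = \left[\begin{smallmatrix} 1 & 1 \\ 1 & -1 \end{smallmatrix}\right]$. Since $H\transpose{H} = 2 I_2$, Theorem~\ref{thm:orthogonal} (together with the remark immediately after it that extends the statement to any $T$ with $T\transpose{T} = \lambda I$ for $\lambda \ne 0$) applies, so the transformation preserves the complexity of the Holant problem. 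A direct computation gives $(=_2)(H^{-1})^{\otimes 2} = \frac{1}{2}(=_2)$, which can be absorbed as a global nonzero scalar, yielding
\[
\PlHolant(\mathcal{F} \cup \EQ) \;\equiv_T\; \PlHolant(\widehat{\mathcal{F}} \cup \widehat{\EQ}).
\]

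Next I would apply Theorem~\ref{thm:PlCSP:Hadamard} to the set $\widehat{\mathcal{F}}$, which gives a complete dichotomy: $\PlHolant(\widehat{\mathcal{F}} \cup \widehat{\EQ})$ is $\SHARPP$-hard unless $\widehat{\mathcal{F}}$ is contained in one of $\mathscr{A}$, $\widehat{\mathscr{P}}$, or $\mathscr{M}$. It remains to translate these three containments back into conditions on the original set $\mathcal{F}$. For the affine class, the identity $\widehat{\mathscr{A}} = \mathscr{A}$ noted in Subsection~\ref{subsec:CSP-Tractable} (readily verified from the explicit listing of $\mathscr{F}_1 \cup \mathscr{F}_2 \cup \mathscr{F}_3$) gives $\widehat{\mathcal{F}} \subseteq \mathscr{A} \iff \mathcal{F} \subseteq \mathscr{A}$. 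For the product-type class, by the definition $\widehat{\mathscr{P}} = H\mathscr{P}$, we have $\widehat{\mathcal{F}} \subseteq \widehat{\mathscr{P}} \iff \mathcal{F} \subseteq \mathscr{P}$. For the matchgate class, using $H^{-1} = \frac{1}{2}H$ and the fact that matchgate realizability is preserved under nonzero scalar multiples, we have $\widehat{\mathcal{F}} \subseteq \mathscr{M} \iff \mathcal{F} \subseteq H^{-1}\mathscr{M} = \widehat{\mathscr{M}}$. Tractability in each of the three surviving cases transfers from Theorem~\ref{thm:PlCSP_tractable:Hadamard} through the same holographic transformation.

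I do not expect any serious obstacle; once Theorem~\ref{thm:PlCSP:Hadamard} is in hand, the argument is a single change of basis plus a short verification of how the three maximal tractable classes transform under $H$. The only minor points requiring care are the two global nonzero scalars introduced by $H\transpose{H} = 2I_2$ and $H^{-1} = \frac{1}{2}H$, both of which are absorbed without affecting complexity as stressed repeatedly throughout the paper.
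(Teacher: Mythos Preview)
Your proposal is correct and matches the paper's approach: the paper presents Theorem~\ref{thm:PlCSP} as the standard-basis counterpart of Theorem~\ref{thm:PlCSP:Hadamard} without writing out a separate proof, precisely because it follows by the Hadamard holographic transformation you describe. Your translation of the three tractable classes under $H$ (using $\widehat{\mathscr{A}}=\mathscr{A}$, $\widehat{\mathscr{P}}=H\mathscr{P}$, and $H^{-1}\mathscr{M}=\widehat{\mathscr{M}}$ up to scalars) is exactly the intended argument.
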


\section*{Acknowledgements}

Both authors were supported by NSF CCF-0914969 and NSF CCF-1217549.
A preliminary version of this paper appeared in~\cite{GW13}.
We are very grateful to Jin-Yi Cai for his support, guidance, and discussions.
We also thank him for his careful reading and insightful comments on a draft of this work as well as for the proof of Lemma~\ref{lem:2nd_condition_equivalence}.
We thank Peter B\"{u}rgisser, Leslie Ann Goldberg, Mark Jerrum, and Pascal Koiran
for the invitation to present this work at the Dagstuhl seminar on computational counting
as well as all those at the seminar for their interest.
We also thank the anonymous referees for their helpful comments.

\bibliographystyle{plain}
\bibliography{bib}

\appendix

\pagebreak

\section{Venn Diagram of the Tractable Signatures} \label{sec:venn_diagram}

This section contains a Venn diagram of the tractable Pl-\#CSP signature sets in the Hadamard basis.
Each signature may also take an arbitrary constant multiple from $\mathbb{C}$.
This figure is particularly useful in Section~\ref{sec:mixing}, where we consider the complexity of multiple signatures from different tractable sets.
The definition of each tractable signature set is given in Section~\ref{sec:preliminaries}.

For a signature $f$, the notation ``$f_{\ge k}$'' is short for ``$\arity(f) \ge k$''.
Notice that $\mathscr{M}\cap\widehat{\mathscr{P}}-\mathscr{A}$ is empty.

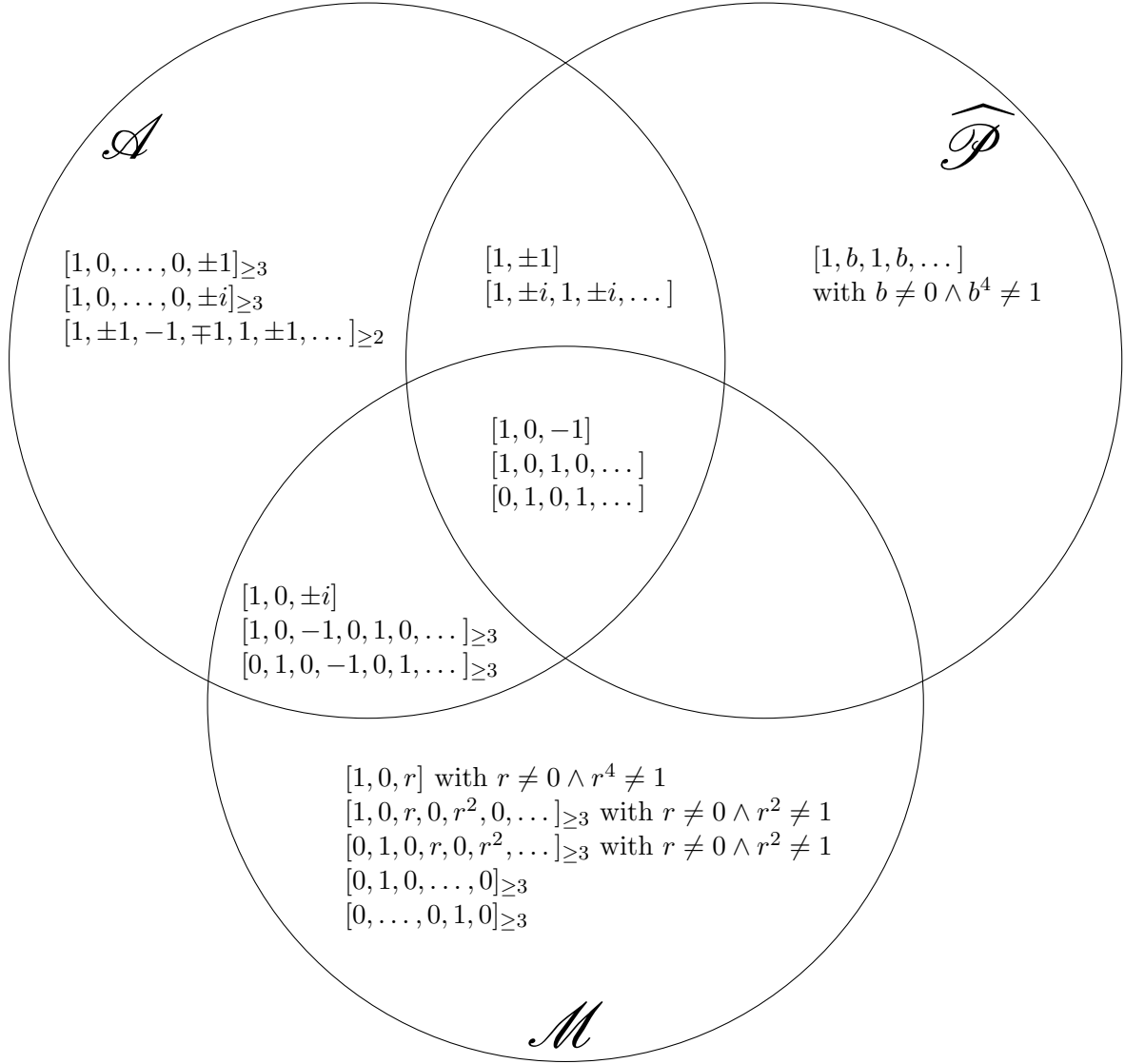
\begin{figure}[ht]
 \def\radius{5cm}
 \def\nodeDistVD{3.2cm}
 \begin{tikzpicture}
  \draw (150:\nodeDistVD) circle (\radius) node {};
  \draw  (30:\nodeDistVD) circle (\radius) node {};
  \draw (-90:\nodeDistVD) circle (\radius) node {};
  
  \draw node at (142:2.4 * \nodeDistVD) {\Huge $\mathscr{A}$};
  \draw node at  (40:2.3 * \nodeDistVD) {\Huge $\widehat{\mathscr{P}}$};
  \draw node at (-90:2.4 * \nodeDistVD) {\Huge $\mathscr{M}$};
  
  \draw node[text width=3.2cm] at (150:0.15 * \nodeDistVD)
   {\begin{itemize}[label=]
     \item $[1,0,-1]$
     \item $[1,0,1,0,\dotsc]$
     \item $[0,1,0,1,\dotsc]$
    \end{itemize}};
  \draw node[text width=3.7cm] at (95:0.9 * \nodeDistVD)
   {\begin{itemize}[label=]
     \item $[1, \pm 1]$
     \item $[1, \pm i, 1, \pm i,\dotsc]$
    \end{itemize}};
  \draw node[text width=6cm] at (220:1.02 * \nodeDistVD)
   {\begin{itemize}[label=]
     \item $[1, 0, \pm i]$
     \item $[1, 0, -1, 0, 1, 0, \dotsc]_{\ge 3}$
     \item $[0, 1, 0, -1, 0, 1, \dotsc]_{\ge 3}$
    \end{itemize}};
  \draw node[text width=7.1cm] at (150:1.6 * \nodeDistVD)
   {\begin{itemize}[label=]
     \item $[1, 0, \dotsc, 0, \pm 1]_{\ge 3}$
     \item $[1, 0, \dotsc, 0, \pm i]_{\ge 3}$
     \item $[1, \pm 1, -1, \mp 1, 1, \pm 1, \dotsc]_{\ge 2}$
    \end{itemize}};
  \draw node[text width=5cm] at (30:1.8 * \nodeDistVD)
   {\begin{itemize}[label=]
     \item $[1, b, 1, b, \dotsc]$
     \item with $b \ne 0 \wedge b^4 \ne 1$
    \end{itemize}};
  \draw node[text width=8.1cm] at (-90:1.6 * \nodeDistVD)
   {\begin{itemize}[label=]
     \item $[1, 0, r]$ with $r \ne 0 \wedge r^4 \ne 1$
     \item $[1,0,r,0,r^2,0,\dotsc]_{\ge 3}$ with $r \ne 0 \wedge r^2 \ne 1$
     \item $[0,1,0,r,0,r^2,\dotsc]_{\ge 3}$ with $r \ne 0 \wedge r^2 \ne 1$
     \item $[0,1,0,\dotsc,0]_{\ge 3}$
     \item $[0,\dotsc,0,1,0]_{\ge 3}$
    \end{itemize}};
 \end{tikzpicture}
 \caption{Venn diagram of the tractable Pl-\#CSP signature sets in the Hadamard basis.
  Each signature has been normalized for simplicity of presentation.
  For a signature $f$, the notation ``$f_{\ge k}$'' is short for ``$\arity(f) \ge k$''.
 }
 \label{fig:venn_diagram}
\end{figure}

\end{document}